\renewcommand{\leq}{\leqslant}
\renewcommand{\geq}{\geqslant}
\newcommand{\N}{\mathbb{N}}
\newcommand{\Z}{\mathbb{Z}}
\newcommand{\R}{\mathbb{R}}
\newcommand{\C}{\mathbb{C}}
\DeclareMathOperator{\trace}{Tr}
\DeclareMathOperator{\rk}{rk}
\renewcommand{\phi}{\varphi}
\newtheorem{theorem}{Theorem}[section]
\newtheorem{definition}[theorem]{Definition}
\newtheorem{proposition}[theorem]{Proposition}
\newtheorem{remark}[theorem]{Remark}
\newtheorem{lemma}[theorem]{Lemma}
\newtheorem{corollary}[theorem]{Corollary}
\def\be{\begin{eqnarray}}
\def\ee{\end{eqnarray}}
\begin{document}

\title{On the convergence of output sets of quantum channels}

\author{Beno\^{\i}t Collins}
\address{
D\'epartement de Math\'ematique et Statistique, Universit\'e d'Ottawa,
585 King Edward, Ottawa, ON, K1N6N5 Canada,
WPI Advanced Institute for Materials Research Tohoku University, Mathematics Unit 
2-1-1 Katahira, Aoba-ku, Sendai, 980-8577 Japan
and 
CNRS, Institut Camille Jordan Universit\'e  Lyon 1,
France} 
\email{bcollins@uottawa.ca}
\author{Motohisa Fukuda}
\address{
Zentrum Mathematik, M5, Technische Universit\"at M\"unchen, Boltzmannstrasse 3, 85748 Garching, Germany}
\email{m.fukuda@tum.de}
\author{Ion Nechita}
\address{
CNRS, Laboratoire de Physique Th\'eorique, IRSAMC, Universit\'e de Toulouse, UPS, 31062 Toulouse, France}
\email{nechita@irsamc.ups-tlse.fr}
\subjclass[2000]{Primary 15A52; Secondary 94A17, 94A40} 
\keywords{Random matrices, Quantum information theory, Random quantum channel}

\begin{abstract}
We study the asymptotic behavior of the output states of sequences of quantum channels.
Under a natural assumption, we show that the output set converges to a compact convex set, clarifying and substantially generalizing
results in \cite{bcn}.
Random mixed unitary channels satisfy the assumption; we give a formula for the asymptotic maximum output infinity norm and 
we show that  the minimum output entropy and the Holevo capacity have a simple relation for the complementary channels.
We also give non-trivial examples of sequences  $\Phi_n$ such that 
along with any other quantum channel  $\Xi$, we have convergence of the output set of $\Phi_n$ and 
$\Phi_n\otimes \Xi$ simultaneously; the case when $\Xi$ is entanglement breaking is investigated in details.

\end{abstract}

\maketitle

\tableofcontents
 
\section{Introduction}
Quantum channels are of central importance in Quantum Information Theory, and in the meantime, 
many mathematical quantities that are 
associated to quantum channels are still not very well understood. This is the case, for example, for 
the maximum output infinity norm, the minimum output entropy and Holevo capacity. 
These quantities as well as other important quantities turn out to actually  depend only on the image of 
the collection all possible output states. 
Incidentally, the output set - a compact convex subset in the set of all state - turns out to be an interesting
geometric object that has nice interpretations in the theory of entanglement, statistics, 
free probability and others.
However, identifying the output set for a given channel turns out to be a difficult task.
So, instead, we analyze sequences of quantum channels which have nice asymptotic properties. 
Recently, research on random quantum channels in terms of eigenvalues has lead to important advances 
in the understanding of quantum channels, in particular in relation to the problem of additivity of the minimum output
entropy, see for example \cite{cn2}, \cite{bcn} and \cite{bcn13}.

In this paper we elaborate an axiomatic and systematic study of properties of sequences of quantum channels
that ensure the convergence of the output set towards a limit.
It turns out that the sufficient conditions that we unveil are not of random nature, although all examples available
so far rely on random constructions. 
The main results of this paper are Theorems \ref{thm:LA1} and  \ref{thm:LA2}. The idea underlying these
theorems was already available in \cite{bcn}, but  we considerably simplify and conceptualize the argument,
and we remove all probabilistic considerations from our main argument. We start with examples of deterministic quantum channels (or projections) which fit our axiomatic framework.
Then, we treat random quantum channels and random mixed unitary channels as examples of this axiomatic approach. 
To do so, we rely on recent results of \cite{cm} and \cite{HT}
where the strong asymptotic freeness of Haar unitary matrices and constant matrices or 
extension of strong convergence to polynomials with matrix coefficients is proved. 

Our paper is organised as follows.
Section \ref{sec:image}
contains definitions and reminders about quantum channel and quantities associated to them.
Then, our main result is stated and proved in
Section  \ref{sec:linear-alg}. 
Then, in Section \ref{sec:asymptotic}
  our main result is applied to the convergence of entropies. 
Section \ref{sec:free-prob} introduces some results from random matrix theory and free probability. A subclass of entanglement-breaking channels is investigated in Section \ref{sec:non-random},
then we discuss
in Section \ref{sec:examples}  examples of our main result: random Stinespring channels 
(subsection \ref{sec:examples-Stinespring}) and random mixed unitary channels (subsection \ref{sec:examples-RMUC}). 
Finally, in Section \ref{sec:tensor}, we discuss tensor products of channels, and especially entanglement-breaking channels are discussed in details.

\section{Quantum channels and their image}
\label{sec:image}
\subsection{Notation}
Following the quantum information theoretic notation, we call \emph{quantum states} semidefinite positive matrices of unit trace 
\begin{equation}
D_k = \{ A \in M_k \, : \, A \geq 0 \text{ and } \mathrm{Tr} A = 1\},
\end{equation}
where we write $M_k = M_k(\mathbb C)$.
A \emph{quantum channel} is a completely positive and 
 trace preserving linear map $\Phi :M_N\to M_k$.
Following the Stinespring's picture \cite{Stinespring1955}, we view any quantum channel $\Phi$ as an isometric embedding 
of $\mathbb C^N$ into $\mathbb C^k \otimes \mathbb C^n$, to which we apply a partial trace.
\be
V : \mathbb C^N \rightarrow \mathbb C^k \otimes \mathbb C^n
\ee
That is, 
\be\label{eq:channel-Stinespring}
\Phi =({\rm id}_k\otimes Tr_n) \circ E
\ee
where $E (\cdot) = V \cdot V^*$ is a non-unital embedding of $M_N$ in $M_k\otimes M_n$.
We define the complementary channel of $\Phi$ by
\be
 \tilde \Phi = ( Tr_k \otimes{\rm id}_n)\circ E
\ee
Note that for a pure input, its outputs via $\Phi$ and $\tilde \Phi$ share the same non-zero eigenvalues,
but it is not the case in general for a mixed input.  We also define the adjoint channel $\Phi^*$, which is the adjoint of $\Phi$ with respect to the Hilbert-Schmidt scalar product in $M_k$:
\begin{equation}\label{def:adjoint}
\mathrm{Tr}[\Phi(X)^* Y] = \mathrm{Tr}[X^* \Phi^*(Y)].
\end{equation}
If $\Phi$ is defined via a Stinespring dilation using an isometry $V$ as in \eqref{eq:channel-Stinespring}, then 
\begin{equation}
\Phi^*(Y) = V^* (Y \otimes I_n) V.
\end{equation}

In Quantum Information language, $\mathbb C^n$ is called the environment.
In this paper, we are interested in a sequence of quantum channels, that we will index by the 
environment $n$,  $\Phi_n:M_N\to M_k$.
From now on, our setting is as follows:
$k,n,N \in \mathbb N$ are such that $k$ is fixed  and $N\in \mathbb N$ is any function of $n\in \mathbb N$.
Importantly, a quantum channel is defined,
up to a  unitary conjugation on the input,
 by $P_n = V_nV_n^*$, 
which is  the unit of $M_N$ embedded in $M_k\otimes M_n$.

Let $D_N$ be the collection of states in $M_N$, and $D^p_N \subset D_N$ be collection of extremal (pure) states,
i.e. rank-one (self-adjoint) projections. 
We are interested in $L_n=\Phi_n(D^p_N)$, 
which is the image of all the pure states under the quantum channel $\Phi_n$. 
One can see that $L_n$ is a compact subset of $D_k$, but
not always convex, although so is $K_n=\Phi_n(D_N)$. 

The task to classify all the possible sets $K_n$ and $L_n$ arising from this construction seems
to be out of reach. Instead, we focus our attention on possible asymptotic behaviours of 
$K_n$ and $L_n$ as $n \rightarrow \infty$.

In Section \ref{sec:linear-alg}, we identify some assumption
with which $K_n$ and $L_n$ converge to some well-described compact convex set as $n \rightarrow \infty$.  
Then, we present examples of sequences of random projections $\{P_n\}_{n \in \mathbb N}$ 
which satisfy this assumption with probability one.

\subsection{Entropies and Capacities} 
We introduce three quantities associated with quantum channels.
 
Firstly, the maximum output infinity norm of channel $\Phi$ is defined as 
\be
\| \Phi\|_{1,\infty} = \max_{\rho \in D_N} \|\Phi(\rho)\|_{\infty}
\ee
where $1$ and $\infty$ represent norms used for the input and output spaces respectively. 

Secondly, the minimal output entropy (MOE) of channel $\Phi$ is defined as
\be
S^{\min} (\Phi) = \min_{\rho\in D_N} S(\Phi(\rho)) 
\ee 
Here, $S(\cdot)$ is the von Neumann entropy. 

Thirdly, 
the Holevo capacity (HC) of channel $\Phi$ is defined as 
\be
\chi(\Phi) = \max_{\{p_i,\rho_i\}} S\left(\Phi\left(\hat \rho \right)\right) - \sum_i p_iS\left(\Phi( \rho_i)\right) 
\ee
Here, $\{p_i\}_i$ is a probability distribution, $\{\rho_i\}_i \subset D_N$ and $\hat \rho = \sum_i p_i \rho_i$.
Note that 
\be
S^{\min} (\tilde \Phi) = S^{\min} (\Phi) \label{MOE1}
\ee
but 
\be
\chi (\tilde \Phi) \not = \chi (\Phi) \label{HC1}
\ee
in general. 
 
It is a rather direct observation that $S_{\min}$ and $\chi$ depend only on the output of the channel.
Therefore, for any convex set $M \subset D_k$, 
it is natural to define
\begin{align}
S^{\min}(M) &= \min_{X \in M} S(X) \label{MOE2}\\
\chi(M) &= \max_{\{p_i, X_i\}} S\left(\sum_i p_i X_i\right) - \sum_i p_iS\left(X_i\right).\label{HC2}
\end{align}
where $X_i \in M$. 

For those quantities  
one can think of additivity questions:  
\be
\chi(\Phi \otimes \Omega) \overset ? = \chi(\Phi) + \chi(\Omega) \\
S^{\min} (\Phi \otimes \Omega)\overset ? = S^{\min} (\Phi ) + S^{\min} (\Omega)
\ee 
for two quantum channels. 
These equalities are not true in general; 
additivity of MOE was disproved by Hastings \cite{hastings} and
this non-additivity can be translated to be the one for HC \cite{shor}.  
In terms of information theory, the additivity of HC is important.
Suppose in particular that 
\be
\chi\left(\Phi^{\otimes r}\right) = r \chi(\Phi)
\ee 
for some channel $\Phi$, then the classical capacity over this channel $\Phi$
has a one-shot formula:
\be
\lim_{r \to \infty} \frac 1 r \chi\left(\Phi^{\otimes r}\right)  = \chi(\Phi)
\ee 
Additivity of MOE itself is also interesting because it measures purity of channels,
but caught more attention when Shor proved the equivalence between the two additivity questions \cite{shor}. 
Moreover, a breakthrough was made by disproving additivity of MOE \cite{hastings}
with a use of random matrix theory as MOE concerns eigenvalues of matrices
whereas HC depends on the geometry of output states. 
By contrast, our paper sheds light on not only MOE but also HC
because we consider geometry of output states (at least, of single channels).

\section{Main result - linear algebra and convex analysis}\label{sec:linear-alg}  
\subsection{Preliminary} 
For a sequence of sets $\{S\}_{n\in \N}$ we use the following standard notations
of lim-inf and lim-sup:
\be
\varliminf_{n \rightarrow \infty} S_n = \bigcup_{N\in\N} \bigcap_{n \geq N} S_n 
\qquad \text{and} \qquad
\varlimsup_{n \rightarrow \infty} S_n = \bigcap_{N\in\N} \bigcup_{n \geq N} S_n 
\ee
If $S$ is a subset of a topological space,
we denote the \emph{interior} of $S$ by $S^\circ$  and the \emph{closure} $S^{cl}$. 
Suppose we have a \emph{convex} set $K$ in a real vector space;
 any line segment joining two points of $K$ is included in $K$.
Then, we have the following two definitions:  
\begin{enumerate}[(a)]
\item A point $x\in K$ is an  \emph{extreme point} if 
$x$ does not lie in any open line segment joining two points of $K$. 
\item A point $x \in K$ is an \emph{exposed point} if there exists a  supporting hyperplane
which intersects with $K$ only at one point.  
\end{enumerate}

We also use the following notation:
\be
{\rm hull} (\{x_i\}_{i=1}^m) = 
\left\{ \sum_{i=1}^m \lambda_i x_i :  \quad \lambda_i \geq 0 \quad \text{and}\quad  \sum_{i=1}^m \lambda_i =1  \right\}
\ee
which is called the \emph{convex hull} of $\{x_i\}_{i=1}^m$. 

\begin{theorem}[Steinitz \cite{Steinitz1913}]\label{Steinitz}
Suppose we have a convex and compact set $K \subset \mathbb R^d$. Then, for any interior point of $K$ 
we can choose at most $2d$ extreme points of $K$ whose convex hull includes the point 
within the interior. 
\end{theorem}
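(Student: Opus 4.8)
The plan is to reduce the statement to a purely combinatorial fact about \emph{positive spanning sets} and then invoke the finite-dimensional Krein--Milman theorem. First I would translate coordinates so that the prescribed interior point becomes the origin; thus $K\subset\R^d$ is convex and compact with $0\in K^\circ$. By Minkowski's theorem (the finite-dimensional version of Krein--Milman), a compact convex set in $\R^d$ equals the convex hull of its set of extreme points $E=\mathrm{ext}(K)$, so $0\in\mathrm{conv}(E)$ and in fact $0$ lies in the \emph{interior} of $\mathrm{conv}(E)$. The key reformulation I would record is the elementary equivalence
\begin{equation}
0\in\big(\mathrm{conv}(Y)\big)^\circ \quad\Longleftrightarrow\quad \forall\, u\in\R^d\setminus\{0\},\ \exists\, y\in Y\ \text{with}\ \scalar{y}{u}>0,
\end{equation}
i.e.\ $0$ is interior to $\mathrm{conv}(Y)$ exactly when $Y$ \emph{positively spans} $\R^d$; the forward direction is a separating-hyperplane argument and the converse is immediate.

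Second, I would pass from $E$ to a finite, then minimal, positively spanning subset. Since $0\in(\mathrm{conv}(E))^\circ$ there is a ball $\e B\subseteq\mathrm{conv}(E)$; covering the unit sphere by the open half-spaces $\{u:\scalar{y}{u}>0\}$, $y\in E$, and using compactness of the sphere, I would extract finitely many points $E_0\subseteq E$ that already positively span $\R^d$. Among all positively spanning subsets of $E_0$ I then choose one, $Y=\{y_1,\dots,y_m\}$, that is minimal for inclusion. By construction every $y_i\in E$ is an extreme point of $K$ and $0\in(\mathrm{conv}(Y))^\circ$, so it remains only to prove the cardinality bound $m\le 2d$.

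Third --- and this is where the real work lies --- I would establish that a \emph{minimal} positive spanning set of $\R^d$ has at most $2d$ elements (Davis's theorem). Minimality means that for each $j$ the set $\{y_i:i\ne j\}$ fails to positively span, which by the reformulation yields a witness $u_j$ with $\scalar{y_i}{u_j}\le 0$ for all $i\ne j$ while $\scalar{y_j}{u_j}>0$. The plan is to prove the bound by induction on $d$: the case $d=1$ forces exactly two elements (one positive, one negative scalar), and for the inductive step I would fix one vector, project the remaining ones onto the orthogonal hyperplane $H\cong\R^{d-1}$, and check that these projections positively span $H$. I expect the main obstacle here: minimality is \emph{not} preserved under projection, so transferring the inductive hypothesis back to a bound on $m$ requires carefully partitioning $Y$ according to the sign of the component along the chosen direction and controlling how many vectors are needed to recover the directions lost in the projection; this is precisely the delicate combinatorial core of Steinitz's and Davis's arguments. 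Once $m\le 2d$ is secured, undoing the translation exhibits at most $2d$ extreme points of $K$ whose convex hull contains the given point in its interior, completing the proof.
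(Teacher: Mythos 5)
The paper does not prove this statement: it is quoted as a classical result with a bare citation to Steinitz, so there is no in-house argument to compare yours against. Your reduction is the standard modern route and is correct as far as it goes: translating the point to the origin, using Minkowski's theorem to get $0\in(\mathrm{hull}(E))^{\circ}$ for $E$ the set of extreme points, recasting membership of $0$ in the interior of a hull as positive spanning, extracting a finite positively spanning $E_0\subseteq E$ by compactness of the sphere, and passing to an inclusion-minimal positively spanning $Y\subseteq E_0$ (for which $0=\sum_i\nu_i y_i$ with all $\nu_i>0$ and $Y$ linearly spanning, hence $0\in(\mathrm{hull}(Y))^{\circ}$) are all sound steps. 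A minor slip: the two directions of your displayed equivalence are labelled backwards --- it is the passage from positive spanning to interiority that uses a supporting/separating hyperplane (for infinite $Y$), while the direction you call ``forward'' is the immediate one.

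The genuine gap is that the entire quantitative content of the theorem, namely the bound $\# Y\leq 2d$ for an inclusion-minimal positively spanning set, is exactly the part you do not prove. You correctly identify it as Davis's theorem and correctly locate the obstruction in your proposed induction: after fixing one vector and projecting the rest onto its orthogonal hyperplane, the projections do positively span the hyperplane, but they need not form a minimal such family, while minimality of $Y$ in $\R^d$ prevents you from discarding the vectors made redundant by the projection; a naive count then yields $2(d-1)$ plus an uncontrolled remainder. You flag this obstacle but do not resolve it, and it is precisely where all the work of Steinitz's theorem sits. One standard resolution avoids the induction altogether: normalize so that $\sum_i y_i=0$, take for each $i$ a witness $u_i$ with $\scalar{y_i}{u_i}>0$ and $\scalar{y_j}{u_i}\leq 0$ for $j\neq i$ (these exist precisely by minimality), and extract the bound $m\leq 2d$ from the $m\times m$ matrix $(\scalar{y_j}{u_i})_{ij}$, which has rank at most $d$, positive diagonal, nonpositive off-diagonal entries and zero row sums. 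As written, your proposal is a correct reduction of Steinitz's theorem to a true but unproved combinatorial lemma rather than a complete proof; it stands only if citing Davis (1954) for that lemma is deemed acceptable.
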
 

\begin{theorem}[Straszewicz \cite{Straszewicz1935}] \label{Straszewicz} 
For any closed convex set $K$,
the set of exposed points is dense in the set of extreme points.   
\end{theorem}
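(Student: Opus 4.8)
The plan is to deduce the statement from two observations: first, that every \emph{farthest point} of $K$ is automatically exposed, and second, that farthest points can be found arbitrarily close to any prescribed extreme point. Throughout I treat the case where $K\subset\R^d$ is compact, which is the only case needed for the output sets $K_n$ and $L_n$ in this paper; the general closed case follows by intersecting $K$ with a large ball and localising near the extreme point under consideration.

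Call $w^*\in K$ a farthest point if $w^*$ maximises $w\mapsto |w-z|$ over $K$ for some $z\in\R^d$, and write $R=|w^*-z|=\max_{w\in K}|w-z|$. The first step is the Lemma: any such $w^*$ (with $R>0$) is an exposed point. Indeed $K\subset \overline{B}(z,R)$, so expanding $|w-z|^2\le R^2=|w^*-z|^2$ for $w\in K$ gives $\scalar{w-w^*}{w^*-z}\le -\tfrac12|w-w^*|^2\le 0$. Hence the hyperplane $H=\{p:\scalar{p-w^*}{w^*-z}=0\}$, tangent to the sphere at $w^*$, supports $K$; moreover equality $\scalar{w-w^*}{w^*-z}=0$ forces $w=w^*$, so $H\cap K=\{w^*\}$ and $w^*$ is exposed. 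Note that no uniqueness of the farthest point is needed here, which is what keeps the argument clean.

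For the density step, fix an extreme point $x_0$ of $K$ and $\e>0$; I may assume $K\not\subset B(x_0,\e)$, for otherwise every point of $K$ lies within $\e$ of $x_0$ and any farthest point already does the job. The set $K'=\mathrm{conv}\big(K\setminus B(x_0,\e)\big)$ is compact and convex, and extremality of $x_0$ forces $x_0\notin K'$: a nontrivial convex representation of $x_0$ by points $y_i\in K$ with $|y_i-x_0|\ge\e$ would contradict $x_0$ being extreme. By strict separation there is a unit vector $u$ with $\alpha:=\scalar{u}{x_0}>\beta:=\max_{y\in K'}\scalar{u}{y}$; put $\delta=\alpha-\beta>0$. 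Now place the centre far ``behind'' $x_0$ by setting $z=x_0-s\,u$ with $s>0$. For any $w\in K\setminus B(x_0,\e)\subset K'$ one has $\scalar{u}{w-x_0}\le-\delta$, whence $|w-z|^2\le |w-x_0|^2-2s\delta+s^2\le M^2-2s\delta+s^2$, where $M=\mathrm{diam}\,K$, while $|x_0-z|^2=s^2$. Thus $|x_0-z|^2-|w-z|^2\ge 2s\delta-M^2>0$ once $s>M^2/(2\delta)$.

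Fixing such an $s$, every point of $K$ outside $B(x_0,\e)$ is then strictly closer to $z$ than $x_0$ is, so any maximiser $w^*$ of $w\mapsto|w-z|$ over $K$ (which exists by compactness) must lie in $\overline{B}(x_0,\e)$, i.e.\ $|w^*-x_0|\le\e$. By the Lemma $w^*$ is exposed, and it is within $\e$ of $x_0$; since $\e$ was arbitrary this shows that the exposed points are dense in the extreme points. The step I expect to require the most care is the separation input $x_0\notin K'$ together with the quantitative gap $\delta$: it is precisely here that extremality of $x_0$ is converted into a direction and a margin allowing the far-away centre $z$ to pull the farthest point into the prescribed neighbourhood, and it is also where the reduction from a general closed $K$ to a compact one must be justified so that farthest points are guaranteed to exist.
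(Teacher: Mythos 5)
The paper does not prove this statement at all --- it is quoted as a classical result with a citation to Straszewicz's 1935 article --- so there is no internal proof to compare against; what you have written is the standard textbook proof of Straszewicz's theorem via farthest points (as in Rockafellar or Schneider), and it is correct in the compact case. Both halves check out: the expansion $|w-z|^2\le|w^*-z|^2$ does give $\langle w-w^*, w^*-z\rangle\le-\tfrac12|w-w^*|^2$, so the tangent hyperplane at a farthest point meets $K$ only there; and the separation of the extreme point $x_0$ from $K'=\mathrm{conv}(K\setminus B(x_0,\e))$ (legitimate, since $x_0$ extreme is equivalent to $x_0\notin\mathrm{conv}(K\setminus\{x_0\})$, and $K'$ is compact by Carath\'eodory) yields the margin $\delta$ that lets the far-away centre $z=x_0-su$ force the farthest point into $\overline{B}(x_0,\e)$ once $s>M^2/(2\delta)$. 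The one soft spot is the opening remark that the general closed (possibly unbounded) case ``follows by intersecting $K$ with a large ball'': as written this does not quite work, since $\mathrm{diam}\,K=\infty$ breaks your estimate and, more importantly, an exposed point of $K\cap B$ need not be an exposed point of $K$ (the supporting hyperplane could meet $K$ outside $B$). This is harmless here because the set $K$ of \eqref{def:K} and the output sets $K_n$, $L_n$ are all compact, but if you want the theorem in the stated generality you should either restrict the statement to compact $K$ or carry out the unbounded case properly.
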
 

We define conditions which ensure the limiting convex set of output states. 
\begin{definition}\label{def:condition}
A sequence of projections $P_n \subset  M_{kn}$ is said to satisfy 
the condition $\mathcal C_m$ if 
for all $A \in D_k$,
the following $m$ infinite sequences in $n \in \N$:
\be\label{as2}
\lambda_{1} (P_n (A \otimes I_n)P_n), \ldots, \lambda_{m} (P_n (A \otimes I_n)P_n)  
\ee 
converge to a common limit, which we denote $f(A)$. 
Here, $\lambda_{i}(\cdot)$ is the $i$-th largest eigenvalue. 
Note that $\mathcal C_m \implies \mathcal C_l$ for $l < m$.  
\end{definition}  

Let $(\Phi_n)$ be a sequence of quantum channels associated with the sequence of projections $(P_n)$, that is $\Phi_n$ is 
defined by \eqref{eq:channel-Stinespring} and $P_n = V_nV_n^*$. One can then define the function $f$ in terms of the adjoint 
channels $\Phi_n^*$ defined in \eqref{def:adjoint}:
\begin{proposition}\label{prop:ad}
Let $\Phi:M_N \to M_k$ be a quantum channel defined by an isometry $V:\mathbb C^N \to \mathbb C^k \otimes \mathbb C^n$, and 
put $P = V V^*$. Then, for any $A \in D_k$, the non-zero eigenvalues of the matrices $P(A \otimes I_n)P$ and $\Phi^*(A)$ are identical. 
In particular, a sequence of projections $(P_n)$ satisfies condition $\mathcal C_m$ if and only if, for all $A \in D_k$, the $m$ largest 
eigenvalues of $\Phi_n^* (A)$ converge to $f(A)$.
\end{proposition}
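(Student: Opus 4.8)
The plan is to reduce both assertions to the elementary fact that for any two matrices $X,Y$ of compatible sizes the products $XY$ and $YX$ have the same non-zero eigenvalues, with the same multiplicities. First I would record the factorisation
\begin{equation*}
P(A \otimes I_n)P = VV^*(A \otimes I_n)VV^* = V\left(V^*(A \otimes I_n)V\right)V^* = V\,\Phi^*(A)\,V^*,
\end{equation*}
which uses only $P = VV^*$ and the definition $\Phi^*(A) = V^*(A \otimes I_n)V$ from \eqref{def:adjoint}. Setting $X = V$ and $Y = \Phi^*(A)V^*$ gives $XY = P(A \otimes I_n)P$, while $YX = \Phi^*(A)V^*V = \Phi^*(A)$ because $V$ is an isometry and hence $V^*V = I_N$. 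The $XY$/$YX$ principle then yields at once that $P(A \otimes I_n)P$ and $\Phi^*(A)$ share the same non-zero spectrum, which is the first claim.

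For the ``in particular'' statement I would pass from equality of non-zero eigenvalues to equality of the $m$ largest eigenvalues. The key observation is that both matrices are positive semidefinite: since $A \in D_k$ we have $A \otimes I_n \geq 0$, and conjugating a positive operator by $P$ (respectively by $V^*$) preserves positivity. Writing the common list of non-zero eigenvalues as $\mu_1 \geq \cdots \geq \mu_r > 0$, the full ordered eigenvalue sequence of a PSD matrix is obtained simply by appending zeros. Hence for every index $i$ not exceeding the smaller of the two dimensions $N$ and $kn$, one has $\lambda_i(P(A \otimes I_n)P) = \lambda_i(\Phi^*(A))$, both being equal to $\mu_i$ when $i \leq r$ and to $0$ otherwise.

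Applying this to the sequence $(P_n)$, and noting that in our asymptotic regime both $N$ and $kn$ exceed $m$ for all large $n$, the $m$ leading eigenvalues of $P_n(A \otimes I_n)P_n$ coincide exactly with those of $\Phi_n^*(A)$ once $n$ is large. Consequently the $m$ sequences in \eqref{as2} converge to a common limit $f(A)$ if and only if the $m$ largest eigenvalues of $\Phi_n^*(A)$ converge to $f(A)$, which is precisely the asserted equivalence in the definition of condition $\mathcal C_m$.

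I do not expect a genuine obstacle here; the one point that requires care rather than ingenuity is the size mismatch between the $kn \times kn$ matrix $P_n(A \otimes I_n)P_n$ and the $N \times N$ matrix $\Phi_n^*(A)$. This is why the positive-semidefiniteness and the bookkeeping of the trailing zero eigenvalues must be made explicit, rather than merely invoking ``the same non-zero eigenvalues'': without the PSD property one could not conclude that the zeros land in the same tail positions, and without the dimension bound $m \leq \min(N, kn)$ one could not align the two finite eigenvalue lists index by index.
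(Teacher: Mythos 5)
Your proof is correct and follows essentially the same route as the paper: the identity $P(A\otimes I_n)P = V\Phi^*(A)V^*$ together with $V^*V = I_N$ is exactly the paper's argument. The extra bookkeeping you supply (positive semidefiniteness, trailing zeros, and the bound $m \leq \min(N,kn)$) merely makes explicit what the paper leaves implicit in passing from ``same non-zero eigenvalues'' to the statement about the $m$ largest ones.
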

\begin{proof}
We have
\begin{equation}
P(A \otimes I_n)P = VV^*(A \otimes I_n)VV^* = V \Phi^*(A) V^*.
\end{equation}
The conclusion follows from the fact that $V$ is an isometry, hence the matrices $V\Phi^*(A)V^*$ and $\Phi^*(A)$ have 
the same non-zero eigenvalues.
\end{proof}

Les us start by recording an obvious upper bound:
\begin{lemma}\label{lem:upperbound}
If a sequence of projections $P_n$ satisfies the condition $\mathcal C_1$,
then for any sequence $x_n \in \mathbb C^k \otimes \mathbb C^n$ such that $x_nx_n^* \leq P_n$ 
we have
\be
\varlimsup_{n \rightarrow \infty} \trace [X_n A] \leq f(A) \qquad \forall A \in D_k 
\ee
where $X_n = \trace_{\mathbb C^n} [x_n x_n^*]$.  
 \end{lemma}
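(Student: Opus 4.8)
The plan is to reduce the stated inequality to the variational (Rayleigh quotient) characterization of the largest eigenvalue of the compressed operator $P_n (A \otimes I_n) P_n$, which is exactly the quantity that condition $\mathcal C_1$ forces to converge to $f(A)$. Everything hinges on rewriting the scalar $\trace[X_n A]$ as a quadratic form in $x_n$ against this operator.

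First I would unpack the hypothesis $x_n x_n^* \leq P_n$. Since $x_n x_n^*$ has rank one and $P_n$ is an orthogonal projection, testing the inequality against any vector orthogonal to the range of $P_n$ forces $x_n$ to lie in that range, i.e. $P_n x_n = x_n$; testing it against $x_n$ itself gives $\|x_n\|^4 \leq \|x_n\|^2$, hence $\|x_n\| \leq 1$. These are the only two consequences of the hypothesis that I will use.

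Next I would transport the left-hand side to the space $\mathbb C^k \otimes \mathbb C^n$ by the defining adjunction of the partial trace, $\trace_{\mathbb C^k}[(\trace_{\mathbb C^n} M) A] = \trace[M (A \otimes I_n)]$. Taking $M = x_n x_n^*$ gives $\trace[X_n A] = \trace[x_n x_n^* (A \otimes I_n)] = \langle x_n, (A \otimes I_n) x_n \rangle$, the last step because $x_n x_n^* = \ketbra{x_n}{x_n}$ is rank one. Inserting the identity $x_n = P_n x_n$ on both sides of the inner product then rewrites this as $\langle x_n, P_n (A \otimes I_n) P_n\, x_n \rangle$.

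Finally, as $A \in D_k$ is positive, the self-adjoint operator $P_n (A \otimes I_n) P_n$ is positive semidefinite, so the Rayleigh bound yields $\langle x_n, P_n (A \otimes I_n) P_n\, x_n \rangle \leq \lambda_1(P_n (A \otimes I_n) P_n)\,\|x_n\|^2 \leq \lambda_1(P_n (A \otimes I_n) P_n)$, the second inequality using $\|x_n\| \leq 1$. Passing to the $\varlimsup$ and invoking $\mathcal C_1$, which ensures $\lambda_1(P_n (A \otimes I_n) P_n) \to f(A)$, gives the claim. The argument is essentially mechanical; the one place that warrants care is cleanly deducing $P_n x_n = x_n$ and $\|x_n\| \leq 1$ from the operator inequality, as it is precisely these facts that let me fold the quadratic form back onto the compressed operator whose top eigenvalue the hypothesis controls.
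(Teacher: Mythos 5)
Your proof is correct and follows essentially the same route as the paper: rewrite $\trace[X_nA]$ as the quadratic form $\langle x_n,(A\otimes I_n)x_n\rangle$, bound it by the top eigenvalue of $P_n(A\otimes I_n)P_n$, and invoke $\mathcal C_1$. The only difference is that you carefully spell out why $x_nx_n^*\leq P_n$ forces $P_nx_n=x_n$ and $\|x_n\|\leq 1$, details the paper's one-line proof absorbs into the variational expression $\max_{vv^*\leq P_n}\langle v,(A\otimes I_n)v\rangle=\|P_n(A\otimes I_n)P_n\|_\infty$.
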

\begin{proof} For all $n \in \mathbb N$, 
\[
\trace \left[X_n A \right] = \trace\left[x_{n} x_{n}^* (A\otimes I_{n})|\right] 
\leq \max_{vv^*\leq  P_{n}} \langle v, (A\otimes I_{n}) v \rangle = \|P_{n} (A\otimes I_{n})P_{n} \|_\infty   \to f(A)
\]
\end{proof} 

Next, we prove existence of sequence of optimal vectors: 
\begin{lemma}\label{lem:condition-subspace} 
If a sequence of projections $P_n \subset  M_{kn}$  satisfies 
the condition $\mathcal C_m$, then for any $A \in D_k$ there exists a sequence of $m$-dimensional subspaces 
$W_n \subset \mathrm{range}\, P_n$ 
for large enough $n \in \mathbb N$ such that  
any sequence of unit vectors $x_n \in W_n$ satisfies 
\be \label{eq:condition-subspace}
\trace [X_n A ] \to f(A)  \quad \text{as } n \to \infty
\ee 
Here, $X_n = \trace_{\C^n} \left[x_nx_n^*\right]$.
\end{lemma}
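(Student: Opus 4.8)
The plan is to realize the desired subspace $W_n$ as the top eigenspace of the compressed operator $T_n := P_n(A\otimes I_n)P_n$, and then to trap the quadratic form $\langle x_n,(A\otimes I_n)x_n\rangle$ between the $m$-th largest and the largest eigenvalue of $T_n$, both of which converge to $f(A)$ by hypothesis. First I would rewrite the quantity to be estimated: since $x_n$ is a unit vector in $\mathrm{range}\, P_n$ we have $P_n x_n = x_n$, so by the defining property of the partial trace,
\[
\trace[X_n A] = \trace\big[x_n x_n^*(A\otimes I_n)\big] = \langle x_n, (A\otimes I_n)x_n\rangle = \langle x_n, T_n\, x_n\rangle .
\]
Thus the problem reduces to exhibiting a subspace on which the Rayleigh quotient of the self-adjoint operator $T_n$ is forced towards $f(A)$.

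Next I would fix the choice of $W_n$. For $n$ large enough that $\rk P_n \geq m$ (which $\mathcal C_m$ guarantees whenever $f(A)>0$, since then $\lambda_m(T_n)\to f(A)>0$ forces at least $m$ nonzero eigenvalues, and which may be taken as a standing hypothesis in the degenerate case $f(A)=0$), let $e_1^{(n)},\dots,e_m^{(n)}$ be orthonormal eigenvectors lying in $\mathrm{range}\, P_n$ associated with the $m$ largest eigenvalues $\lambda_1(T_n)\geq\dots\geq\lambda_m(T_n)$ of $T_n$, and set $W_n = \SPAN(e_1^{(n)},\dots,e_m^{(n)})$. This is an $m$-dimensional subspace of $\mathrm{range}\, P_n$. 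Writing an arbitrary unit vector $x_n\in W_n$ as $x_n=\sum_{i=1}^m c_i e_i^{(n)}$ with $\sum_i|c_i|^2=1$, orthonormality of the $e_i^{(n)}$ gives
\[
\langle x_n, T_n x_n\rangle = \sum_{i=1}^m |c_i|^2\, \lambda_i(T_n),
\qquad\text{whence}\qquad
\lambda_m(T_n)\leq \langle x_n, T_n x_n\rangle \leq \lambda_1(T_n).
\]
The crucial feature is that these two bounds do not involve the coefficients $c_i$, so they hold uniformly over all unit vectors of $W_n$.

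Finally I would conclude by a squeeze. Condition $\mathcal C_m$ asserts that both $\lambda_1(T_n)$ and $\lambda_m(T_n)$ converge to the common limit $f(A)$, so the two-sided bound forces $\trace[X_n A]=\langle x_n,(A\otimes I_n)x_n\rangle\to f(A)$, uniformly in the choice of $x_n\in W_n$. I expect no genuine obstacle beyond bookkeeping: the single point deserving care is that $W_n$ is truly $m$-dimensional, i.e. that $\rk P_n\geq m$ eventually, which is precisely what makes the convergence of $m$ separate eigenvalue sequences in $\mathcal C_m$ meaningful; the uniformity over $W_n$ then comes for free from the variational sandwich, requiring no separate estimate. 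This lemma is the natural complement to Lemma~\ref{lem:upperbound}, upgrading the one-sided $\varlimsup$ bound there into a genuine limit attained along an $m$-dimensional family of optimal vectors.
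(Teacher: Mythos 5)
Your proof is correct and follows essentially the same route as the paper: take $W_n$ to be the span of eigenvectors of $P_n(A\otimes I_n)P_n$ attached to its $m$ largest eigenvalues, and squeeze the Rayleigh quotient $\langle x_n,(A\otimes I_n)x_n\rangle=\trace[X_nA]$ between $\lambda_m$ and $\lambda_1$, both converging to $f(A)$ by condition $\mathcal C_m$. Your extra remark on ensuring $\rk P_n\geq m$ for large $n$ is a detail the paper leaves implicit but does not change the argument.
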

\begin{proof}
Let $v_i$ be  eigenvectors of $\lambda_i$ in \eqref{as2}, and define 
$W_n  = \mathrm{span} \{ v_i : 1 \leq i \leq m\}$.
Then, for any unit vector $x_n \in W_n $ we have 
\[
\lambda_m \leq  x_{n}^* (A\otimes I_{n}) x^*= \trace [X_n A]  \leq \lambda_1
\] 
where the both bounds converges to $f(A)$. 
\end{proof}

If $\mathcal C_m$ holds with large enough $m$ with respect to $k$, 
we can have a sequence $x_i$ in Lemma \ref{lem:condition-subspace} with orthogonal property in $\mathbb C^n$,
which is useful in proving Theorem \ref{thm:LA2}:
\begin{lemma}\label{lem:orthogonal}
Given two subspaces $W \subset \mathbb C^k \otimes \mathbb C^n$ and $T \subset \mathbb C^n$ such that $\dim W > k \dim T$, 
there exists $x \in W$ having the following Schmidt decomposition:
\begin{equation}
x = \sum_{i=1}^r \sqrt{\lambda_i} e_i \otimes f_i,
\end{equation}
Here, $\{e_i\}$ and $\{f_i\}$ are orthonormal in $\mathbb C^k$ and $\mathbb C^n$ respectively and 
moreover $f_i \perp T$, for all $i=1, \ldots, r$.
\end{lemma} 
\begin{proof}
Define $\tilde T = \mathbb C^k \otimes T$. Since $\dim \tilde T = k \dim T$, there exists a unit vector $ x \in W$ such that $x \perp \tilde T$. 
Consider the Schmidt decomposition of $x$, as in the statement, with $\lambda_i >0$, for all $i$. For any $f \in T$, we have
\begin{equation}
\langle f,  f_i  \rangle = \lambda_i^{-1/2} \langle e_i \otimes f , x  \rangle = 0,
\end{equation}
since $x \perp e_i \otimes f \in \tilde T$.
\end{proof}  

If the condition $\mathcal C_1$ is satisfied we define  a compact convex set:
\be\label{def:K}
K=\left\{B\in D_k: \trace [BA] \leq f(A) \quad\forall A\in D_k\right\}
\ee
In the following sections, we prove that both of images of mixed input states and pure input states converge to this convex set $K$. 
Especially, the latter statement is interesting because the set of pure input states itself is not a convex set. 
 
\subsection{Limiting image for mixed input states} 
Our first result is as follows: 
\begin{theorem}\label{thm:LA1}
If a sequence of projections $P_n \subset  M_{kn}$ satisfies the condition $\mathcal C_1$, then   
\be
K^\circ \subseteq \varliminf_{n \rightarrow \infty} K_n
\subseteq  \varlimsup_{n \rightarrow \infty} K_n \subseteq K
\ee
Here, as before, $K_n$ is the image of all the mixed states by the quantum channels defined by $P_n$. 
\end{theorem}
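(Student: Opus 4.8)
The plan is to establish the three inclusions separately, the first two being soft and the last one carrying all the weight.

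For the rightmost inclusion $\varlimsup_{n} K_n \subseteq K$, I would upgrade Lemma \ref{lem:upperbound} from pure outputs to arbitrary ones. If $B \in \varlimsup_{n} K_n$, then $B = \Phi_{n_j}(\rho_{n_j})$ for a subsequence $n_j \to \infty$ and states $\rho_{n_j} \in D_N$. Writing $\trace[BA] = \trace[\rho_{n_j}\Phi_{n_j}^*(A)]$ and bounding the right-hand side by the top eigenvalue $\lambda_1(\Phi_{n_j}^*(A)) = \lambda_1(P_{n_j}(A \otimes I_{n_j})P_{n_j})$ via Proposition \ref{prop:ad}, condition $\mathcal C_1$ yields $\trace[BA] \leq f(A)$ for every $A \in D_k$; since also $B \in D_k$, we conclude $B \in K$. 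The middle inclusion $\varliminf_{n} K_n \subseteq \varlimsup_{n} K_n$ is immediate from the definitions.

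The heart of the theorem is $K^\circ \subseteq \varliminf_{n} K_n$, i.e.\ every interior point of $K$ lies in $K_n$ for all large $n$. Identify the ambient real space of Hermitian trace-one matrices with $\mathbb R^d$, $d = k^2 - 1$, and fix $B \in K^\circ$. By Steinitz's theorem (Theorem \ref{Steinitz}) there are extreme points $B_1, \ldots, B_{2d}$ of $K$ with $B$ in the interior of their convex hull; by Straszewicz's theorem (Theorem \ref{Straszewicz}) each $B_i$ can be replaced by an exposed point $B_i'$ arbitrarily close to it, and since being an interior point of the hull of finitely many points is an open condition, $B$ remains in the interior of $\mathrm{hull}(B_1', \ldots, B_{2d}')$. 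The crucial step is then to realize each exposed point $B_i'$ as a limit of pure outputs. After the affine normalization that turns its exposing functional into some $A_i \in D_k$, the point $B_i'$ is the unique maximizer of $X \mapsto \trace[X A_i]$ over $K$. Applying Lemma \ref{lem:condition-subspace} with $m=1$ to $A_i$, the top eigenvector $v_n$ of $P_n(A_i \otimes I_n)P_n$ has the form $v_n = V_n \psi_n$, so that $X_n^{(i)} := \trace_{\mathbb C^n}[v_n v_n^*] = \Phi_n(\psi_n\psi_n^*) \in L_n$ satisfies $\trace[X_n^{(i)} A_i] \to f(A_i)$. By the rightmost inclusion already proved, any limit point $X_*$ of $(X_n^{(i)})$ lies in $K$, and the chain $\trace[X_* A_i] = f(A_i) \geq \trace[B_i' A_i] = \sup_{X \in K}\trace[X A_i] \geq \trace[X_* A_i]$ forces $X_*$ to be a maximizer, whence $X_* = B_i'$ by uniqueness; thus $X_n^{(i)} \to B_i'$.

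Finally I would assemble these approximations: since $X_n^{(i)} \to B_i'$ and $K_n$ is convex with $X_n^{(i)} \in L_n \subseteq K_n$, the convex hulls $\mathrm{hull}(X_n^{(1)}, \ldots, X_n^{(2d)}) \subseteq K_n$ converge to $\mathrm{hull}(B_1', \ldots, B_{2d}')$. As $B$ lies in the interior of this limit, a continuity argument on barycentric coordinates (selecting $d+1$ affinely independent points from the collection by Carath\'eodory) shows $B \in \mathrm{hull}(X_n^{(1)}, \ldots, X_n^{(2d)}) \subseteq K_n$ for all large $n$, i.e.\ $B \in \varliminf_{n} K_n$. I expect the main obstacle to be the realization step: controlling that the optimizing vectors converge to the prescribed exposed point, which is precisely where exposedness (via Straszewicz) and the already-established upper bound must be combined.
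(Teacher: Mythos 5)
Your proposal is correct and follows essentially the same route as the paper: Lemma \ref{lem:upperbound} for the outer inclusion, and Steinitz plus Straszewicz to reduce the interior inclusion to approximating finitely many exposed points by outputs obtained from top eigenvectors via Lemma \ref{lem:condition-subspace}, with exposedness forcing the limit to be the prescribed point. Your closing chain of inequalities is in fact slightly more careful than the paper's, since it justifies (rather than asserts) that the exposing functional $A_i$ can be normalized so that $\max_{X\in K}\trace[XA_i]=f(A_i)$.
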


\begin{proof}
Firstly, we show that $ \varlimsup_{n \rightarrow \infty} K_n  \subseteq K$
by showing $ \varlimsup_{n \rightarrow \infty} L_n  \subseteq K$ 
because $K_n = {\rm hull} (L_n)$. 
Fix $X \in \varlimsup_{n \rightarrow \infty} L_n $ and there is a 
subsequence $\{n_j\}_j$ such that $X \in L_{n_j}$. 
Since $X$ is an output of the channel $P_{n_j}$, there exists the unit vector $x_{n_j} $ lives in the support of $P_{n_j}$ such 
that $\trace_{\mathbb C^{n_j}} [x_{n_j} x_{n_j}^*] = X$. 
By Lemma \ref{lem:upperbound}, we have $\trace \left[X A \right] \leq f(A)$ proving the result.

Secondly, we prove that $K^\circ \subseteq \varliminf_{n \rightarrow \infty}  K_n$.  
Take $X \in K^\circ$. 
Since $K$ is a compact and convex set embedded into $\mathbb R^{k^2-1}$, 
writing $r=2k^2-2$,
by Theorem \ref{Steinitz}, there exist $r$ extreme points of $K$, say, $(E_1,\ldots,E_{r})$ such that 
\be 
X \in \left({\rm hull} \{E_1,\ldots,E_{r}\}\right)^\circ
\ee 
Also, by Theorem \ref{Straszewicz}, there exists $r$-tuple of exposed points of $K$, say, 
$(F_{1},\ldots,F_r )$ such that 
\be \label{eq:hull-F}
X \in \left({\rm hull} \{F_1,\ldots,F_{r}\}\right)^\circ
\ee  
Note that since each $F_{i}$ is an exposed point of $K$,
there exists $A_{i}$ such that 
\be
\trace \left[ F_{i} A_{i}\right] &=& f\left( A_{i}\right)\notag\\
\trace \left[Y A_{i}\right] &<& f\left( A_{i}\right) \qquad \forall Y \in K \setminus \left\{F_{i}\right\} 
\label{eq:exposed}
\ee

On the other hand, by Lemma \ref{lem:condition-subspace},
for each $A_{i}$, there exists a sequence $X_{i}^{(n)} \in K_n$  such that  
\be\label{eq:lim-A}
\trace \left[ X_{i}^{(n)} A_{i} \right] \rightarrow f\left(A_{i}\right)
\qquad \text{as}\quad n\rightarrow \infty
\ee
We claim that $X_{i}^{(n)} \rightarrow F_{i}$ as $n \rightarrow \infty$.
Take a converging subsequence $X_i^{(n_j)} \to G$.
Then, the first statement: $\varlimsup_{n \rightarrow \infty} K_n \subseteq K$ implies $G \in K$ because $K$ is closed. 
Moreover, \eqref{eq:lim-A} implies that $\trace [G A_i] = f(A_i)$. 
Hence, the equation \eqref{eq:exposed} implies the above claim.

Therefore,  for large enough $n$, we have
\be
X \in {\rm hull} \{X_i^{(n)} :1\leq  i \leq r \} \subset K_{n} 
\ee
Here, the first inclusion follows from \eqref{eq:hull-F} and the second holds because $K_n$ is convex.  
Therefore, $K^\circ \subseteq \varliminf_{n \rightarrow \infty} K_n$.
\end{proof}

\subsection{Limiting image for pure input states} 
The second theorem we prove is about the image $L_n$ of the set of pure states. 

\begin{theorem}\label{thm:LA2}
If a sequence of projections $P_n \subset  M_{kn}$ satisfies the condition $\mathcal C_{m}$, with $m=(2k^2-3)k^2 + 1$, then   
\be
K^\circ \subseteq \varliminf_{n \rightarrow \infty} L_n
\subseteq  \varlimsup_{n \rightarrow \infty} L_n \subseteq K.
\ee
Here, as before, $L_n$ is the image of all the pure states by the quantum channels defined by $P_n$. 
\end{theorem}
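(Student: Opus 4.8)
The plan is to establish the three inclusions separately, reusing much of the structure from the proof of Theorem \ref{thm:LA1}. The upper inclusion $\varlimsup_{n \to \infty} L_n \subseteq K$ is actually immediate and identical to the argument already given: any $X \in \varlimsup L_n$ arises as $X = \trace_{\C^{n_j}}[x_{n_j} x_{n_j}^*]$ for unit vectors $x_{n_j}$ in the range of $P_{n_j}$, so Lemma \ref{lem:upperbound} (which only needs $\mathcal C_1$) yields $\trace[XA] \leq f(A)$ for all $A \in D_k$, hence $X \in K$. The middle inclusion $\varliminf L_n \subseteq \varlimsup L_n$ is trivial from the definitions. So the entire content is the lower inclusion $K^\circ \subseteq \varliminf_{n \to \infty} L_n$, and the whole point of strengthening the hypothesis from $\mathcal C_1$ to $\mathcal C_m$ is to make this work for \emph{pure} outputs.

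The difficulty, and the reason $L_n$ is harder than $K_n$, is that $L_n$ is not convex: in the proof of Theorem \ref{thm:LA1} we wrote $X$ as a convex combination $X \in \mathrm{hull}\{X_i^{(n)}\} \subseteq K_n$ and invoked convexity of $K_n$ to conclude. For $L_n$ we cannot take convex hulls for free. The idea is that a convex combination $\sum_i p_i X_i^{(n)}$ of output states $X_i^{(n)} = \trace_{\C^n}[x_i^{(n)}(x_i^{(n)})^*]$ is itself a \emph{single} pure output provided the underlying vectors $x_i^{(n)}$ can be chosen with mutually orthogonal environment parts; indeed if $f_i^{(n)} \perp f_j^{(n)}$ in $\C^n$ for the Schmidt components, then the vector $y_n = \sum_i \sqrt{p_i}\, x_i^{(n)}$ is a unit vector in $\mathrm{range}\, P_n$ whose partial trace is exactly $\sum_i p_i X_i^{(n)}$, so $\sum_i p_i X_i^{(n)} \in L_n$. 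Thus the plan is: reproduce the Steinitz--Straszewicz selection of $r = 2k^2-2$ exposed points $F_1,\dots,F_r$ with $X \in (\mathrm{hull}\{F_i\})^\circ$, obtain the convergence $X_i^{(n)} \to F_i$ exactly as before, but now arrange that the approximating vectors have orthogonal environments so that the approximating convex combination lands inside $L_n$ rather than merely $K_n$.

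The main obstacle is precisely this orthogonalization, and it is what dictates the value $m = (2k^2-3)k^2 + 1$. Here is where Lemma \ref{lem:orthogonal} enters. I would construct the vectors $x_i^{(n)}$ one at a time using Lemma \ref{lem:condition-subspace}: for each $A_i$ that lemma gives an $m$-dimensional subspace $W_n^{(i)} \subseteq \mathrm{range}\, P_n$ on which \emph{every} unit vector has partial trace converging to the exposed point value $f(A_i)$, forcing $X_i^{(n)} \to F_i$. The freedom of an $m$-dimensional space of acceptable vectors is what lets me impose orthogonality of environments. Suppose $x_1^{(n)}, \dots, x_{i-1}^{(n)}$ have already been chosen with Schmidt components spanning a subspace $T_{i-1} \subseteq \C^n$ of dimension at most $(i-1)k \leq (r-1)k = (2k^2-3)k$. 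Since $\dim W_n^{(i)} = m = (2k^2-3)k^2 + 1 > k \cdot (2k^2-3)k \geq k \dim T_{i-1}$, Lemma \ref{lem:orthogonal} applied with $W = W_n^{(i)}$ and $T = T_{i-1}$ produces a unit vector $x_i^{(n)} \in W_n^{(i)}$ whose Schmidt components are all orthogonal to $T_{i-1}$, i.e.\ orthogonal to the environments of all previously chosen vectors. Because $x_i^{(n)}$ still lies in $W_n^{(i)}$, we retain $X_i^{(n)} \to F_i$. After $r$ steps all the environments are mutually orthogonal, so for each fixed convex weighting coming from \eqref{eq:hull-F} the vector $\sum_i \sqrt{p_i}\, x_i^{(n)}$ is a genuine pure input whose output is $\sum_i p_i X_i^{(n)}$, and since $X_i^{(n)} \to F_i$ with $X \in (\mathrm{hull}\{F_i\})^\circ$, for large $n$ the point $X$ lies in $\mathrm{hull}\{X_i^{(n)}\} \subseteq L_n$. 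This gives $K^\circ \subseteq \varliminf_{n\to\infty} L_n$ and completes the proof. The dimension count $k\dim T_{i-1} < m$ is the crux: it is exactly the hypothesis $\dim W > k \dim T$ of Lemma \ref{lem:orthogonal}, and tracking it through all $r$ steps is the one place where care is required.
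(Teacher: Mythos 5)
Your proposal is correct and follows essentially the same route as the paper's own proof: the same Steinitz--Straszewicz selection of exposed points, the same inductive construction of approximating vectors via Lemma \ref{lem:condition-subspace}, the same use of Lemma \ref{lem:orthogonal} to orthogonalize the environment components, and the same dimension count $k\dim T_{i-1}\leq (2k^2-3)k^2 < m$ justifying the choice of $m$. Nothing essential is missing.
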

\begin{proof}
Since condition $\mathcal C_m$ is stronger than $\mathcal C_1$, the second inclusion follows from the proof of Theorem \ref{thm:LA1}. We shall now show that the first inclusion holds.

Comparing this statement to the one in the proof of Theorem \ref{thm:LA1}, we see  that the difficulty comes from the fact that 
$L_n$ is not always a convex set. As before, for a fixed $X \in K^\circ$, choose a set of $r$ exposed points $F_1, \ldots, F_r$ of 
$K$, with $r \leq 2k^2-2$ such that $X \in \left({\rm hull} \{F_1,\ldots,F_{r}\}\right)^\circ$. The main idea here is to build 
approximating sequences $L_n \ni X_i^{(n)} \to F_i$ with an additional orthogonality property with respect to $\mathbb C^n$. 
More precisely, we want sequences $x_i^{(n)} \in \mathbb C^k \otimes \mathbb C^n$, such that
\begin{equation}
X_i^{(n)} = \mathrm{Tr}_{\mathbb C^n} \left[x_i^{(n)}{x_i^{(n)}}^*\right] \to F_i,
\end{equation}
and their Schmidt decompositions: 
\begin{equation}\label{eq:SVD-Xin}
x_i^{(n)} = \sum_a \sqrt{\lambda_a^{(i,n)}} e_a^{(i,n)} \otimes f_a^{(i,n)}
\end{equation}
have the additional property that the families $\{f_a^{(i,n)}\}_{a,i}$ are all orthogonal to each other for large enough $n$. 

Taking advantage of this orthogonality condition, we claim that, for $n$ large enough,
\begin{equation}
\mathrm{hull}\{X_i^{(n)}  : 1 \leq i \leq r\} \subset L_n.
\end{equation}
Indeed, for $X = \sum_{i=1}^r t_i X_i^{(n)}$  in the hull, the orthogonality condition implies that the unit vector
\begin{equation}
x = \sum_{i=1}^r \sqrt{t_i} x_i^{(n)}  \in \mathrm{range} P_n
\end{equation}
turns out to give $\mathrm{Tr}_{\mathbb C^n} \left[xx^*\right] = X$.
Then, as before, for large enough $n$,
\be
X \in {\rm hull} \{X_i^{(n)} :1\leq  i \leq r \} \subset L_{n} 
\ee
proving $L^\circ \subseteq \varliminf_{m \to \infty} L_m$.

To finish the proof, 
we shall construct the approximating sequences \eqref{eq:SVD-Xin}, inductively, for $i=1,2,\ldots r$. The first step is identical to the one in 
the proof of Theorem \ref{thm:LA1}: since $P_n$ satisfies $\mathcal C_1$; choose a sequence $x_1^{(n)}$ such that 
$X_1^{(n)} = \mathrm{Tr}_{\mathbb C^n} \left[x_1^{(n)}{x_1^{(n)}}^*\right]$ satisfies $X_1^{(n)} \to F_1$.
Suppose now we have constructed the first $s$ approximating sequences $x_i^{(n)}$, $1 \leq i \leq s$. For each $n$, 
Lemma \ref{lem:condition-subspace},  provides us with an $m$-dimensional subspace 
$W_n \subset \mathbb C^k \otimes \mathbb C^n$ of vectors verifying equation \eqref{eq:condition-subspace} for 
$A=A_{s+1}$. As before, one can show that for all sequences $x_n \in W_n$, the reduced states 
$X_n = \mathrm{Tr}_{\mathbb C^n} \left[x_nx_n^*\right] $ converge to $F_{s+1}$.
Define now $T_s^{(n)} = \mathrm{span}_{a, i} \{f_a^{(i,n)}\}$ with $1 \leq i \leq s$, the span of all vectors $f \in \mathbb C^n$ 
appearing in the Schmidt decompositions of the vectors $x_1^{(n)}, \ldots, x_s^{(n)}$ (see equation \eqref{eq:SVD-Xin}). 
Since $\dim T_s^{(n)} \leq sk$ and $m = (2k^2-2-1)k^2+1 > sk^2$, by Lemma \ref{lem:orthogonal}, one can find a sequence of 
vectors $x_{s+1}^{(n)} \in W_n$ such that the vectors $f$ appearing in the Schmidt decompositions are orthogonal to $T_s^{(n)}$. 

To summarize, we have constructed a sequence $x_{s+1}^{(n)}$ with the following two properties:
\begin{itemize}
\item The reduced states $X_{s+1}^{(n)}$ converge to $F_{s+1}$;
\item The vectors $f_i$ appearing in the SVD of $x_{s+1}^{(n)}$ are all orthogonal to $T_s^{(n)}$.
\end{itemize}
In such a way, one constructs recursively a family of approximating vectors with the required orthogonality condition.

\end{proof} 

\section{Asymptotic behaviour of some entropic quantities}
\label{sec:asymptotic}
\subsection{The \texorpdfstring{$S_1\to S_\infty$}{S-1 to S-infinity} norm}

Our first result is as follows

\begin{proposition}\label{infinity-norm}
Let $(\Phi_n)_n$ be a sequence of channels satisfying condition $\mathcal C_1$. Then, 
\begin{equation}
||\Phi_n||_{1,\infty}\to \max_{a \in \mathbb C^k, \, \|a\|_2=1} f(aa^*) \qquad
\text{as} \quad n \to \infty.
\end{equation}
\end{proposition}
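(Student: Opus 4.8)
The plan is to prove the two inequalities separately, matching the lim-sup of $\|\Phi_n\|_{1,\infty}$ from above and the lim-inf from below by the claimed limit $M := \max_{\|a\|_2=1} f(aa^*)$. The key observation linking the norm to the function $f$ is that
\begin{equation}
\|\Phi_n(\rho)\|_\infty = \max_{a \in \mathbb C^k, \, \|a\|_2=1} \trace[\Phi_n(\rho)\, aa^*],
\end{equation}
so that $\|\Phi_n\|_{1,\infty} = \max_{\rho} \max_{a} \trace[\Phi_n(\rho)\, aa^*]$, where $\rho$ ranges over $D_N$ and $a$ over unit vectors in $\mathbb C^k$. Since $\Phi_n(\rho)$ ranges exactly over $K_n$ as $\rho$ ranges over $D_N$, this rewrites the norm as $\max_{B \in K_n} \max_{a} \trace[B\, aa^*]$. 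This is the bridge that lets me use Theorem \ref{thm:LA1}, which controls $K_n$ in terms of $K$, and Lemma \ref{lem:upperbound}.

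For the upper bound I would argue as follows. Fix any unit vector $a$ and set $A = aa^* \in D_k$; note this is a valid element of $D_k$ since it is a rank-one projection, hence positive of unit trace. For any $\rho \in D_N$ the output $\Phi_n(\rho)$ has the form $\trace_{\mathbb C^n}[x_n x_n^*]$ where $x_n x_n^* \leq P_n$, so Lemma \ref{lem:upperbound} gives $\varlimsup_n \trace[\Phi_n(\rho)\, aa^*] \leq f(aa^*) \leq M$. Taking the maximum over $\rho$ and $a$, and using that there are only finitely many ``directions'' in a compact parameter space (or more carefully, using the uniform bound $\|P_n(A\otimes I_n)P_n\|_\infty \to f(A)$ that already appears inside the proof of Lemma \ref{lem:upperbound}), yields $\varlimsup_n \|\Phi_n\|_{1,\infty} \leq M$. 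The cleanest route is to note $\|\Phi_n\|_{1,\infty} = \max_a \|P_n(aa^*\otimes I_n)P_n\|_\infty$ directly, so the convergence is governed by condition $\mathcal C_1$ applied to $A = aa^*$.

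For the lower bound, let $a^\star$ be a unit vector attaining $M = f(a^\star (a^\star)^*)$; such a maximizer exists by compactness of the unit sphere in $\mathbb C^k$ and continuity of $f$ (which follows from $f(A) = \lim_n \|P_n(A\otimes I_n)P_n\|_\infty$ being a decreasing limit of the $1$-Lipschitz-in-$A$ functions $A \mapsto \|P_n(A\otimes I_n)P_n\|_\infty$, hence upper semicontinuous, together with the lower bound argument). Set $A^\star = a^\star(a^\star)^*$. By Lemma \ref{lem:condition-subspace} with $m=1$ there is a sequence of unit vectors $x_n \in \mathrm{range}\, P_n$ with $\trace[X_n A^\star] \to f(A^\star) = M$, where $X_n = \trace_{\mathbb C^n}[x_n x_n^*] \in L_n \subseteq K_n$. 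Then
\begin{equation}
\|\Phi_n\|_{1,\infty} \geq \|X_n\|_\infty \geq \trace[X_n\, a^\star(a^\star)^*] \to M,
\end{equation}
so $\varliminf_n \|\Phi_n\|_{1,\infty} \geq M$. Combining the two bounds gives the claim.

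The main obstacle I anticipate is the upper bound's uniformity over the vector $a$: Lemma \ref{lem:upperbound} only gives a lim-sup for each fixed $A$, whereas $\|\Phi_n\|_{1,\infty}$ takes a maximum over all $a$ before passing to the limit, and one must ensure the $\varlimsup$ commutes with this maximum. I expect this to be resolved by using the eigenvalue formulation $\|\Phi_n\|_{1,\infty} = \max_a \lambda_1(P_n(aa^*\otimes I_n)P_n)$ together with the fact that condition $\mathcal C_1$ forces $\lambda_1(P_n(A\otimes I_n)P_n) \to f(A)$ uniformly enough on the compact set of rank-one projections; a standard equicontinuity argument (the maps $A \mapsto \lambda_1(P_n(A\otimes I_n)P_n)$ are uniformly $1$-Lipschitz in operator norm) lets one interchange the max and the limit, which is the one genuinely nontrivial point in an otherwise routine argument.
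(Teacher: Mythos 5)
Your proof is correct and follows essentially the same route as the paper's: both reduce $\|\Phi_n\|_{1,\infty}$ to $\max_{A}\|\Phi_n^*(A)\|_\infty=\max_a\|P_n(aa^*\otimes I_n)P_n\|_\infty$ via Proposition \ref{prop:ad} and convexity, and then invoke condition $\mathcal C_1$. The only difference is that you explicitly justify the interchange of $\max_a$ with $\lim_n$ by the uniform $1$-Lipschitz/equicontinuity argument, a point the paper's three-line proof leaves implicit; this is a welcome addition rather than a divergence.
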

\begin{proof}
Since it is easy to see from the definition in \eqref{def:adjoint} that
\be
\|\Phi_n\|_{1,\infty} = \|\Phi^*_n\|_{1,\infty} 
\ee
The fact that $f(\cdot)$ is convex and Proposition \ref{prop:ad} complete the proof. 
%
\end{proof}

\subsection{The minimum output entropy and the Holevo quantity}
\label{sec:S-min-chi}
 
The following proposition is a rather direct observation. 
\begin{proposition}
Let $(P_n)_n$ be a sequence of orthogonal projections satisfying condition $\mathcal C_1$. Then, one has
\begin{align}
&\lim_{n \to \infty} S_p^{\min} (\Phi_n) = S_p^{\min}(K),\\
&\lim_{n \to \infty} \chi(\Phi_n) = \chi(K).
\end{align}
\end{proposition}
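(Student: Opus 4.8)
The plan is to reduce the two limit statements to the convergence of output sets $K_n \to K$ established in Theorem~\ref{thm:LA1}, combined with a continuity/approximation argument. The key observation is that both $S^{\min}_p$ and $\chi$, as defined on convex sets in \eqref{MOE2} and \eqref{HC2}, are continuous functionals with respect to a suitable notion of set convergence (Hausdorff distance). Since Theorem~\ref{thm:LA1} gives $K^\circ \subseteq \varliminf_n K_n \subseteq \varlimsup_n K_n \subseteq K$, and since $K$ is a compact convex set equal to the closure of its interior (one must check $K$ has nonempty interior, which follows because $f(I_k/k)$-type normalizations place the maximally mixed state in $K^\circ$), one concludes that $K_n \to K$ in Hausdorff distance. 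The sandwich then forces the optimizers on $K_n$ to approach those on $K$.

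First I would make precise that $S^{\min}_p(M) = \min_{X \in M} S_p(X)$ is continuous in $M$: given $K_n \to K$, any minimizer $X^* \in K$ can be approximated from inside $K^\circ \subseteq \varliminf K_n$, so $\limsup_n S^{\min}_p(\Phi_n) \leq S_p(X^*) = S^{\min}_p(K)$; conversely any sequence of minimizers $X_n \in K_n \subseteq$ (a neighborhood of $K$) has a subsequential limit in $\varlimsup K_n \subseteq K$, giving $\liminf_n S^{\min}_p(\Phi_n) \geq S^{\min}_p(K)$ by lower semicontinuity of $S_p$. This pins down $\lim_n S^{\min}_p(\Phi_n) = S^{\min}_p(K)$. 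For the Holevo quantity I would argue analogously, noting that $\chi(M)$ is a maximum over finite ensembles $\{p_i, X_i\}$ with $X_i \in M$ of the concave functional $S_p(\sum_i p_i X_i) - \sum_i p_i S_p(X_i)$; the same inside-approximation and subsequential-limit scheme applies, using that the optimal ensemble for a convex set needs at most $\dim(M)+1 = k^2$ points by Carath\'eodory, so one works with a compact parameter space and the objective is jointly continuous in the $(p_i, X_i)$.

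I expect the main obstacle to be the upper bound direction for $\chi$, namely $\liminf_n \chi(\Phi_n) \geq \chi(K)$, where one must realize the optimal ensemble on $K$ by ensembles on $K_n$. Since the optimal $X_i$ may be extreme (boundary) points of $K$ rather than interior points, one cannot directly use $K^\circ \subseteq \varliminf K_n$; instead I would perturb each $X_i$ slightly toward the interior to land in $K^\circ$, approximate by points of $K_n$, and use uniform continuity of the Holevo functional to control the error as the perturbation vanishes. A secondary technical point is confirming $K$ genuinely has nonempty interior so that $K = (K^\circ)^{cl}$; this should follow from the condition $\mathcal{C}_1$ forcing $f$ to dominate the trace functional strictly in the relevant directions, guaranteeing the maximally mixed state lies in $K^\circ$. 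Once set convergence in Hausdorff distance is secured, both limits follow from the continuity of these entropic functionals on the space of compact convex subsets of $D_k$.
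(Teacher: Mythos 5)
Your proposal follows essentially the same route as the paper's (very terse) proof: the sandwich $K^\circ \subseteq \varliminf_n K_n \subseteq \varlimsup_n K_n \subseteq K$ from Theorem \ref{thm:LA1}, continuity of the von Neumann entropy, and Carath\'eodory's theorem to reduce $\chi$ to an optimization over ensembles of bounded size, so your write-up is just a fleshed-out version of the same argument. The one point where you go beyond the paper is the worry about $K^\circ \neq \emptyset$ --- a legitimate subtlety the paper silently ignores --- but your proposed justification (that $I_k/k \in K^\circ$ always) is not correct in general, since for the depolarizing channels $K=\{I_k/k\}$ is a single point; one should instead pass to relative interiors or treat degenerate $K$ separately.
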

\begin{proof}
By theorem \ref{thm:LA1} and continuity of the von Neumann entropy, the first statement is proved. 
For the second, note that Carath\'eodory's theorem implies that 
optimal ensemble can always consist of $2k^2+1 $.
Therefore, the first statement also implies the second. 
\end{proof}

The following proposition gives a necessarily and sufficient condition for the Holevo capacity to be written nicely: 
\begin{proposition}\label{prop:holevo-smin}
We have
\begin{equation} \label{formula:holevo-smin}
\chi(K) = \log k - S^{\min}(K).
\end{equation}
if and only if the limiting convex set $K$ has the property that
\begin{equation}
I/k \in \mathrm{hull} \, (\mathrm{argmin} \, S),
\end{equation}
where $X \in \mathrm{argmin} \, S$ if and only if $S(X) = S^{\min}(K)$,
\end{proposition}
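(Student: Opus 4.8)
The plan is to analyze the defining quantity $\chi(K)=\max_{\{p_i,X_i\}}\big[S(\sum_i p_i X_i)-\sum_i p_i S(X_i)\big]$ and compare it to $\log k - S^{\min}(K)$. First I would establish the inequality $\chi(K)\le \log k - S^{\min}(K)$, which should hold unconditionally. The reason is that for any ensemble $\{p_i,X_i\}$ with $X_i\in K$, the average output $\hat X=\sum_i p_i X_i$ is a state in $D_k$, so $S(\hat X)\le \log k$ with equality iff $\hat X=I/k$; meanwhile $\sum_i p_i S(X_i)\ge \min_i S(X_i)\ge S^{\min}(K)$. Combining these two bounds gives $S(\hat X)-\sum_i p_i S(X_i)\le \log k - S^{\min}(K)$, and taking the maximum over ensembles yields the inequality.

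The heart of the proof is the equivalence, so I would next examine when equality is attained. Equality in the chain above forces two conditions simultaneously: $S(\hat X)=\log k$, which requires $\hat X = I/k$, and $\sum_i p_i S(X_i)=S^{\min}(K)$, which (since each $S(X_i)\ge S^{\min}(K)$) forces every $X_i$ appearing with $p_i>0$ to be a minimizer, i.e. $X_i\in \mathrm{argmin}\,S$. Therefore $\chi(K)=\log k - S^{\min}(K)$ is achievable if and only if there exists an ensemble supported on $\mathrm{argmin}\,S$ whose average is $I/k$; this is exactly the statement that $I/k\in\mathrm{hull}(\mathrm{argmin}\,S)$. This gives the ``if'' direction directly: if $I/k\in\mathrm{hull}(\mathrm{argmin}\,S)$, write $I/k=\sum_i p_i X_i$ with $X_i\in\mathrm{argmin}\,S$, and this ensemble saturates the upper bound, forcing $\chi(K)\ge \log k - S^{\min}(K)$ and hence equality.

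For the ``only if'' direction I would argue that if $\chi(K)=\log k - S^{\min}(K)$, then because $K$ is compact and the entropy functional is continuous, the maximum defining $\chi(K)$ is actually attained by some optimal ensemble $\{p_i,X_i\}$ (here Carath\'eodory, as invoked in the preceding proposition, bounds the number of components so the maximum is a genuine max over a compact set). For this optimal ensemble the value equals $\log k - S^{\min}(K)$, so the equality analysis above applies in reverse: we must have $\hat X=I/k$ and each $X_i\in\mathrm{argmin}\,S$, which exhibits $I/k$ as a convex combination of points of $\mathrm{argmin}\,S$, i.e. $I/k\in\mathrm{hull}(\mathrm{argmin}\,S)$.

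The main obstacle I anticipate is the equality condition for $S(\hat X)\le \log k$: I must use that $S(X)=\log k$ holds only at the maximally mixed state $X=I/k$, which follows from strict concavity of the von Neumann entropy and the fact that $I/k$ is its unique global maximizer on $D_k$. A secondary subtlety is justifying that the supremum in $\chi(K)$ is attained, so that the equality $\chi(K)=\log k - S^{\min}(K)$ can be pushed back onto a concrete ensemble; I would handle this via the compactness of $K$ together with the Carath\'eodory bound on the ensemble size, exactly as in the adjacent proposition, ensuring the optimization is over a compact parameter set on which the continuous objective attains its maximum.
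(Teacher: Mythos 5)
Your proof is correct and follows essentially the same route as the paper's: bound the two terms of the Holevo quantity separately by $\log k$ and $-S^{\min}(K)$, observe that the hypothesis $I/k\in\mathrm{hull}(\mathrm{argmin}\,S)$ is exactly what allows both bounds to be saturated simultaneously, and read the converse off the equality analysis. The paper's own proof is a two-line sketch of this argument; you supply the details it leaves implicit (strict concavity giving uniqueness of the entropy maximizer, and attainment of the supremum via Carath\'eodory and compactness).
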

\begin{proof}
The first and second terms in \eqref{HC2} have the upper bounds respectively
$\log k$ and $-S^{\min} (K)$ for the convex set $K$. 
So, our assumption let $\Phi$ achieve the both bounds. 
The converse is obvious from this argument. 
\end{proof} 

Suppose $f$ is $G$-invariant for some group $G$ and its unitary representation $\{U_g\}_{g\in G}$,
i.e.,  $f(U_g A U_g^*) = f(A)$ for all $g\in G$ and $A \in D_k$. Then $K$ is invariant with respect to those rotations: 
$U_g K U_g^* =K$ for all $g\in G$. This, in particular, implies that the set of optimal points $\mathrm{argmin} \, S$ is also invariant. 
In addition, if the unitary representation $\{U_g\}_{g \in G}$ is irreducible so that $\int U_g A U_g^* = I/k$ for all $A\in D_k$ 
\cite{holevo}, then we get the formula \eqref{formula:holevo-smin}. For example, consider the additive group 
$\mathbb Z_k \times \mathbb Z_k$ and define unitary operators, which are called \emph{discrete Weyl operators}, by 
\be W_{a,b} = X^a Y^b \label{formula:weyl}\ee 
Here, $(a,b) \in \mathbb Z_k \times \mathbb Z_k$, and $X$ and $Y$ act on the canonical basis vectors $\{e_l\}_{l=1}^k$ of 
$\mathbb C^k$ as follows:
\be
X e_l = e_{l+1} \quad \text{ and } \quad Y e_l = \exp\left\{\frac{2\pi \mathrm i}{k} \cdot l \right\}
\ee
This is an irreducible unitary adjoint representation of the group $\mathbb Z_k \times \mathbb Z_k$ on $\mathbb C^k$.  
Although this argument only gives a sufficient condition, it turns out to be useful. 
We have proven the following corollary. 
\begin{corollary} \label{cor:holevo-smin}
Suppose, as is in \eqref{def:K}, a convex set $K$ is defined by a function $f$ 
which is invariant with respect to the discrete Weyl operators:
\be\label{invariance-f}
f(W_{a,b}\, A \, W_{a,b}^*) = f(A)  \qquad \forall A \in D_k, \quad \forall (a,b) \in \mathbb Z_k \times \mathbb Z_k .
\ee
Then, the formula \eqref{formula:holevo-smin} holds.
\end{corollary}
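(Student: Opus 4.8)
The plan is to verify the hypothesis of Proposition \ref{prop:holevo-smin}, namely that $I/k \in \mathrm{hull}\,(\mathrm{argmin}\, S)$, by symmetrizing a single entropy-minimizer over the discrete Weyl group. Once this membership is established, the identity \eqref{formula:holevo-smin} follows immediately from Proposition \ref{prop:holevo-smin}, so no further work on $\chi$ or $S^{\min}$ is needed.

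First I would record that $K$ is invariant under the adjoint action $B \mapsto W_{a,b} B W_{a,b}^*$. Indeed, for $B \in K$ and any $A \in D_k$, cyclicity of the trace together with the invariance hypothesis \eqref{invariance-f} gives $\trace[(W_{a,b} B W_{a,b}^*) A] = \trace[B (W_{a,b}^* A W_{a,b})] \leq f(W_{a,b}^* A W_{a,b}) = f(A)$, so $W_{a,b} B W_{a,b}^* \in K$ by the definition \eqref{def:K}. Since the von Neumann entropy is unitarily invariant, $S(W_{a,b} X W_{a,b}^*) = S(X)$, and therefore the set $\mathrm{argmin}\, S$ is itself Weyl-invariant: if $X_0$ minimizes $S$ over $K$, then so does each $W_{a,b} X_0 W_{a,b}^*$. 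Fixing one minimizer $X_0$, the convex combination $\bar X = k^{-2}\sum_{(a,b)} W_{a,b} X_0 W_{a,b}^*$ is then a point of $\mathrm{hull}\,(\mathrm{argmin}\, S)$.

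The remaining, and only substantial, step is the identity $\bar X = I/k$, i.e. that the Weyl twirl is the completely depolarizing map. I would obtain this from the fact that $\{W_{a,b}\}_{(a,b)\in \mathbb Z_k \times \mathbb Z_k}$ is an orthogonal Hilbert--Schmidt basis of $M_k$ satisfying $W_{a,b} W_{c,d} W_{a,b}^* = \omega^{bc-ad}\, W_{c,d}$ with $\omega = \exp(2\pi \mathrm i /k)$; expanding $X_0$ in this basis and averaging the phase over $(a,b)$ collapses every Weyl component with $(c,d)\neq(0,0)$, leaving only the scalar part, while trace preservation fixes its value, giving $\bar X = k^{-1}\trace(X_0)\, I = I/k$. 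Equivalently, this is Schur's lemma applied to the irreducible adjoint representation already noted before the statement. This depolarizing computation is the technical heart of the argument; everything else is the formal symmetrization just described. Combining $\bar X = I/k$ with $\bar X \in \mathrm{hull}\,(\mathrm{argmin}\, S)$ verifies the hypothesis of Proposition \ref{prop:holevo-smin} and completes the proof.
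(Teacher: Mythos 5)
Your proposal is correct and follows essentially the same route as the paper: the paper's argument (given in the paragraph preceding the corollary) likewise deduces Weyl-invariance of $K$ and hence of $\mathrm{argmin}\,S$, invokes irreducibility of the Weyl representation to get that the twirl equals $I/k$, and concludes via Proposition \ref{prop:holevo-smin}. You merely spell out the twirl computation (via the Heisenberg--Weyl commutation relations or Schur's lemma) that the paper delegates to a citation.
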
 

\section{Free probability}\label{sec:free-prob} 

A \emph{$*$-non-commutative probability space} is
a unital $*$-algebra $\mathcal A$ endowed with a linear map $\phi\colon\mathcal A\to\mathbb C$
satisfying $\phi (ab)=\phi (ba),\phi (aa^{*})\geq 0, \phi (1)=1$.
The map $\phi$ is called a trace, and
an element of $\mathcal A$ is called
a \emph{non-commutative random variable}. 

Let $\mathcal A_1, \ldots ,\mathcal A_k$ be subalgebras of $\mathcal A$ having the same unit as $\mathcal A$.
They are said to be \emph{free} if for all $a_i\in \mathcal  A_{j_i}$ ($i=1, \ldots, k$) 
such that $\phi(a_i)=0$, one has  
$$\phi(a_1\cdots a_k)=0$$
as soon as $j_1\neq j_2$, $j_2\neq j_3,\ldots ,j_{k-1}\neq j_k$.
Collections $S_{1},S_{2},\ldots $ of random variables are said to be 
$*$-free if the unital $*$-subalgebras they generate are free.

Let $(a_1,\ldots ,a_k)$ be a $k$-tuple of self-adjoint random variables and let
$\mathbb{C}\langle X_1 , \ldots , X_k \rangle$ be the
free $*$-algebra of non commutative polynomials on $\mathbb{C}$ generated by
the $k$ self-adjoint indeterminates $X_1, \ldots ,X_k$. 
The {\it joint distribution\it} of the $k$-tuple $(a_i)_{i=1}^k$ is the linear form
\begin{align*}
\mu_{(a_1,\ldots ,a_k)} : \C\langle X_1, \ldots ,X_k \rangle &\to \C \\
P &\mapsto \phi (P(a_1,\ldots ,a_k)).
\end{align*}

Given a $k$-tuple $(a_1,\ldots ,a_k)$ of free 
random variables such that the distribution of $a_i$ is $\mu_{a_i}$, the joint distribution
$\mu_{(a_1,\ldots ,a_k)}$ is uniquely determined by the
$\mu_{a_i}$'s.

Considering a sequence of $k$-tuples $(a_i^{(n)})_{i=1}^k$ in $*$-non-commutative probability spaces
$(\mathcal{A}_n,\phi_n)$, we say that it converges \emph{in distribution} to 
the distribution of  $(a_1,\ldots ,a_k)\in (\mathcal{A},\phi)$ iff $\mu_{(a_1^{(n)},\ldots ,a_k^{(n)})}$
converges point wise to $\mu_{(a_1,\ldots ,a_k)}$.
Likewise, a sequence is said to converge \emph{strongly in distribution} iff it converges in distribution, and 
in addition, for any non-commuytative polynomial $P$, its operator norm converges
$$\|P(a_1^{(n)},\ldots ,a_k^{(n)})\|\to \|P(a_1,\ldots ,a_k)\|.$$
In this definition, we assume that the operator norm is given by the distribution, i.e.
$$\|P(a_1^{(n)},\ldots ,a_k^{(n)})\|=\lim_p\|P(a_1^{(n)},\ldots ,a_k^{(n)})\|_p,$$ 
and 
\be\label{norm-limit}
\|P(a_1,\ldots ,a_k)\|=\lim_p\|P(a_1,\ldots ,a_k)\|_p
\ee

For the purpose of this paper, let us record  two important theorems which extend strong convergence.
I.e.,
let  $(a_i^{(n)})_{i=1}^k$ be a sequence of $n\times n$ matrices, viewed as elements of the non-commutative
probability space $(M_n,n^{-1}Tr)$ and assume that it converges strongly in distribution towards
a $k$-tuple of random variables
$(a_1,\ldots ,a_k)\in (\mathcal{A},\phi)$,
then we have the following extension theorems. 

\begin{theorem}\label{thm:cm}
Let $U_n$ be an $n\times n$ Haar distributed unitary random matrix. Then the family
$$(a_1^{(n)},\ldots ,a_k^{(n)},U_n,U_n^*)$$
almost surely converges strongly too, towards the $k+2$-tuple of random variables
$(a_1,\ldots ,a_k,u,u^*)$, where $u,u^*$ are unitary elements free from $(a_1,\ldots ,a_k)$
\end{theorem}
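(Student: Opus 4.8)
The plan is to split the notion of strong convergence into its two ingredients and treat them separately. Convergence \emph{in distribution} of $(a_1^{(n)},\ldots,a_k^{(n)},U_n,U_n^*)$ to $(a_1,\ldots,a_k,u,u^*)$, together with the fact that the limiting $u$ is a Haar unitary free from $(a_1,\ldots,a_k)$, is Voiculescu's classical asymptotic freeness of Haar unitaries and deterministic matrices; promoting the expected statement to an almost sure one is routine via concentration of measure on the unitary group and a Borel--Cantelli argument. This already settles one of the two norm inequalities: for any $*$-polynomial $P$ and any fixed integer $p$, almost surely $\norm{P(a^{(n)},U_n,U_n^*)}_\infty \geq \norm{P(a^{(n)},U_n,U_n^*)}_{2p} \to \norm{P(a,u,u^*)}_{2p}$, the $L^{2p}$-norms being taken with respect to the normalized traces. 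Taking $\liminf_n$ and then letting $p \to \infty$, the right-hand side tends to $\norm{P(a,u,u^*)}_\infty$ by \eqref{norm-limit}, so $\liminf_n \norm{P(a^{(n)},U_n,U_n^*)}_\infty \geq \norm{P(a,u,u^*)}_\infty$. Everything therefore reduces to the reverse, \emph{upper bound} inequality.

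For the upper bound I would proceed through two reductions. First, the \emph{self-adjoint linearization trick} of Haagerup--Thorbj{\o}rnsen reduces the control of $\norm{P(a^{(n)},U_n,U_n^*)}_\infty$ for arbitrary $P$ to that of self-adjoint operators that are of degree one in the variables, namely of the form $b_0 \otimes I + \sum_i b_i \otimes a_i^{(n)} + c \otimes U_n + c^* \otimes U_n^*$ with fixed scalar matrix coefficients $b_i, c$; for such objects, norm estimates are equivalent to uniform control of the resolvents $(z - \cdot)^{-1}$, which is the form amenable to Stieltjes-transform methods. Second, I would replace the Haar unitary by a Gaussian model. By Voiculescu's polar decomposition of the circular element, $u$ is the phase of a circular element $c = u\module{c}$, and on the matrix side $U_n$ is the polar part of a complex Ginibre matrix $Y_n$ (equivalently, built from two independent GUE matrices). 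Approximating the phase map $t \mapsto t\module{t}^{-1}$ uniformly by polynomials on the relevant spectral window then expresses the quantities above in terms of $*$-polynomials in $a^{(n)}$ and $Y_n$.

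At this stage the Gaussian version of the desired inequality is available: the strong convergence of independent GUE matrices together with deterministic matrices that converge strongly (Haagerup--Thorbj{\o}rnsen, in the form extended by Male so as to incorporate deterministic coefficients) yields $\limsup_n \norm{Q(a^{(n)},X_n)}_\infty \leq \norm{Q(a,s)}_\infty$ almost surely for every self-adjoint matrix polynomial $Q$, where the $X_n$ are GUE and the $s$ are free semicirculars free from $a$. The almost sure statement, uniform as the polynomial approximation of the phase is refined, follows from Gaussian concentration of measure applied to the $1$-Lipschitz maps $Y_n \mapsto \norm{\cdot}_\infty$, together with Borel--Cantelli along a sufficiently fine net of coefficient data. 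Transporting these estimates back through the polynomial approximation of the polar part then produces the upper bound for the Haar model.

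The main obstacle is this upper bound, and in particular the transfer through the non-polynomial polar decomposition. The phase map $t \mapsto t\module{t}^{-1}$ is singular at the origin, while the least singular value of $Y_n$ tends to $0$; hence a bounded polynomial approximation of the phase incurs an error supported on the spectral mass of $\module{Y_n}$ near $0$, and one must show that this contributes negligibly to the linearized operator norms. This requires quantitative hard-edge and resolvent estimates for $Y_n$, well beyond mere moment convergence. Coupled with it is the core Gaussian input itself, a \emph{no outlier} statement: that the linearized self-adjoint models in the GUE variables have, asymptotically and almost surely, no eigenvalues escaping the limiting spectrum. Both are exactly where the full strength of the Haagerup--Thorbj{\o}rnsen estimates --- sharp control of the Stieltjes transforms and resolvents, rather than convergence of moments alone --- is indispensable.
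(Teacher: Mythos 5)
The first thing to say is that the paper contains no proof of this statement: Theorem \ref{thm:cm} is imported wholesale from the literature, with the convergence in distribution attributed to Voiculescu \cite{voiculescu98} (see also \cite{collins-imrn}) and the strong convergence to Collins--Male \cite{cm}, resting on Haagerup--Thorbj\o{}rnsen \cite{HT} and Male \cite{male}. So there is no in-paper argument to compare yours against; what you have written is a reconstruction of the strategy of the cited works. As such a reconstruction, the architecture is right: splitting strong convergence into almost sure convergence in distribution plus the two norm inequalities, obtaining the lower bound $\liminf_n \|P(a^{(n)},U_n,U_n^*)\| \geq \|P(a,u,u^*)\|$ from the $L^{2p}$-norms and \eqref{norm-limit} (this half is genuinely complete), and attacking the upper bound via the self-adjoint linearization trick and a Gaussianization through the polar decomposition of a Ginibre matrix. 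This is indeed the route of \cite{cm}.

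The genuine gap is the step you yourself flag and then wave at: the transfer of the upper bound from the Ginibre model to the Haar unitary. Uniform polynomial approximation of the phase map $t \mapsto t|t|^{-1}$ ``on the relevant spectral window'' is not available, because the limiting distribution of $|Y_n|$ is the quarter-circle law, whose support contains $0$; there is no window bounded away from the origin that captures the spectrum. The error you then propose to control is supported on the spectral subspace of $|Y_n|$ corresponding to small singular values, which is indeed of low rank, but it has operator norm of order $1$, and an $O(1)$-norm, low-rank perturbation of a linearized pencil can shift its operator norm by $O(1)$ --- this is precisely the mechanism by which outlier eigenvalues are created, i.e.\ exactly the phenomenon the ``no outliers'' statement is supposed to exclude. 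Hence ``quantitative hard-edge and resolvent estimates'' do not by themselves make this contribution negligible, and invoking ``the full strength of the Haagerup--Thorbj\o{}rnsen estimates'' names the difficulty without closing it: circumventing the singularity of the polar part at the origin is the actual mathematical content of \cite{cm}, not a routine verification appended to \cite{HT} and \cite{male}. Your proposal should therefore be read as a correct reduction of the theorem to the results it already cites, not as a proof.
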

Historically, the convergence of distribution is due to Voiculescu, \cite{voiculescu98}. 
A simpler proof was given by \cite{collins-imrn}.
The strong convergence relies on \cite{cm} - it relies heavily on preliminary works by \cite{HT} and \cite{male}.

Actually,  although this is counterintuitive, Theorem \ref{thm:cm} is equivalent to a stronger statement where, 
in the conclusion, the non-commutative polynomial is not taken with complex coefficients, but with any
matrix coefficient of fixed size. This follows from the ``Linearization Lemma'' as proved by  Haagerup and Thorbj\o{}rnsen \cite{HT}.
We state this result below, as it will be useful to widen our range of examples. 

\begin{theorem}\label{thm:strong-matrix-coeff}
Let $P$ be a non-commutative polynomial in $k$ variables with coefficients in $M_l (\mathbb C)$ instead of $\mathbb C$.
Then the operator norm of $P((a_i^{(n)}))_{i=1}^k\in M_l\otimes M_n$ still converges as $n\to\infty$. 
The limit is obtained by taking the limit as $p\to \infty$ of the limit as $n\to\infty$ of the $p$-norms.
\end{theorem}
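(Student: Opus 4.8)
The plan is to deduce the matrix-coefficient statement from the scalar strong convergence hypothesis via the \emph{linearization trick}, splitting the claim into an easy lower bound that needs only convergence in distribution and a hard upper bound that carries the real content. First I would reduce to a self-adjoint polynomial: replacing $P$ by its Hermitization $\left(\begin{smallmatrix} 0 & P \\ P^* & 0\end{smallmatrix}\right) \in M_{2l}\otimes\mathbb{C}\langle X_1,\ldots,X_k\rangle$ leaves the operator norm unchanged, so we may assume $P=P^*$ and study the spectral edges of $P(a^{(n)})\in M_{ln}$.

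For each fixed even integer $p$ the normalized $p$-norm satisfies $\|P(a^{(n)})\|_p^p = (ln)^{-1}\trace\,[P(a^{(n)})^p]$, which upon expanding $P^p$ is a finite combination of normalized traces of words in the $a_i^{(n)}$ weighted by traces of the $M_l$-valued coefficients; convergence in distribution therefore already gives $\lim_n \|P(a^{(n)})\|_p = \|P(a)\|_p$ computed in $(M_l\otimes\mathcal{A},\tracenorm_l\otimes\phi)$. Since $\|X\|_p\leq\|X\|_\infty$ with $\|X\|_p\nearrow\|X\|_\infty$ as $p\to\infty$ (the spectral distribution being a probability measure), this yields $\liminf_n \|P(a^{(n)})\|_\infty \geq \lim_p\lim_n \|P(a^{(n)})\|_p = \|P(a)\|_\infty$, which simultaneously identifies the candidate limit as the iterated limit asserted in the statement. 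It then remains to prove the matching upper bound $\limsup_n \|P(a^{(n)})\|_\infty \leq \|P(a)\|_\infty$, i.e. that no spectrum of $P(a^{(n)})$ escapes a neighbourhood of $\mathrm{spec}\,P(a)$; this is exactly where strong, as opposed to merely distributional, convergence is indispensable.

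Next I would invoke the linearization lemma of Haagerup and Thorbj\o{}rnsen \cite{HT}: for self-adjoint $P$ there are an integer $l'$, self-adjoint matrices $\gamma_0,\ldots,\gamma_k\in M_{l'}$, and an affine map $z\mapsto\Lambda(z)$ such that the \emph{linear pencil} $L=\gamma_0\otimes 1+\sum_{i=1}^k\gamma_i\otimes X_i$ has the property that, for every tuple $b$ in any $C^*$-algebra and every $z$ off the spectrum, $\Lambda(z)-L(b)$ is invertible and a fixed corner of its inverse recovers the resolvent $(z-P(b))^{-1}$; in particular $z\in\mathrm{spec}\,P(b)$ if and only if $\Lambda(z)-L(b)$ is singular. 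This replaces the high-degree matrix-coefficient polynomial $P$ by the \emph{degree-one} pencil $L$ (at the cost of enlarging the matrix coefficients), so that the spectral question for $P(a^{(n)})$ becomes the uniform invertibility of $\Lambda(z)-L(a^{(n)})$.

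The main obstacle is this final step: showing that for each $z$ at fixed distance from $\mathrm{spec}\,P(a)$ one has $\sup_n \|(\Lambda(z)-L(a^{(n)}))^{-1}\| < \infty$. The difficulty is that the resolvent is a matrix-valued \emph{rational} function of the $a_i^{(n)}$, so scalar strong convergence cannot be inserted verbatim. Here I would run the Haagerup--Thorbj\o{}rnsen resolvent argument: study the $M_{l'}$-valued Cauchy transform $(\mathrm{id}_{M_{l'}}\otimes\tracenorm_n)\bigl[(\Lambda(z)-L(a^{(n)}))^{-1}\bigr]$, control the relevant traced moments by the scalar strong convergence hypothesis, and upgrade this moment control to a genuine operator-norm bound on the resolvent through the self-improving (master inequality) estimate attached to the linearized pencil. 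Once this uniform resolvent bound holds, the Schur-complement identity transports it back to $P$, giving $\mathrm{spec}\,P(a^{(n)})\subseteq\mathrm{spec}\,P(a)+(-\varepsilon,\varepsilon)$ for all large $n$ and hence the upper bound. Combining the two bounds establishes $\lim_n\|P(a^{(n)})\|_\infty=\|P(a)\|_\infty=\lim_p\lim_n\|P(a^{(n)})\|_p$, as claimed.
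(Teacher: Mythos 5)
The paper itself offers no proof of this theorem: it is recorded as a known consequence of the Haagerup--Thorbj\o{}rnsen linearization lemma and used as a black box. Judged on its own merits, your lower bound and your identification of the limit are correct and standard: hermitization, expansion of the even moments $(\tracenorm_l\otimes\tracenorm_n)[(P^*P)^{m}]$ into products of coefficient traces and traces of words in the $a_i^{(n)}$, convergence in distribution, and the monotone convergence $\|X\|_p\nearrow\|X\|_\infty$ (which uses the faithfulness built into \eqref{norm-limit}) give $\liminf_n\|P(a^{(n)})\|\geq\|P(a)\|=\lim_p\lim_n\|P(a^{(n)})\|_p$.

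The gap is in your upper bound. The two tools you invoke there do not apply in this setting. The ``master inequality'' of Haagerup--Thorbj\o{}rnsen is obtained from Gaussian integration by parts and is specific to GUE-type ensembles; here the $a_i^{(n)}$ are arbitrary matrices assumed only to converge strongly, so no such identity is available. More fundamentally, the quantity you propose to control, the traced resolvent $(\id_{M_{l'}}\otimes\tracenorm_n)\bigl[(\Lambda(z)-L(a^{(n)}))^{-1}\bigr]$, only encodes the empirical eigenvalue distribution of the pencil; it is blind to $o(n)$ outlier eigenvalues, and excluding outliers is exactly what the upper bound requires. So ``control the traced moments by scalar strong convergence'' cannot be upgraded to a uniform operator-norm bound on the resolvent without extra structure. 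The implication actually being used is a soft $C^*$-algebraic one: scalar strong convergence says precisely that $Q(a)\mapsto Q(\hat a)$ is a well-defined isometric $*$-isomorphism from $C^*(1,a_1,\dots,a_k)$ onto the $C^*$-algebra generated by the image $\hat a$ of the sequence in a norm ultraproduct of the $M_n$ (since $\|Q(\hat a)\|=\lim_\omega\|Q(a^{(n)})\|=\|Q(a)\|$ for every scalar $*$-polynomial $Q$). A $*$-isomorphism of $C^*$-algebras is automatically completely isometric, i.e.\ isometric after tensoring with $M_l$, and since $M_l$ of an ultraproduct is the ultraproduct of the $M_l\otimes M_n$, this yields $\lim_\omega\|P(a^{(n)})\|=\|P(a)\|$ for every matrix-coefficient $P$ and every free ultrafilter $\omega$, hence convergence along $n$. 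The linearization lemma you cite is, in HT's hands, the device for producing such a $*$-homomorphism from spectral control of linear pencils; it is not a resolvent estimate, and no resolvent analysis is needed once one has the complete isometry of $*$-isomorphisms.
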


\section{Example of non-random projections} 
\label{sec:non-random} 
In this section we consider some elementary examples of deterministic sequence of projections which satisfy the condition $C_m$. 

Let's start with the completely depolarizing channel $\Phi_n: M_N \to M_k$:
\be\label{eq:dep}
\Phi(\rho) = \trace [\rho] \cdot I_k /k
\ee 
Its adjoint channel is written as
\be
\Phi^* (\sigma) = \trace [\sigma] \cdot I_N/k
\ee
This immediately implies via Proposition \ref{prop:ad} the following result:
\begin{proposition}
The depolarizing channels defined in \eqref{eq:dep} satisfies the condition $\mathcal C_m$
for all $m\geq 1$. 
\end{proposition}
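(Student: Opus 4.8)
The plan is to invoke Proposition~\ref{prop:ad}, which reduces the verification of condition $\mathcal C_m$ to controlling the $m$ largest eigenvalues of the adjoint channel $\Phi_n^*$ evaluated on an arbitrary state $A \in D_k$. This is the natural route here, since the adjoint of the completely depolarizing channel has already been written down explicitly as $\Phi^*(\sigma) = \trace[\sigma]\, I_N/k$.

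First I would substitute a state $A \in D_k$ into that formula. Because $\trace A = 1$ for every $A \in D_k$, this yields $\Phi_n^*(A) = I_N/k$, a scalar multiple of the identity that moreover does not depend on $A$ at all. Its spectrum therefore consists of the single value $1/k$ with multiplicity $N$. Next I would read off the $m$ sequences appearing in Definition~\ref{def:condition}: by Proposition~\ref{prop:ad} these are governed by the top $m$ eigenvalues of $\Phi_n^*(A)$, and as soon as $N \geq m$ each of these equals $1/k$. Hence all $m$ sequences are eventually constant and converge to the common limit $f(A) = 1/k$, uniformly in $A$. Since $m \geq 1$ was arbitrary, condition $\mathcal C_m$ holds for every $m$, with limiting function $f \equiv 1/k$.

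There is essentially no obstacle to overcome: the argument is immediate once the adjoint is known, precisely because $\Phi_n^*(A)$ is a multiple of the identity and its ordered eigenvalues are trivially constant. The only point deserving a remark is that one needs $N \geq m$ in order to speak of $m$ eigenvalues at all, which is guaranteed for all large $n$ since $N$ grows with $n$. As a sanity check, the resulting set $K = \{B \in D_k : \trace[BA] \leq 1/k \ \forall A \in D_k\}$ reduces to the single point $\{I_k/k\}$, matching the fact that the depolarizing channel outputs only the maximally mixed state.
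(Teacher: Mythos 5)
Your proposal is correct and follows exactly the route the paper takes: the paper also computes $\Phi^*(\sigma)=\trace[\sigma]\,I_N/k$ and then invokes Proposition~\ref{prop:ad}, merely leaving implicit the observation that all eigenvalues of $\Phi_n^*(A)$ equal $1/k$ so every ordered eigenvalue sequence converges to $f(A)=1/k$. Your write-up just makes those elementary steps explicit.
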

Note that the above example is trivial, since the image of the channels $\Phi_n$ consists of a single point, 
$\{I_k/k\}$, so the convergence result is obvious.

We now generalize the above example by considering a subclass of entanglement-breaking channels. 
In general, any entanglement-breaking channel has the Holevo form \cite{HSR}:  
\be\label{holevo-form}
\Xi (X) = \sum_{i=1}^l  \trace [X M_i] \sigma_i
\ee
where $\{M_i\}_i$ are positive operators which sum up to the identity and $\sigma_i$ are fixed states. 
Note that the set of the operators $\{M_i\}_i$ is called \emph{Positive Operator Valued Measure} in quantum information theory
and $p_i = \trace [X M_i]$ constitute a probability distribution. 
As the name suggests, those channels break entanglement through measurements. 

\begin{proposition}
Let $\Phi_n$ be a sequence of entanglement-breaking channels:
\be
\Phi_n (\rho) = \sum_{i=1}^l \trace [M_i^{(n)} \rho] \sigma_i  
\ee
where $l>0$ and $(\sigma_i)_{i=1}^l \in D_k^{l}$ do not depend on $n$,
and, for all $n \in \N$, $(M_i^{(n)})_{i=1}^l$ is a  POVM such that $\|M_i^{(n)}\|=1$ for all $i=1,\ldots,l$.
Then, the sequence of projections $P_n$ associated to $\Phi_n$ satisfies the condition $\mathcal C_m$ where
\be
m = \liminf_{n \to \infty} \min_{1\leq i \leq l} \dim_1 M_i^{(n)} \geq 1.
\ee 
Here, for a given operator $X$, $\dim_1 X$ denotes the dimension of the eigenspace corresponding to the eigenvalue $\lambda=1$ of $X$. 
\end{proposition}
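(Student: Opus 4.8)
The plan is to avoid constructing the Stinespring isometries $V_n$ explicitly and instead to work with the adjoint channels, invoking Proposition~\ref{prop:ad}: condition $\mathcal C_m$ for $(P_n)$ is equivalent to the statement that, for every $A \in D_k$, the $m$ largest eigenvalues of $\Phi_n^*(A)$ converge to a common limit $f(A)$. First I would compute the adjoint. Using the Hilbert--Schmidt pairing \eqref{def:adjoint} together with the self-adjointness of the $\sigma_i$ and of the POVM elements $M_i^{(n)}$, a direct calculation yields
\[
\Phi_n^*(A) = \sum_{i=1}^l \trace[\sigma_i A]\, M_i^{(n)}.
\]
Writing $c_i = \trace[\sigma_i A]$, the coefficients are non-negative reals lying in $[0,1]$ and, crucially, they do not depend on $n$.

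The candidate limit is $f(A) = \max_{1 \le i \le l} c_i = \max_i \trace[\sigma_i A]$, and the upper bound $\lambda_1(\Phi_n^*(A)) \le \max_i c_i$ is immediate. Indeed, for any unit vector $v$ the numbers $\scalar{v}{M_i^{(n)} v}$ are non-negative and sum to $1$ because $(M_i^{(n)})_i$ is a resolution of the identity; hence $\scalar{v}{\Phi_n^*(A) v} = \sum_i c_i \scalar{v}{M_i^{(n)} v}$ is a convex combination of the $c_i$, bounded above by $\max_i c_i$.

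The hard part is the matching lower bound on the $m$ largest eigenvalues, and this is where the hypotheses $\|M_i^{(n)}\|=1$ and $\sum_i M_i^{(n)} = I_N$ must be used together. Fix an index $i^*$ with $c_{i^*} = \max_i c_i$. Since $\|M_{i^*}^{(n)}\|=1$, the operator $M_{i^*}^{(n)}$ has an eigenspace $V_n$ for the eigenvalue $1$, of dimension $\dim_1 M_{i^*}^{(n)}$. For a unit vector $v \in V_n$ one has $M_{i^*}^{(n)} v = v$, so $\sum_{i \ne i^*} M_i^{(n)} v = (I_N - M_{i^*}^{(n)})v = 0$; pairing this with $v$ and using $M_i^{(n)} \ge 0$ forces $M_i^{(n)} v = 0$ for every $i \ne i^*$. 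Consequently $\Phi_n^*(A) v = c_{i^*} v$, so every vector of $V_n$ is an eigenvector of $\Phi_n^*(A)$ with eigenvalue $\max_i c_i$.

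Finally I would handle the dimension bookkeeping through the $\liminf$. Since $\dim_1 M_{i^*}^{(n)} \ge \min_{1 \le i \le l} \dim_1 M_i^{(n)}$ and these are integers whose $\liminf$ equals $m$, we have $\min_i \dim_1 M_i^{(n)} \ge m$ for all large $n$; thus $\Phi_n^*(A)$ has at least $m$ eigenvalues equal to $\max_i c_i$ once $n$ is large. Combined with the upper bound, this forces the $m$ largest eigenvalues of $\Phi_n^*(A)$ to be \emph{exactly} $\max_i c_i$ for all large $n$, so they converge (indeed they are eventually constant). By Proposition~\ref{prop:ad} the sequence $(P_n)$ therefore satisfies $\mathcal C_m$ with $f(A) = \max_{1 \le i \le l} \trace[\sigma_i A]$.
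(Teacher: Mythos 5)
Your proof is correct and follows essentially the same route as the paper: compute $\Phi_n^*(A)=\sum_i \trace[A\sigma_i]M_i^{(n)}$, show $\|\Phi_n^*(A)\|=\max_i\trace[A\sigma_i]$ via the POVM normalization, and exhibit this eigenvalue with multiplicity at least $\min_i\dim_1 M_i^{(n)}$ before invoking Proposition~\ref{prop:ad}. You actually supply a detail the paper leaves implicit, namely why vectors in the eigenvalue-one eigenspace of $M_{i^*}^{(n)}$ are annihilated by the other POVM elements and hence are genuine eigenvectors of $\Phi_n^*(A)$.
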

\begin{proof}
A direct computation shows that the adjoint channel of $\Phi_n$ is 
\begin{equation}
\Phi_n^*(A) = \sum_{i=1}^l \mathrm{Tr}[A \sigma_i] M_i^{(n)}.
\end{equation}
First, note that the operator $\Phi_n^*(A)$ has eigenvalue $\mathrm{Tr}[A \sigma_i]$ with multiplicity $ \dim_1 M_i^{(n)}$. 
Define now
\begin{equation}
f(A) = \max_{1 \leq i \leq l} \mathrm{Tr}[A \sigma_i].
\end{equation}
It follows that $\Phi_n^*(A)$ has eigenvalue $f(A)$ with multiplicity at least
\begin{equation}
m_n = \min_{1\leq i \leq l} \dim_1 M_i^{(n)} \geq 1.
\end{equation}
Also, we claim that $\|\Phi_n^*(A)\| = f(A)$:
\begin{equation}
\Phi_n^*(A) = \sum_{i=1}^l \mathrm{Tr}[A \sigma_i] M_i^{(n)} \leq \sum_{i=1}^l f(A)M_i^{(n)} = f(A)I_N.
\end{equation}
We conclude that $f(A)$ is the largest eigenvalue of $\Phi^*(A)$ and that it has multiplicity at least $m_n$; the conclusion 
follows by Proposition \ref{prop:ad}. 
\end{proof}
\begin{remark}
The condition $\|M_i^{(n)} \|=1$ ensures that the image of the channel $\Phi_n$ is precisely $\mathrm{hull}(\sigma_i)_{i=1}^l$, 
and thus the convergence to the limiting set $K$ is again obvious. 
\end{remark}

\section{Examples of random projections}
\label{sec:examples}
In this section we look at random projection operators and we show how Theorem \ref{thm:LA1} together with 
Theorems \ref{thm:cm} and  \ref{thm:strong-matrix-coeff} give interesting examples. 

\subsection{Random Stinespring channels}
\label{sec:examples-Stinespring}
Let us first study  channels coming from random isometries. Such random channels were  used by 
Hayden and Winter \cite{hayden-winter} to show violations of additivity for minimum $p$-R\'enyi entropy, for $p$ close to $1$. 
Following Hastings' counterexample (see the next subsection), it was shown that they also violate additivity for the von Neumann 
entropy ($p=1$) \cite{fki,bho,asw}. More recently, the output of these channels has been fully characterized using free 
probability theory \cite{bcn} and macroscopic violations (or order of $1$ bit) for the additivity of the MOE have been observed \cite{bcn13}.

We construct the channel from the Stinespring dilation
\begin{equation}\label{eq:Stinespring-channels}
\Phi_n(X) = [\mathrm{id} \otimes \mathrm{Tr}](VXV^*),
\end{equation}
where $V: \mathbb C^N \to \mathbb C^k \otimes \mathbb C^n$ is a random Haar isometry. In particular, the operator 
$P_n= V_n V_n^*$ projects onto a random Haar $N$-dimensional subspace of $\mathbb C^k \otimes \mathbb C^n$. 
The asymptotic regime is as follows: we fix a parameter $t\in (0,1)$, and $N$ is any function of $n$ that satisfies
$N\sim tnk$.

Under these circumstances, the convex set $K$ defined in \eqref{def:K} is renamed $K_{k,t}$ and it was studied at 
length in \cite{bcn} and \cite{bcn13}.

\begin{proposition}\label{prop:f-Stinespring-channels}
Consider the free product $\mathcal M$ of the von Neumann non-commutative probability spaces 
$(M_k(\mathbb C), \mathrm{tr})$ and $(\mathbb C^2, t\delta_1+(1-t)\delta_2)$. The element $p=(1,0)$ of $\mathbb C^2$ in 
$\mathcal M$ is a selfadjoint projection of trace $t$, free from elements in $M_k(\mathbb C)$. 
For any $A\in M_k(\mathbb C)$, we define  $f_t(A) = \|pAp\|$. Then, the sequence of projections $P_n$ 
defining the quantum channels \eqref{eq:Stinespring-channels}
satisfies condition $\mathcal C_m$ for any $m$, with limiting function $f_t$.
\end{proposition}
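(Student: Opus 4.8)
The plan is to realize the random projection as a Haar conjugate of a \emph{conveniently chosen} fixed projection, to apply the strong asymptotic freeness of Theorem \ref{thm:cm} so as to recover exactly the free product $\mathcal M$, and finally to promote the convergence of the top eigenvalue to that of the top $m$ eigenvalues for every $m$. Concretely, I would fix $Q = I_k\otimes q_n'$, where $q_n'\in M_n$ is a projection with $\mathrm{rank}(q_n')\sim tn$, so that $\mathrm{rank}\,Q\sim tkn\sim N$ and, for $U_n$ Haar distributed on $U(kn)$, the projection $P_n := U_n Q U_n^*$ is uniformly random of the correct rank (the $O(1)$ discrepancy between $\mathrm{rank}\,Q$ and $N$ is asymptotically harmless). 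Since $P_n(A\otimes I_n)P_n$ and $Q\,U_n^*(A\otimes I_n)U_n\,Q$ share their eigenvalues, and condition $\mathcal C_m$ is governed by these, it suffices to control the joint behaviour of $A\otimes I_n$ and $P_n$. The point of this $Q$ is that it commutes with the whole copy $M_k\otimes I_n$, so the deterministic family consisting of the matrix units $\{E_{ij}\otimes I_n\}_{i,j}$ together with $Q$ generates $M_k\otimes C^*(q_n')\cong M_k\oplus M_k$ and therefore converges strongly in distribution, trivially, to $(M_k,\mathrm{tr})$ equipped with a central projection $q$ of trace $t$.

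Next I would apply Theorem \ref{thm:cm} to this deterministic family together with $U_n$. This yields almost sure strong convergence of $(E_{ij}\otimes I_n,\,Q,\,U_n,\,U_n^*)$ towards $(e_{ij},\,q,\,u,\,u^*)$, with $u$ a Haar unitary free from the algebra generated by $\{e_{ij},q\}$. Consequently $A\otimes I_n=\sum_{ij}A_{ij}\,E_{ij}\otimes I_n\to A\in M_k$ and $P_n=U_nQU_n^*\to p:=uqu^*$, a projection of trace $t$, and the standard fact that conjugation by a free Haar unitary creates freeness shows that $p$ is free from the whole copy of $M_k$. This is precisely the configuration of $\mathcal M$ in the statement, so by strong convergence, for every $A\in D_k$,
\[
\lambda_1\big(P_n(A\otimes I_n)P_n\big)=\norm{P_n(A\otimes I_n)P_n}\longrightarrow \norm{pAp}=f_t(A)
\]
almost surely (the left equality uses $A\ge 0$, whence $P_n(A\otimes I_n)P_n\ge 0$).

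It remains to promote this to the top $m$ eigenvalues, the only genuinely delicate point. Here I would use that the same convergence supplies two facts at once: the empirical spectral distribution of the positive matrix $P_n(A\otimes I_n)P_n$ converges weakly to the distribution $\mu$ of $pAp$, while $f_t(A)=\norm{pAp}=\sup\mathrm{supp}\,\mu$ since the trace on $\mathcal M$ is faithful. Fixing $\e>0$, every neighbourhood of $\sup\mathrm{supp}\,\mu$ carries positive $\mu$-mass, so by the portmanteau theorem the proportion of eigenvalues of $P_n(A\otimes I_n)P_n$ lying in $(f_t(A)-\e,\infty)$ is asymptotically at least $\mu\big((f_t(A)-\e,\infty)\big)>0$; hence their number tends to infinity and eventually exceeds $m$. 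Combined with $\lambda_1\to f_t(A)$ this forces $\lambda_m\big(P_n(A\otimes I_n)P_n\big)\to f_t(A)$ for every fixed $m$, which is exactly condition $\mathcal C_m$ for all $m$, with limiting function $f_t$.

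The main obstacle is concentrated entirely in the strong convergence input of Theorem \ref{thm:cm}: it is what rules out spectral outliers above the edge (yielding genuine convergence of $\lambda_1$ rather than only a one-sided bound) and, through the accompanying weak convergence of the spectral measure near that edge, furnishes the diverging multiplicity needed to reach the top $m$ eigenvalues. Once that is granted, the remaining counting argument is elementary.
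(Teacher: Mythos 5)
Your proposal is correct and follows essentially the route the paper intends: the paper defers this proof to the mixed-unitary section, whose template --- realize the compression $P_n(A\otimes I_n)P_n$ as a polynomial in Haar unitaries and strongly converging deterministic matrices, invoke the almost-sure strong convergence of Theorem \ref{thm:cm} to identify the limiting norm with $\|pAp\|$ in the free product, and upgrade $\lambda_1$ to $\lambda_m$ using the positive spectral mass of $pAp$ near the top of its support --- is exactly what you carry out via the realization $P_n = U_n(I_k\otimes q_n')U_n^*$ and the identification $p=uqu^*$. The only step of the paper's template you leave implicit is the passage from ``for each fixed $A$, almost surely'' to ``almost surely, for all $A\in D_k$ simultaneously,'' which the paper handles by restricting to a countable dense family in $D_k$ and using continuity of $f_t$.
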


A proof can be deduced from the next section on mixed unitary channels. 
We also refer the  reader to \cite{bcn,bcn13} for the proof of the following theorem, gathering some of the most important 
properties of the set $K_{k,t}$. As an original motivation, let us state the following theorem, in which the element with 
least entropy inside $K_{k,t}$ is identified.

\begin{theorem}
The convex set $K_{k,t}$ has the following properties:
\begin{enumerate}
\item It is conjugation invariant: $A \in K_{k,t} \iff UAU^* \in K_{k,t}$, for all $U \in \mathcal U(k)$. In particular, one only 
needs the eigenvalues of a selfadjoint element in order to decide if it belongs to $K_{k,t}$ or not.
\item Its boundary is smooth iff $t<k^{-1}$.
\item Any self-adjoint element with eigenvalues
$$\lambda = (a, \underbrace{b, b, \ldots, b}_{k-1 \text{ times}}),$$
where 
$$a = \begin{cases}
t + \frac 1 k -\frac{2t}{k} +2 \frac{\sqrt{t(1-t)(k-1)}}{k},&\qquad \text{if } t+\frac{1}{k} < 1\\
1,&\qquad \text{if } t+\frac{1}{k} \geq 1\\
\end{cases}$$
and $b=(1-a)/(k-1)$ is a \emph{joint} minimizer for all the $p$-R\'enyi entropies on $K_{k,t}$, for all $p \geq 1$.
\end{enumerate}
\end{theorem}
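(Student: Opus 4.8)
The plan is to treat the three items in turn, relying throughout on the identification $f_t(A)=\|pAp\|$ from Proposition \ref{prop:f-Stinespring-channels} and on the free-probability description of the spectral edge of a compressed projection.

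\emph{Item (1).} First I would show that $f_t$ is unitarily invariant. Since $p$ is free from $M_k(\C)$, the $*$-distribution of $pAp$ in $\mathcal M$ — and hence its operator norm $\|pAp\|$ — depends only on $\mathrm{tr}(p)=t$ and on the distribution of $A$ in $(M_k,\mathrm{tr})$, i.e. on the eigenvalues of $A$. As $A$ and $UAU^*$ share the same eigenvalues, $f_t(UAU^*)=f_t(A)$ for every $U\in\mathcal U(k)$. Transporting this to $K=K_{k,t}$ defined in \eqref{def:K} is immediate: if $A\in K$ then for all $C\in D_k$ one has $\trace[UAU^*C]=\trace[A\,U^*CU]\le f_t(U^*CU)=f_t(C)$, so $UAU^*\in K$, and symmetrically. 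Conjugation invariance means membership depends only on eigenvalues, which is the ``in particular'' clause.

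\emph{Item (2).} Using Item (1), $K$ is determined by its eigenvalue region. I would compute $f_t$ on a rank-$j$ projection $Q$: here $Q$ is a free projection of trace $j/k$, and the standard two-free-projections computation gives
\[
f_t(Q)=g(j):=\begin{cases}\bigl(\sqrt{t(1-j/k)}+\sqrt{(j/k)(1-t)}\bigr)^2 & \text{if } t+j/k\le 1,\\ 1 & \text{otherwise.}\end{cases}
\]
By homogeneity of $f_t$, testing $B\in K$ against rank-$(k-1)$ projections yields $\sum_{i=1}^{k-1}\lambda_i(B)\le g(k-1)$, i.e. $\lambda_k(B)\ge 1-g(k-1)$. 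Now $1-g(k-1)>0$ exactly when $t+\tfrac{k-1}{k}<1$, that is $t<1/k$. Hence for $t<1/k$ every state in $K$ has full rank and $K$ sits in the interior of $D_k$, and I would deduce smoothness of $\partial K$ from the regularity (a soft, square-root spectral edge with no atom) of $f_t$. For $t\ge 1/k$ the bound is vacuous, $K$ contains rank-deficient states, and where $\partial K$ meets the positivity boundary $\partial D_k$ a corner appears, destroying smoothness.

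\emph{Item (3).} The key reduction is that for convex $\psi$ the map $B\mapsto\trace[\psi(B)]$ is convex on $K$; since minimizing the $p$-R\'enyi entropy for $p>1$ amounts to maximizing $\trace[B^p]$, and minimizing the von Neumann entropy amounts to maximizing $\trace[B\log B]$, in every case $p\ge 1$ the minimizer is attained at an extreme point of $K$. By Item (1) the entropy depends only on eigenvalues, so it suffices to minimize over the extreme eigenvalue vectors. For a rank-one direction $vv^*$ the definition of $K$ gives $\max_{B\in K}\lambda_1(B)=f_t(vv^*)=g(1)=a$ (consistent with Proposition \ref{infinity-norm}); the maximizing face is invariant under the stabilizer $\mathcal U(1)\times\mathcal U(k-1)$ of the direction, so averaging over this group produces a maximizer with spectrum $(a,b,\dots,b)$, namely $B^*=a\,\ketbra{v}{v}+\tfrac{1-a}{k-1}\bigl(I_k-\ketbra{v}{v}\bigr)$. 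It then remains to show that this particular extreme point beats every other one simultaneously for all $p\ge1$, which uses the explicit free-probability description of the boundary of the eigenvalue region from \cite{bcn,bcn13}.

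The main obstacle is precisely this uniform comparison in Item (3). Simple majorization does \emph{not} suffice: one can exhibit extreme points $B'$ of $K$ (e.g. the symmetric exposed point in a rank-two direction) with $\lambda_1(B')+\lambda_2(B')>a+b$, so $B^*$ does not majorize $K$, and yet $B^*$ must minimize $S_p$ for every $p\ge1$ at once. Establishing this requires the detailed monotonicity and convexity analysis of the compressed spectral edge $g$ carried out in \cite{bcn,bcn13}, which is also what underlies the smoothness dichotomy in Item (2).
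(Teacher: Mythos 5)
First, a point of reference: the paper does not prove this theorem at all --- immediately before the statement it says that the reader is referred to \cite{bcn,bcn13} for the proof, so there is no in-paper argument to compare yours against. Judged on its own, your sketch is sound where it is complete and honest where it is not. Item (1) is correct and essentially a full proof: freeness of $p$ from $M_k(\C)$ makes the $*$-distribution of $pAp$, hence $f_t(A)=\|pAp\|$, depend only on the spectrum of $A$, and the transport of unitary invariance of $f_t$ to conjugation invariance of $K$ via $\trace[UAU^*C]=\trace[A\,U^*CU]\le f_t(U^*CU)=f_t(C)$ is exactly right. The computation of $f_t$ on projections via the two-free-projections formula, and the resulting bound $\lambda_k(B)\ge 1-g(k-1)$, positive iff $t<1/k$, are also correct and consistent with the value of $a$ in item (3).

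The genuine gaps are in items (2) and (3). For (2), the dichotomy ``every element of $K$ has full rank iff $t<1/k$'' does not by itself decide smoothness of $\partial K$: a convex body can touch $\partial D_k$ and still have a smooth boundary, and conversely can sit in the interior of $D_k$ and still have corners. What is actually needed is a regularity/strict-convexity statement for the support function $A\mapsto f_t(A)$ (equivalently, a description of the boundary of the eigenvalue region), and this is precisely the spectral-edge analysis of \cite{bcn}; your phrase about a ``soft, square-root spectral edge'' points at the right phenomenon but is not an argument, and the claimed corner at $\partial D_k$ for $t\ge 1/k$ is asserted rather than exhibited. For (3), you correctly identify the candidate $B^*=a\,\ketbra{v}{v}+\frac{1-a}{k-1}\left(I_k-\ketbra{v}{v}\right)$ by maximizing in a rank-one direction and averaging over the stabilizer, and you correctly diagnose that majorization cannot close the argument. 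But the simultaneous minimization of all $p$-R\'enyi entropies for $p\ge 1$ is the entire content of item (3), and your sketch defers it wholesale to \cite{bcn,bcn13} --- which is exactly what the paper itself does. So the proposal is a faithful reconstruction of the setup, with item (1) proved, but it does not constitute an independent proof of items (2) and (3).
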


\subsection{Random Mixed Unitary Channels} 
\label{sec:examples-RMUC}
In this section, we are interested in  \emph{random mixed unitary channels}, namely, convex combinations of random 
automorphisms of $M_n(\mathbb C)$ (note that that deterministic incarnations of these these channels are also 
known in the literature as ``random unitary channels''; in this work, we prefer the term ``mixed'', since the unitary operators 
appearing in the channel are themselves random). 
After the setup, we argue that this class of channel has the property $\mathcal C_m$ for all $m$. 
Based on this result, we identify the limiting maximum output infinity norm of this class. 
This section ends with the assertion that this class satisfies the property \eqref{formula:holevo-smin},
which gives a simple relation between MOE and HC. 

To set up our model, we recall that these channels can be written as follows
\begin{eqnarray*}\tilde \Phi_{n,k}^{(w)}:M_n(\mathbb C)\to M_n(\mathbb C)\\
\tilde \Phi_{n,k}^{(w)} (X)= \sum_{i=1}^k w_i U_iXU_i^*,
\end{eqnarray*}
and we are interested in the complementary channels; $\tilde {\tilde \Phi} = \Phi$.
Here, $\{U_i\}_{i=1}^k$ are i.i.d.~ $n\times n$ Haar distributed random unitary matrices and $w_i$ are positive weights 
which sum up to one (we shall consider the probability vector $w$ a parameter of the model). Here, $N=n$ and the 
corresponding isometric embedding is the block column matrix whose $i$-th block is $\sqrt{w_i}U_i$.
Then, the corresponding projection $P_n^{(w)}\in M_n(\mathbb C)\otimes M_k(\mathbb C)$ is given by 
\begin{equation}
P_n^{(w)} = \sum_{i,j=1}^k\sqrt{w_iw_j}\,e_ie_j^* \otimes  U_iU_j^* ,
\end{equation}
where $\{e_i\}$ is the canonical basis of $\mathbb C^k$.
Our model of this paper corresponds to the complementary channel of this channel
$ \Phi_{n,k}^{(w)}:M_n(\mathbb C)\to M_k(\mathbb C)$,
such that the matrix entries of its output are as follows
\be \label{mixed-U}
\left( \Phi_{n,k}^{(w)} (X)\right)_{ij}=\sqrt{w_iw_j}\trace  \left[U_iXU_j^*]\right.
\ee

Firstly, we claim that 
these sequences of channels almost surely have property $\mathcal C_m$ with $m \geq 1$,
and moreover, the limiting function can be written explicitly as follows. 
Let $L(F_k)$ be the free group von Neumann algebra with $k$ free generators $u_1,\ldots ,u_k$. Consider the
algebra $M_k (L(F_k))$. This algebra contains $M_k(\mathbb C)$ in a natural way, and for $A\in D_k$ with respect to this inclusion, define
\begin{equation}\label{eq:def-f-w}
f_w(A)=\|P^{(w)}AP^{(w)}\|,
\end{equation}
where, for all $i,j$,
$P_{ij}^{(w)}=\sqrt{w_iw_j}u_iu_j^*$.
Then, our first claim is:
\begin{proposition}\label{prop:unitary-mix}
 The sequence of orthogonal projections $P_n^{(w)}$ almost  surely satisfies condition 
$\mathcal C_m$ for all $m$ 
 with limiting function $f_w$ defined in \eqref{eq:def-f-w}.
\end{proposition}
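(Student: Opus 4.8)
The plan is to establish condition $\mathcal{C}_m$ for all $m$ by combining Proposition \ref{prop:ad} with the strong convergence machinery of Section \ref{sec:free-prob}. By Proposition \ref{prop:ad}, it suffices to show that for every $A \in D_k$, the top $m$ eigenvalues of the adjoint-channel operators $(\Phi_n^{(w)})^*(A)$ converge to a common limit, which we will identify as $f_w(A)$. First I would write down the matrix $P_n^{(w)}(A \otimes I_n)P_n^{(w)}$ explicitly: since $P_n^{(w)} = \sum_{i,j} \sqrt{w_i w_j}\, e_i e_j^* \otimes U_i U_j^*$ and $A \otimes I_n$ acts on $\mathbb{C}^k \otimes \mathbb{C}^n$, a direct computation shows that this operator, viewed as an element of $M_k \otimes M_n = M_k(M_n)$, has entries that are non-commutative polynomials (indeed $*$-monomials) in the Haar unitaries $U_1,\ldots,U_k$ with coefficients drawn from the fixed matrix $A$. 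In other words, $P_n^{(w)}(A \otimes I_n)P_n^{(w)}$ is precisely of the form $Q((U_i))$ for a polynomial $Q$ with matrix coefficients, where the limiting object is the analogous expression $P^{(w)} A P^{(w)}$ built from the free Haar unitaries $u_1,\ldots,u_k$ generating $L(F_k)$.

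The key step is to invoke strong convergence. By Voiculescu's theorem the families of i.i.d.\ Haar unitaries $(U_1^{(n)},\ldots,U_k^{(n)})$ converge in distribution almost surely to free Haar unitaries $(u_1,\ldots,u_k)$ in $L(F_k)$; by Theorem \ref{thm:cm} this convergence is almost surely \emph{strong}, and by Theorem \ref{thm:strong-matrix-coeff} strong convergence persists for non-commutative polynomials with matrix coefficients of fixed size $l = k$. Applying this to the polynomial $Q$ yields almost sure convergence of the operator norm
\begin{equation}
\|P_n^{(w)}(A \otimes I_n)P_n^{(w)}\|_\infty \to \|P^{(w)} A P^{(w)}\|_\infty = f_w(A).
\end{equation}
Since $\|P_n^{(w)}(A \otimes I_n)P_n^{(w)}\|_\infty = \lambda_1(P_n^{(w)}(A \otimes I_n)P_n^{(w)})$, this already gives convergence of the top eigenvalue, hence condition $\mathcal{C}_1$ with limiting function $f_w$.

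To upgrade from $\mathcal{C}_1$ to $\mathcal{C}_m$ for all $m$, I would use the standard fact that strong convergence controls not merely the top eigenvalue but the whole edge of the spectrum. Concretely, for any $\e > 0$ the spectral projection of $P_n^{(w)}(A \otimes I_n)P_n^{(w)}$ onto $[f_w(A)-\e, \infty)$ has rank tending to infinity; the cleanest route is to apply strong convergence to a polynomial in the $U_i$ that detects spectral mass near the top, for instance by noting that $\operatorname{tr}$ of suitable functions (approximated by polynomials) of the operator converges to the corresponding trace of $P^{(w)} A P^{(w)}$, and that the limiting operator has the value $f_w(A)$ in the essential/top part of its spectrum with positive spectral weight. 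Since the dimension $kn \to \infty$, the number of eigenvalues within $\e$ of $f_w(A)$ grows without bound, so for any fixed $m$ and large $n$ the top $m$ eigenvalues all lie in $[f_w(A)-\e, f_w(A)]$, giving $\mathcal{C}_m$. The main obstacle is precisely this last point: strong convergence gives the extreme eigenvalue directly, but arguing that eigenvalues of \emph{rank} $\geq m$ (for arbitrary fixed $m$) also converge to the same edge value $f_w(A)$ requires showing the limiting operator $P^{(w)} A P^{(w)}$ has no isolated top eigenvalue of finite multiplicity — i.e.\ that $f_w(A)$ is attained as an edge of the continuous spectrum with infinite-dimensional spectral support — so that no finite collection of top eigenvalues can separate away from the bulk edge. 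Establishing this spectral feature of the limit, and transferring it to the $n$-level approximants via strong convergence of trace functionals, is the technical heart of the argument.
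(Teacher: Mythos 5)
Your reduction of condition $\mathcal C_m$ to the strong convergence machinery is exactly the paper's route: write $P_n^{(w)}(A\otimes I_n)P_n^{(w)}$ as a $*$-polynomial in the Haar unitaries with coefficients in $M_k$, and apply Theorems \ref{thm:cm} and \ref{thm:strong-matrix-coeff} to get $\|P_n^{(w)}(A\otimes I_n)P_n^{(w)}\|\to\|P^{(w)}AP^{(w)}\|=f_w(A)$ almost surely for each fixed $A$, which gives $\mathcal C_1$. However, you stop short of the argument for $m>1$, explicitly flagging it as the unresolved ``technical heart.'' That step has a short resolution, which the paper uses, and your framing of the obstacle is slightly off: you worry about ruling out an isolated top eigenvalue of finite multiplicity of $P^{(w)}AP^{(w)}$, but no such exclusion is needed. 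By \eqref{norm-limit}, the operator norm of the limiting element equals the limit of its $p$-norms with respect to the faithful trace on $M_k(L(F_k))$; equivalently, the spectral distribution of $P^{(w)}AP^{(w)}$ assigns some positive mass $\eta_\varepsilon>0$ to every $\varepsilon$-neighbourhood of the top of its spectrum --- whether that top is an atom or an edge of continuous spectrum is irrelevant, since a faithful trace gives every nonzero spectral projection positive weight. Convergence in distribution (already contained in strong convergence) then forces, for large $n$, on the order of $\eta_\varepsilon\cdot n$ of the $kn$ eigenvalues of $P_n^{(w)}(A\otimes I_n)P_n^{(w)}$ to lie within $2\varepsilon$ of $f_w(A)$; since $\varepsilon>0$ is arbitrary and $m$ is fixed, the top $m$ eigenvalues all converge to $f_w(A)$. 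This is precisely the paper's argument, so your ``main obstacle'' is not an obstacle once \eqref{norm-limit} is invoked.

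There is a second, smaller omission. Condition $\mathcal C_m$ requires the convergence for \emph{all} $A\in D_k$ on a single event of probability one, and $D_k$ is uncountable, so one cannot simply intersect the null sets obtained for each fixed $A$. The paper closes this by choosing a countable dense family $(A_i)\subset D_k$, intersecting countably many probability-one events, and then using continuity of $f_w$ together with the uniform control of $A\mapsto P_n^{(w)}(A\otimes I_n)P_n^{(w)}$ in $A$ to extend the convergence to every $A\in D_k$. Your proposal should include this density-plus-continuity step to be complete.
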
 
\begin{proof}
First, notice that $P_n^{(w)}(A\otimes I_n ) P_n^{(w)}$ can be understood as polynomials of $P_n$'s 
with coefficients in $M_k$. Indeed, 
\be
P_n^{(w)}(A\otimes I_n ) P_n^{(w)} 
= \sum_{i,j,s,t=1}^k\sqrt{w_iw_jw_s w_t} \left( e_ie_j^* \, A \, e_se_t^* \right) \otimes  U_iU_j^*U_sU_t^*  
\ee
Hence, Theorem \ref{thm:strong-matrix-coeff} implies that, for any fixed matrix $A\in M_k$,
the operator norm of $P_n^{(w)}(A\otimes I_n ) P_n^{(w)}$ converges to 
the operator norm of $P^{(w)}AP^{(w)}$ because it 
follows from Theorem \ref{thm:cm} that
$k$ independent random unitary matrices $(U^{(n)}_i)_{i=1}^k$  strongly converges to a $k$-tuple of free unitary elements $(u_i)_{i=1}^k$ almost surely.

We have thus shown that, for every matrix $A \in D_k$, almost surely, $\|P_n^{(w)}(A\otimes I_n ) P_n^{(w)}\| \to f_w(A)$. To conclude that the property $\mathcal C_1$ holds, we have to show the above convergence simultaneously, for all $A$. To do this, consider a countable set $(A_i) \subset D_k$ with the property that for all $A \in D_k$ and for all $\varepsilon >0$, there is some $i$ such that $\|A-A_i \| < \varepsilon$. By taking a countable intersection of probability one events, the convergence $\|P_n^{(w)}(A_i\otimes I_n ) P_n^{(w)}\| \to f_w(A_i)$ holds almost surely, for all $i \geq 1$. 
Since the function $f(\cdot)$ is continuous from the definition,  
the above chosen sequences of projections show the convergence
$\|P_n^{(w)}(A\otimes I_n ) P_n^{(w)}\| \to f_w(A)$ for all matrices $A \in D_k$.

Next, we show $\mathcal C_m$ property with $m>1$. 
Remember that the infinity norm is defined by the limit of $p$-norms as in \eqref{norm-limit},
the limiting density function yields non-vanishing measure around the limiting infinity norm. 
More precisely, for any $\epsilon>0$ there exists a ratio $0<\eta_\epsilon<1$ such that 
the measure of the $\epsilon$-neighborhood of the infinity norm is $\eta_\epsilon$. 
Hence, fix $A\in D_k$ and for large enough $n$, we have $\eta_\epsilon \cdot n$ eigenvalues of $P_n^{(w)}(A\otimes I_n ) P_n^{(w)}$ 
which are $2 \epsilon$-close to the limiting infinity norm. 
As $\epsilon > 0$ is arbitrary, for any $m \geq 1$, the largest $m$ eigenvalues converge to the operator norm almost surely. 
Again, we can prove that almost surely all the sequences show this convergence for all $A\in M_k$. 
This proves  $\mathcal C_m$ property with $m>1$. 
\end{proof}

Secondly, we characterize the limiting value of the maximal output infinity norm via Proposition \ref{prop:unitary-mix}. 
To do so, we recall the following result from \cite{aos}, generalizing questions that can be traced back to \cite{kes} 
(for a matricial coefficient version, see \cite{leh}):
\begin{proposition}\cite[Theorems IV G and IV K]{aos}\label{prop:sum-of-unitaries}
Consider an integer $k \geq 2$ and let $\{u_1, \ldots, u_k\}$ be a family of free unitary random variables and 
$a = (a_1, \ldots, a_k)$ a scalar vector. Then,
\begin{equation}
\psi(a):=\left\|\sum_{i=1}^k a_i u_i \right\| = \min_{x \geq 0} \left[ 2x + \sum_{i=1}^k \left( \sqrt{x^2 + |a_i|^2} - x\right)\right].
\end{equation}
Moreover, 
\begin{align}
\min_{\|a\|_1 = 1} \psi(a) &= \frac{2 \sqrt{k-1}}{k}\\
\max_{\|a\|_2 = 1} \psi(a) &= \frac{2 \sqrt{k-1}}{\sqrt k}
\end{align}
with both extrema being achieved on ``flat'' vectors, i.e. vectors with $|a_i| = \text{const}$.
\end{proposition}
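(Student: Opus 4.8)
The plan is to reduce the computation of $\psi(a)$ to an operator-valued free probability problem over $M_2(\C)$ and to read off the formula from the spectral edge of a self-adjoint dilation. First I would use rotation invariance (the families $(u_i)$ and $(e^{\mathrm i\theta_i}u_i)$ have the same $*$-distribution and the same freeness) to assume $a_i = s_i \geq 0$. Writing $T = \sum_i s_i u_i$, I pass to the self-adjoint dilation
\[
S = \begin{pmatrix} 0 & T \\ T^* & 0\end{pmatrix} = \sum_{i=1}^k w_i \in M_2(\mathcal A), \qquad w_i = \begin{pmatrix} 0 & s_i u_i \\ s_i u_i^* & 0\end{pmatrix},
\]
so that $\|T\| = \|S\|$. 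Each $w_i$ is self-adjoint with $w_i^2 = s_i^2\cdot 1$, i.e. $s_i$ times a symmetry. Equipping $M_2(\mathcal A)$ with the conditional expectation $\mathcal E = \mathrm{id}_{M_2}\otimes\tau$ onto $M_2 = M_2(\C)$, the standard fact that amplification preserves freeness shows that the $w_i$ are free with amalgamation over $M_2$, and $\mathcal E[w_i]=0$ since $\tau(u_i)=0$.

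Next I would compute the $M_2$-valued Cauchy transform of each summand on diagonal arguments. A direct Schur-complement computation gives, for $b = \mathrm{diag}(z_1,z_2)$,
\[
G_{w_i}(b) = \mathcal E[(b - w_i)^{-1}] = \mathrm{diag}\!\left(\frac{z_2}{z_1 z_2 - s_i^2},\ \frac{z_1}{z_1 z_2 - s_i^2}\right),
\]
so diagonal matrices are preserved (on $b=zI$ one recovers the scalar transform $z/(z^2-s_i^2)$ of a symmetry). The $\mathbb Z_2$-symmetry $S\mapsto -S$ together with the flip exchanging the two grading components forces $G_S(zI) = g(z)\,I$ to be scalar. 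I then invoke operator-valued subordination for a sum of $M_2$-free elements: there are subordination functions $\omega_i$ with $G_{w_i}(\omega_i) = G_S(zI)$ and $\sum_i \omega_i = zI + (k-1)G_S(zI)^{-1}$. The structure above forces each $\omega_i$ scalar, and the two relations read
\[
g = \frac{\omega_i}{\omega_i^2 - s_i^2}\quad(1\leq i\leq k), \qquad \sum_{i=1}^k \omega_i = z + \frac{k-1}{g}.
\]

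Solving the quadratic for the branch with $\omega_i \sim 1/g$ as $g\to 0^+$ gives $\omega_i = (1 + \sqrt{1 + 4g^2 s_i^2})/(2g)$, and substituting into the sum relation yields the inverse Cauchy transform $z(g) = \frac{1}{2g}\big[(2-k) + \sum_i \sqrt{1 + 4g^2 s_i^2}\big]$. With the change of variable $x = 1/(2g) > 0$ this is exactly
\[
z(g) = (2-k)x + \sum_{i=1}^k \sqrt{x^2 + s_i^2} = 2x + \sum_{i=1}^k\left(\sqrt{x^2 + s_i^2} - x\right) =: h(x).
\]
The norm $\|S\|$ is the right edge of the (symmetric) spectrum of $S$, which equals the minimal value of the inverse Cauchy transform on the real axis; since $h$ is strictly convex ($h''>0$) with $h'(0^+)\leq 0$ and $h'(\infty)=2>0$, this edge is the unique interior critical point and $\psi(a) = \|S\| = \min_{x\geq 0} h(x)$, proving the main formula.

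Finally, for the two extremal problems I would exploit that $h(x;s)$ is jointly convex in $(x,s_1,\dots,s_k)$, so $\psi(a)=\min_x h$ is a symmetric convex — hence Schur-convex — function of $(s_i)=(|a_i|)$. A Schur-convex symmetric function is minimized over $\{\sum s_i = 1\}$ at the flat vector and (comparing with the extreme vector $a=e_1$, where $\psi=1$) maximized over $\{\sum s_i^2 = 1\}$ at the flat vector. Plugging $s_i\equiv c$ into $h$ and minimizing gives $x_\ast = (k-2)c/(2\sqrt{k-1})$ and $\psi = 2c\sqrt{k-1}$; taking $c = 1/k$ (for $\|a\|_1=1$) and $c = 1/\sqrt k$ (for $\|a\|_2 = 1$) yields $2\sqrt{k-1}/k$ and $2\sqrt{k-1}/\sqrt k$ respectively. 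I expect the free-probabilistic core to be the main obstacle — establishing $M_2$-freeness of the $w_i$ and, above all, rigorously identifying the operator norm with the critical value of the subordination/inverse-Cauchy-transform function (the edge-of-the-spectrum criterion) rather than merely a stationary point; the extremal computation is then routine convexity.
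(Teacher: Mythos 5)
The paper does not prove this proposition at all: it is imported verbatim from Akemann and Ostrand \cite{aos}, whose original argument is $C^*$-algebraic (an explicit operator inequality in the reduced free group $C^*$-algebra for the upper bound, Kesten-type estimates for the lower bound). Your derivation via the self-adjoint dilation, freeness with amalgamation over $M_2(\C)$, and operator-valued subordination is therefore a genuinely different, modern route, closer in spirit to Lehner's matrix-coefficient generalization \cite{leh}. The computations are correct where you carry them out: the $M_2$-valued Cauchy transforms of the $w_i$, the scalarity of $G_S(zI)$ (via traciality of $\tau$ and the sign symmetry $u_i\mapsto -u_i$), the inversion $z(g)=(2-k)x+\sum_i\sqrt{x^2+s_i^2}$ with $x=1/(2g)$, and the evaluation at flat vectors. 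What the citation buys the paper is that it never has to confront the point you yourself single out as the crux.

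That crux is a genuine gap as written: knowing that $h$ is convex with a unique critical point does not by itself identify $\min_{x\geq 0}h(x)$ with $\|S\|$. You need either (i) regularity of the distribution of $S$ at its right edge $b$ (square-root vanishing of the density, so that $G_S$ extends continuously to $b$ with value $1/(2x_*)$ and $z(g)$ is a genuine inverse of $G_S$ on all of $(0,G_S(b^+))$), or (ii) an independent operator-norm upper bound $\|S\|\leq h(x)$ valid for \emph{every} $x>0$ -- which is essentially what Akemann--Ostrand prove directly and what makes their route self-contained. Without one of these, the stationary value of the analytically continued formula is only a candidate for the edge. A second, smaller gap is the $\ell^2$ maximization: Schur-convexity of $\psi$ in $(|a_i|)$ plus a comparison with $a=e_1$ does not locate the maximum of a convex function on the sphere $\{\|a\|_2=1\}$. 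The correct fix is to pass to $t_i=|a_i|^2$: since $t\mapsto\sqrt{x^2+t}$ is concave and partial minimization in $x$ preserves concavity in $t$, the function $t\mapsto\psi$ is symmetric concave, hence Schur-concave, on the simplex $\{\sum_i t_i=1\}$, and is therefore maximized at the flat point. (Your Schur-convexity argument for the $\ell^1$ minimum is fine; note also that for $k=2$ the minimizer is $x_*=0$ rather than interior, consistent with $\psi=s_1+s_2$ there.)
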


\begin{remark}\label{rem:positive-x}
In \cite{aos}, the minimum in the formula for $\psi$ is taken over all values $x \geq 0$, 
but one can show, by considering the derivative of the above function at $x=0$, 
that the minimum is achieved at a strictly positive value $x>0$ for $k \geq 3$. 
\end{remark}

Let us introduce the following notation: for a given vector $b \in \mathbb C^k$, let
\begin{equation}\label{function:psi*}
\psi_*(b) = \sup_{\|a\|_2=1} \psi(a.b),
\end{equation}
where $(a.b)_i = a_i b_i$. From the result above, we have that
\begin{equation}
\psi_*((\underbrace{1,\ldots, 1}_{k \text{ times}})) = \frac{2 \sqrt{k-1}}{\sqrt k}.
\end{equation}

Next, from this proposition, we can show the following results:
\begin{theorem}\label{thm:random-mixed-unitary-channels}
For $k \geq 3$, 
\begin{enumerate}
\item
The function $f_w$ defined in \eqref{eq:def-f-w} satisfies 
\begin{equation}
\max_{\substack{A \in D_k \\ \rk A=1}} f_w(A)  = \psi_*(\sqrt w),
\end{equation}
where $\sqrt w \in \ell^2$ is the vector 
with coordinates $(\sqrt{w})_j = \sqrt{w_j}$. The general formula for $\psi_*(\sqrt w)$ is described in the Appendix, 
Proposition \ref{prop:optimization-g}. 
\item
This 
implies that, with probability one, 
$\|\Phi_{n,k}^{(w)}\|_{1,\infty}$ converges to $\psi_*(\sqrt w)^2$  as $n \to \infty$.
\item
In the particular case of the flat distribution $w = w_\text{flat} = (1/k, \ldots, 1/k)$, we have 
\begin{align}
\psi_*(\sqrt{w_\text{flat}}) =\psi(w_\text{flat})  &=  \frac{2 \sqrt{k-1}}{k}\\
\lim_{n \to \infty} \|\Phi_{n,k}^{(w_\text{flat})}\|_{1,\infty} &= \frac{4(k-1)}{k^2}.
\end{align}
\end{enumerate}
\end{theorem}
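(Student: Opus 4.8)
The plan is to reduce the evaluation of $f_w$ on pure states to the operator norm of a single free element, and then to feed this into Proposition~\ref{infinity-norm}. The key observation is that the limiting projection is a rank-one projection in the Hilbert-module sense: let $v\in\C^k\otimes L(F_k)$ be the column vector with entries $v_i=\sqrt{w_i}\,u_i$. Then $P^{(w)}=vv^*$ and $v^*v=\sum_{i=1}^k w_i\,u_i^*u_i=\sum_i w_i=1$, so $v$ is an isometry. Consequently, for a pure state $A=aa^*$ with $a\in\C^k$ and $\|a\|_2=1$, one has $P^{(w)}AP^{(w)}=v\,(v^*aa^*v)\,v^*$, and since conjugation by the isometry $v$ preserves the norm, $f_w(aa^*)=\|v^*aa^*v\|=\|a^*v\|^2$. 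As $a^*v=\sum_i\sqrt{w_i}\,\ol{a_i}\,u_i$ is a weighted sum of the free generators, with $\psi$ as in Proposition~\ref{prop:sum-of-unitaries} this yields the clean identity
\[
f_w(aa^*)=\psi(\ol a.\sqrt w)^2 .
\]
(The same computation at finite $n$ reads $\Phi_n^*(aa^*)=Z_n^*Z_n$ with $Z_n=\sum_i\ol{a_i}\sqrt{w_i}U_i$, whose top eigenvalue is $\|Z_n\|^2$; this is a useful sanity check via Proposition~\ref{prop:ad}.)

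First I would use this identity for part~(1). Because $\psi(c)$ depends only on the moduli $|c_i|$, passing from $a$ to $\ol a$ changes nothing, and maximizing over unit vectors $a$ gives $\max_{\rk A=1}f_w(A)=\psi_*(\sqrt w)^2$, with $\psi_*$ as in \eqref{function:psi*}; note that the right-hand side appears \emph{squared}, consistently with the $1,\infty$-norm limit claimed in part~(2). The explicit evaluation of the one-variable quantity $\psi_*(\sqrt w)=\sup_{\|a\|_2=1}\psi(a.\sqrt w)$ for a general weight $w$ is the genuinely analytic step: it is the constrained variational problem solved in the Appendix (Proposition~\ref{prop:optimization-g}), and it is precisely here that $k\ge 3$ is used, through Remark~\ref{rem:positive-x}, which guarantees that the optimal parameter $x$ in Proposition~\ref{prop:sum-of-unitaries} is strictly positive.

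Part~(2) is then immediate from the general machinery. By Proposition~\ref{prop:unitary-mix} the projections $P_n^{(w)}$ almost surely satisfy condition $\mathcal C_1$ with limiting function $f_w$, so Proposition~\ref{infinity-norm} applies and gives, almost surely, $\|\Phi_{n,k}^{(w)}\|_{1,\infty}\to\max_{\|a\|_2=1}f_w(aa^*)$; combined with part~(1) this limit equals $\psi_*(\sqrt w)^2$. Observe that Proposition~\ref{infinity-norm} already supplies the reduction from the maximum over all states to the maximum over pure states, through the convexity of $f_w$, so no extra argument is needed at this stage.

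Finally, part~(3) specializes to $w=w_\text{flat}$, where no variational analysis is required. Pulling the scalar factor $1/\sqrt k$ out of $\psi$ gives $\psi_*(\sqrt{w_\text{flat}})=\tfrac1{\sqrt k}\max_{\|a\|_2=1}\psi(a)=\tfrac1{\sqrt k}\cdot\tfrac{2\sqrt{k-1}}{\sqrt k}=\tfrac{2\sqrt{k-1}}{k}$ directly from Proposition~\ref{prop:sum-of-unitaries}, and the same proposition gives $\psi(w_\text{flat})=\|\tfrac1k\sum_i u_i\|=\tfrac1k\cdot 2\sqrt{k-1}=\tfrac{2\sqrt{k-1}}{k}$, establishing the asserted coincidence; squaring then yields $\lim_n\|\Phi_{n,k}^{(w_\text{flat})}\|_{1,\infty}=4(k-1)/k^2$. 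The main obstacle in the whole argument is the general optimization underlying $\psi_*(\sqrt w)$ in part~(1); once the isometry identity above is in hand, everything else is a formal consequence of it together with Propositions~\ref{prop:unitary-mix}, \ref{infinity-norm} and~\ref{prop:sum-of-unitaries}.
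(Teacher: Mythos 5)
Your proof is correct and follows essentially the same route as the paper: the identity $f_w(aa^*)=\psi(a.\sqrt w)^2$ (which you obtain by viewing $P^{(w)}=vv^*$ with $v$ an isometry in $\C^k\otimes L(F_k)$, while the paper computes the column norm entrywise), followed by Propositions \ref{prop:unitary-mix}, \ref{infinity-norm} and \ref{prop:sum-of-unitaries}. You are also right that the maximum in part (1) equals $\psi_*(\sqrt w)^2$ rather than $\psi_*(\sqrt w)$; the missing square is an inconsistency in the statement as printed, as parts (2)--(3) confirm.
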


\begin{proof} 
Since $A \in D_k$ is a pure state, it can be written as $A=aa^*$ for some unit vector $a = (a_1,\ldots, a_k) \in \C^k$. 
Then, since we work in $C^*$-algebra, 
\begin{align}
f_w(A) &= \|P^{(w)}aa^*P^{(w)}\|_{M_k(L(F_k))} = \|P^{(w)}a\|^2_{\ell^2(L(F_k))} \\
&= \sum_{i=1}^k  \|[P^{(w)}a]_i\|^2_{L(F_k)} =  \sum_{i=1}^k  \|\sqrt{w_i}u_i \sum_{j=1}^k \sqrt{w_j}\bar{a_j}u_j^*\|^2\\
&=  \sum_{i=1}^k w_i \|\sum_{j=1}^k a_j \sqrt{w_j} u_j\|^2 = \|\sum_{j=1}^k a_j \sqrt{w_j} u_j\|^2 = \psi(a.\sqrt w)^2.
\end{align}
Taking the supremum over all $a \in \C^k$ with $\|a\|_2=1$ proves the first claim.
The second one is a consequence of  Proposition \ref{infinity-norm} and the third is shown by
Proposition \ref{prop:sum-of-unitaries}.
\end{proof}

\begin{remark} Note that the function $f_w$ is not ``spectral'', i.e. it does not depend only on the spectrum of its input, as it is the case for the function $f_t$ from Proposition \ref{prop:f-Stinespring-channels}. Indeed, notice that, with the choice of the unit vectors
\begin{align}
a^{(1)} &= (1, 0, \ldots, 0)\\
a^{(2)} &= (1/\sqrt{k}, 1/\sqrt{k}, \ldots, 1/\sqrt{k}),
\end{align}
one has 
\begin{equation}
1 = f_w \left(a^{(1)}\left(a^{(1)}\right)^*\right) \neq f_w \left(a^{(2)}\left(a^{(2)}\right)^*\right) = \frac{2 \sqrt{k-1}}{\sqrt k},
\end{equation}
although the matrices $A_i = a_ia_i^*$ are isospectral.
\end{remark}

Thirdly, we claim that, in the limit, the minimum output entropy and the Holevo capacity of  the channel \eqref{thm:strong-matrix-coeff}
identify each other:
\begin{theorem} 
The convex set $K$ for $\Phi^{(w)}_n$ has the property \eqref{formula:holevo-smin}.
\end{theorem}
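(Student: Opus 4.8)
The claim to establish is that the limiting convex set $K$ associated to the random mixed unitary channels $\Phi_{n,k}^{(w)}$ satisfies property \eqref{formula:holevo-smin}, i.e. $\chi(K) = \log k - S^{\min}(K)$. The plan is to invoke Corollary \ref{cor:holevo-smin}: it suffices to show that the limiting function $f_w$ defined in \eqref{eq:def-f-w} is invariant under conjugation by the discrete Weyl operators $W_{a,b} = X^a Y^b$ from \eqref{formula:weyl}. Once this invariance is verified, the corollary delivers the formula directly, since the discrete Weyl operators form an irreducible projective representation of $\mathbb Z_k \times \mathbb Z_k$ on $\mathbb C^k$ whose average takes any state to $I/k$.

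First I would recall that $f_w(A) = \|P^{(w)} A P^{(w)}\|$ where $P^{(w)}$ is the matrix over $L(F_k)$ with entries $P_{ij}^{(w)} = \sqrt{w_i w_j}\, u_i u_j^*$. The task is thus to show
\[
\|P^{(w)} (W_{a,b}\, A\, W_{a,b}^*) P^{(w)}\| = \|P^{(w)} A P^{(w)}\| \qquad \forall A \in D_k,\ \forall (a,b).
\]
The natural strategy is to find, for each Weyl operator $W = W_{a,b}$, a suitable symmetry of the algebra $M_k(L(F_k))$ that conjugates $P^{(w)}$ in a norm-preserving way and absorbs the conjugation by $W$. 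The key structural fact is that $P^{(w)} = D V V^* D$ where $V = (u_1, \ldots, u_k)^{t}$ is the column of free Haar unitaries and $D = \mathrm{diag}(\sqrt{w_i})$; the free group von Neumann algebra is invariant under relabelling and under multiplying generators by phases, since the joint $*$-distribution of $(u_1, \ldots, u_k)$ is unchanged under $u_i \mapsto z_i u_i$ for scalars $|z_i|=1$ and under permutations of the index set. These trace-preserving $*$-automorphisms of $L(F_k)$ are exactly what is needed to match the cyclic shift $X$ and the diagonal phase $Y$ appearing in $W_{a,b}$.

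Concretely, I would treat the two generators of the Weyl group separately. For $Y = \mathrm{diag}(\omega^l)$ with $\omega = e^{2\pi i/k}$, conjugation $A \mapsto YAY^*$ multiplies the $(s,t)$ entry of $A$ by $\omega^{s-t}$; I expect to absorb this by the phase automorphism $u_j \mapsto \omega^{j} u_j$ of $L(F_k)$, which rescales $P_{ij}^{(w)} = \sqrt{w_iw_j} u_i u_j^*$ by $\omega^{i-j}$ and hence conjugates $P^{(w)}$ by the same diagonal unitary $Y$, leaving the norm invariant. For the shift $X$, one needs the weights to be permuted accordingly; here the main obstacle arises, because $f_w$ depends genuinely on $w$ (it is not spectral, as the Remark after Theorem \ref{thm:random-mixed-unitary-channels} emphasises), so the cyclic relabelling $i \mapsto i+1$ of the free generators changes $w$ to its cyclic shift. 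I would therefore need to argue that the invariance holds when $w$ is itself invariant under the relevant permutation, or—more likely the intended reading—restrict to the flat case $w = w_{\text{flat}} = (1/k, \ldots, 1/k)$, where the cyclic relabelling automorphism $u_i \mapsto u_{i+1}$ exactly realises conjugation by $X$ and leaves $f_w$ invariant. Combining the phase invariance and the shift invariance then gives full $W_{a,b}$-invariance of $f_{w_{\text{flat}}}$, and Corollary \ref{cor:holevo-smin} concludes the proof.
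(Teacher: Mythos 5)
Your approach is the same as the paper's: verify the Weyl invariance \eqref{invariance-f} of $f_w$ and then invoke Corollary \ref{cor:holevo-smin}. The only cosmetic difference is that you work with the limiting projection $P^{(w)} \in M_k(L(F_k))$ and trace-preserving automorphisms of $L(F_k)$ (phases and relabellings of the free generators), whereas the paper conjugates the finite-$n$ random projection $P_n^{(w)}$ by $W_{a,b}\otimes I_n$ and uses the fact that the $U_i$ are i.i.d.\ Haar; these two computations are equivalent.

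The obstacle you flag for the shift $X$ is genuine, and the paper's own proof runs into it without acknowledging it. The paper computes $(W_{a,b}^*\otimes I_n)\,P_n^{(w)}\,(W_{a,b}\otimes I_n)=\sum_{s,t}\sqrt{w_s w_t}\,e_{s-a}e_{t-a}^*\otimes V_sV_t^*$ with $V_s$ again i.i.d.\ Haar and concludes that this has the same law as $P_n^{(w)}$; but after reindexing the matrix units it is equal in law to $P_n^{(w')}$ with $w'$ the cyclic shift of $w$ by $a$, so the equality of laws holds only when $w$ is shift-invariant, i.e.\ flat. Indeed, one checks directly from \eqref{eq:def-f-w} that $f_w(e_1e_1^*)=w_1$ while $f_w(Xe_1e_1^*X^*)=f_w(e_2e_2^*)=w_2$, so \eqref{invariance-f} fails for non-flat $w$. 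Your restriction to $w=w_{\mathrm{flat}}$ is therefore not a weakness of your argument but the correct scope of this method; for general $w$ one would have to verify the hypothesis $I/k\in\mathrm{hull}\,(\mathrm{argmin}\,S)$ of Proposition \ref{prop:holevo-smin} by other means, and the phase part of your argument (invariance under conjugation by $Y$) is the only piece that survives for arbitrary weights.
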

\begin{proof}
Take the Weyl operators $W_{a,b}$ as defined in \eqref{formula:weyl} and calculate
\be
\left \| P_n (W_{a,b} A W_{a,b}^* \otimes I_n) P_n\right\|_\infty 
=\Big \| \left(\sqrt A \otimes I_n \right)\underbrace{ \left(W_{a,b}^* \otimes I_n \right) P_n \left(W_{a,b} \otimes I_n \right)}_{(\star)} ( \sqrt A  \otimes I_n) \Big\|_\infty 
\label{rotationally-invariant}
\ee
while we have 
\be
(\star)&=&(W_{a,b}^* \otimes I_n ) \left( \sum_{s,t=1}^k \sqrt{w_s w_t} \,e_s e_t^* \otimes U_s U_t^*  \right) (W_{a,b} \otimes I_n) \\
&=&  \sum_{s,t=1}^k \sqrt{w_s w_t} \exp \left\{ \frac{2 \pi \mathrm i }{n} \,b(t-s) \right\}  e_{s-a} e_{t-a}^* \otimes U_s U_t^* \\
&=&  \sum_{s,t=1}^k \sqrt{w_s w_t}  e_{s-a} e_{t-a}^* \otimes \left( \exp \left\{ -bs \frac{2 \pi \mathrm i }{k}  \right\} U_s \right)
\left(\exp \left\{ -bt \frac{2 \pi \mathrm i }{k}  \right\} U_t\right)^*
\ee
This implies that  $(W_{a,b}^* \otimes I_n ) P_n ( W_{a,b} \otimes I_n ) $ 
have the same law for all $(a,b) \in \Z_k \times Z_k$ 
because $\{U\}_{i=1}^k$ are i.i.d. with respect to the Haar measure. 
Therefore,  
\eqref{invariance-f} is true and then Corollary \ref{cor:holevo-smin} completes the proof.  
\end{proof}  

\section{Image of tensor product of channels} 
\label{sec:tensor}
In this section, we investigate the image of tensor products of two channels. 
Section \ref{sec:tensor1} describes general theory when one channel has a nice asymptotic behavior and the other is fixed. 
In Section \ref{sec:tensor2}, we consider cases where the fixed channel is entanglement-breaking. 
\subsection{Tensor with any finite dimensional quantum channel}
\label{sec:tensor1}
Our setting is as follows. 
Let $\Psi_n$ be a quantum channel obtained from $P_n$ a sequence of projections in $M_k\otimes M_n$ of rank $N=N(n)$ , namely,
$$\Psi_n: M_N\to M_k.$$
Then, 
\begin{theorem}\label{thm:product-image}
If the family $(P_n, E_{ij}\otimes I_n : i,j\in \{1,\ldots ,k\})$ converges strongly 
as in the definition of section \ref{sec:free-prob}
then, 
for any quantum channel $\Xi: M_p\to M_q$,
there exists a convex body $K$ in $D_{kq}$ such that
$$\Xi\otimes \Psi_n (S_{pN})\to K$$ 
as in Theorem \ref{thm:LA2}.
\end{theorem}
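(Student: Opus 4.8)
The plan is to reduce the statement to Theorem \ref{thm:LA2} by verifying that the sequence of projections attached to the channels $\Xi\otimes\Psi_n$ satisfies condition $\mathcal C_m$ for every $m$, with a computable limiting function $f$. First I would fix a Stinespring dilation $\Xi(Y)=(\id_q\otimes\trace_{q'})(WYW^*)$ of the fixed channel, with isometry $W\colon\C^p\to\C^q\otimes\C^{q'}$ and projection $Q=WW^*\in M_q\otimes M_{q'}$, and recall that $\Psi_n$ comes from $V_n$ with $P_n=V_nV_n^*$. Tensoring the two dilations and regrouping the output factors $\C^q\otimes\C^k$ against the environment factors $\C^{q'}\otimes\C^n$, the projection associated to $\Xi\otimes\Psi_n$ is
\begin{equation*}
\tilde P_n=Q\otimes P_n\subset (M_q\otimes M_k)\otimes(M_{q'}\otimes M_n).
\end{equation*}
The output dimension is now $kq$, so the goal is $\mathcal C_m$ with $m=(2(kq)^2-3)(kq)^2+1$; in fact the argument yields $\mathcal C_m$ for all $m$, after which Theorem \ref{thm:LA2} furnishes the limiting convex body $K\subset D_{kq}$ defined as in \eqref{def:K}, together with the desired sandwich $K^\circ\subseteq\varliminf_n L_n\subseteq\varlimsup_n L_n\subseteq K$ (writing $L_n=(\Xi\otimes\Psi_n)(S_{pN})$ for the image of the pure states).

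The key computation is the limiting spectrum of $\tilde P_n(A\otimes I_{q'n})\tilde P_n$ for $A\in D_{kq}$, which is exactly the object tested by Definition \ref{def:condition}. Writing $A=\sum_\alpha B_\alpha\otimes C_\alpha$ with $B_\alpha\in M_q$ on the output factor $\C^q$ and $C_\alpha\in M_k$ expanded in the matrix units $E_{ij}$, and using that the factors $\C^q\otimes\C^{q'}$ commute with $\C^k\otimes\C^n$, one obtains
\begin{equation*}
\tilde P_n(A\otimes I_{q'n})\tilde P_n=\sum_\alpha\big(Q(B_\alpha\otimes I_{q'})Q\big)\otimes\big(P_n(C_\alpha\otimes I_n)P_n\big).
\end{equation*}
This displays the operator as a single non-commutative polynomial in the strongly convergent family $(P_n,E_{ij}\otimes I_n)$, with fixed coefficients in $M_q\otimes M_{q'}=M_{qq'}$. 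Hence Theorem \ref{thm:strong-matrix-coeff} applies with $l=qq'$, and the operator norm converges to a limit $f(A)$, namely the norm of the operator obtained by replacing $P_n$ and $E_{ij}\otimes I_n$ by their strong limits $p$ and $e_{ij}$ in the limiting space $(\mathcal M,\tau)$. Because $A\geq0$ makes $\tilde P_n(A\otimes I)\tilde P_n$ positive semidefinite, its top eigenvalue equals its norm and therefore converges to $f(A)$; this already gives $\mathcal C_1$.

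To upgrade to $\mathcal C_m$ I would reuse the density-of-states argument of Proposition \ref{prop:unitary-mix}. The convergence-in-distribution part of strong convergence extends by linearity to matrix coefficients, so the empirical spectral distribution of $\tilde P_n(A\otimes I)\tilde P_n$ converges weakly to the distribution $\mu$ of the limiting operator, whose support has top point $f(A)$ (this is exactly the content of norm convergence for a positive family converging in distribution). Since $f(A)\in\mathrm{supp}\,\mu$, every window $(f(A)-\e,f(A)]$ carries positive $\mu$-mass; as the ambient dimension $qq'kn\to\infty$, the number of eigenvalues in this window grows without bound, while norm convergence prevents any eigenvalue from exceeding $f(A)+\e$ in the limit. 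Thus for each fixed $m$ the top $m$ eigenvalues converge to $f(A)$, which is $\mathcal C_m$. Note that, the projections being deterministic here, $\mathcal C_m$ only requires this convergence for each individual $A$, so—unlike in the random setting of Proposition \ref{prop:unitary-mix}—no intersection over a countable dense family is needed. The main obstacle is precisely this last upgrade: norm convergence alone gives only $\mathcal C_1$, and one genuinely needs the spectrum to accumulate at its upper edge, which is where the distributional half of strong convergence—not merely the norm half supplied by Theorem \ref{thm:strong-matrix-coeff}—becomes indispensable.
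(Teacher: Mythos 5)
Your proposal is correct and follows essentially the same route as the paper: Stinespring-dilate $\Xi$ to a fixed projection, form the tensor projection $Q\otimes P_n$, recognize the compressed operator as a non-commutative polynomial with matrix coefficients so that Theorem \ref{thm:strong-matrix-coeff} gives convergence of the norm and of the spectral distribution, deduce $\mathcal C_l$ for all $l$, and invoke Theorem \ref{thm:LA2}. The only difference is that you spell out the upgrade from $\mathcal C_1$ to $\mathcal C_m$ via the density-of-states argument, a step the paper leaves implicit by analogy with Proposition \ref{prop:unitary-mix}.
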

\begin{remark}
In this setting, existence of the limiting convex set of output states of the tensor products 
depend only on the asymptotic behavior of $\Psi_n$.
\end{remark}
   
 \begin{proof}
 First, we choose $m \in \mathbb N$  such that there exists a projection $P$ of rank $p$ on  $M_q\otimes M_m$ which is associated to $\Xi$.
 This construction can be made uniquely up to an isometry between $Im (P)$ and $\C^p$.  
 
Next,
it follows from theorem \ref{thm:strong-matrix-coeff} that the fact that
$(P_n, E_{ij}\otimes I_n : i,j\in \{1,\ldots ,k\})$ converges strongly as $n\to \infty$
implies also that
$$(P_n\otimes P, E_{i_1j_1}\otimes 1_n\otimes E_{i_2j_2} : i_1,j_1\in \{1,\ldots ,k\}, i_2,j_2\in 
\{1,\ldots qm\})$$
converges also strongly.
This strong convergence implies that for any $A\in M_k\otimes M_q$, the sequence
$P_n\otimes P A\otimes 1_{nm}P_n\otimes P$
satisfies the condition $\mathcal C_l$ (see Definition \ref{def:condition})
for any $l$. Note that in the above equation, we viewed $A\otimes 1_{nm}$
an an element of $M_k\otimes M_n\otimes M_q\otimes M_n$.

Finally, 
the proof then follows from Theorem \ref{thm:LA2}
\end{proof}

We want to point out that 
it remains difficult to analyze the limiting outputs sets $K$ of Theorem \ref{thm:product-image} in general. 
For example, even in the simple case
where $\Xi$ is the identity map,
we are unable to describe the collection of limiting output sets.

\subsection{Tensor with entanglement breaking channel}
\label{sec:tensor2}
It seems difficult in general to compute $K$ explicitly in the tensor product case. 
However, when $\Xi$ is an entanglement-breaking channel of certain type, 
we can write down the image explicitly. 
In this section, channels tensored with entanglement-breaking channels are fixed and we do not use the asymptotic behavior 
to get results, in the first place. 

Let us start with an interesting example among entanglement-breaking channels, which is called \emph{pinching map}:
\be
\Xi : m_{i,j} \mapsto \delta_{i,j} m_{i,j} 
\ee
where $m_{i,j}$ is the $(i,j)$-element of square matrices. 

\begin{proposition}
Let $\Xi: M_l\to M_l$ be the pinching map. Then the image $K_{\Xi \otimes \Psi}$ can be described as follows.
$$\tilde K=\{a_1K_\Psi \oplus \ldots \oplus a_lK_\Psi ; (a_i)\in \Delta_l\}$$
\end{proposition}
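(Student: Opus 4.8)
### Proof Plan

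The plan is to unwind both sides of the claimed identity directly from the definitions, showing the two sets coincide by mutual inclusion. First I would set up the Stinespring picture for the tensor product channel. Since $\Xi$ is the pinching map on $M_l$, it is entanglement-breaking with Holevo form $\Xi(X) = \sum_{i=1}^l \mathrm{Tr}[X E_{ii}]\, E_{ii}$, where $E_{ii}$ is the rank-one projection onto the $i$-th coordinate. This means $\Xi$ first measures $X$ in the computational basis, reading off the diagonal entry $m_{ii} = \mathrm{Tr}[X E_{ii}]$ with associated post-measurement state $E_{ii}$. The key structural feature I would exploit is that the output of $\Xi$ is always block-diagonal: its range lies in the diagonal subalgebra of $M_l$, so any output of $\Xi \otimes \Psi$ decomposes as a block-diagonal element $\bigoplus_{i=1}^l Y_i$ with the $i$-th block living in $M_k$ (the output space of $\Psi$).

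Next I would compute $(\Xi \otimes \Psi)(\rho)$ for a general input $\rho \in D_{lN}$. Writing $\rho$ in block form with respect to the tensor factorization $\mathbb C^l \otimes \mathbb C^N$, I expect the diagonal-killing action of $\Xi$ to force
\begin{equation}
(\Xi \otimes \Psi)(\rho) = \bigoplus_{i=1}^l \Psi(\rho_{ii}),
\end{equation}
where $\rho_{ii} \in M_N$ is the (unnormalized) $i$-th diagonal block of $\rho$ regarded as a matrix over $\mathbb C^N$, and one sets $a_i = \mathrm{Tr}[\rho_{ii}]$ so that $\rho_{ii}/a_i \in D_N$ whenever $a_i > 0$. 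The vector $(a_i)_i$ lies in the simplex $\Delta_l$ since $\sum_i a_i = \mathrm{Tr}[\rho] = 1$. Because $\Psi(\rho_{ii}) = a_i \Psi(\rho_{ii}/a_i) \in a_i K_\Psi$ (using that $K_\Psi = \Psi(D_N)$ is the output set of $\Psi$ and that $\Psi$ is linear), this shows every output of $\Xi \otimes \Psi$ lies in $\tilde K$, giving the inclusion $K_{\Xi \otimes \Psi} \subseteq \tilde K$.

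For the reverse inclusion I would start from an arbitrary element $a_1 Y_1 \oplus \cdots \oplus a_l Y_l \in \tilde K$ with $Y_i \in K_\Psi$ and $(a_i) \in \Delta_l$, and reconstruct a preimage. For each $i$ pick $\sigma_i \in D_N$ with $\Psi(\sigma_i) = Y_i$, and set $\rho = \bigoplus_{i=1}^l a_i \sigma_i$, a block-diagonal (hence legitimate) state in $D_{lN}$. Applying the formula above recovers the target element, so $\tilde K \subseteq K_{\Xi \otimes \Psi}$ and the two sets are equal.

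The main obstacle I anticipate is purely bookkeeping rather than conceptual: one must be careful about the convention under which $K_\Psi$ is taken to be the full mixed-state output set $\Psi(D_N)$ (which is convex, unlike $L_\Psi$), since the scaling argument $\Psi(\rho_{ii}) = a_i \Psi(\rho_{ii}/a_i)$ relies on normalizing the diagonal blocks to genuine states. A secondary subtlety is the degenerate case $a_i = 0$, where $\rho_{ii} = 0$ contributes the zero block and one should interpret $a_i K_\Psi = \{0\}$ consistently on both sides; handling this edge case and confirming that $\tilde K$ is genuinely the convex body asserted (closed and convex, as a continuous image of the compact convex $\Delta_l \times K_\Psi^{\,l}$ under the bilinear scaling map) completes the argument.
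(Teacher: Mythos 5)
Your proposal is correct and follows essentially the same route as the paper, which simply notes that the image of the state space under $\Xi\otimes \mathrm{id}$ is exactly the set of block-diagonal states $\{a_1 D_N\oplus\cdots\oplus a_l D_N : (a_i)\in\Delta_l\}$ and checks this by double inclusion; your computation $(\Xi\otimes\Psi)(\rho)=\bigoplus_i \Psi(\rho_{ii})$ with $a_i=\mathrm{Tr}[\rho_{ii}]$, together with the block-diagonal reconstruction of preimages, is precisely that argument written out in detail, with the degenerate case $a_i=0$ handled correctly.
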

\begin{proof}
This follows directly from the fact that the image of $S_{lN}$ under $\Xi\otimes 1_N$ is exactly 
$\{a_1S_N\oplus \ldots \oplus a_lS_N,(a_i)\in \Delta_l\}$.
This can be readily seen by double inclusion.
\end{proof}
This has an immediate corollary:
\begin{corollary}
Let $\Psi_n$ be a sequence of quantum channels obeying the hypotheses of Theorem \ref{thm:LA1} (or Theorem \ref{thm:LA2}).
Then, for any integer $l$, the conclusion of Theorem \ref{thm:LA1} (or Theorem \ref{thm:LA2}) still holds true for $\Psi_n^{\oplus l}$,
where $K$ is replaced by $K^{\oplus l}$.
\end{corollary}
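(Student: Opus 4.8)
The plan is to show that the channel $\Psi_n^{\oplus l}$ fits into the framework of Theorem~\ref{thm:LA1} (resp.~Theorem~\ref{thm:LA2}) by identifying its limiting convex set and its associated function $f$, so that the convergence of output sets follows directly. The key structural observation is that $\Psi_n^{\oplus l}$ is nothing but the tensor product $\Xi \otimes \Psi_n$ where $\Xi : M_l \to M_l$ is the pinching map, a special entanglement-breaking channel. Indeed, the preceding Proposition describes the image $\tilde K = \{ a_1 K_\Psi \oplus \cdots \oplus a_l K_\Psi : (a_i) \in \Delta_l \}$ of this tensor product, and this block-diagonal structure is exactly what we mean by $K^{\oplus l}$ when $K = K_\Psi$ is the limiting set of $\Psi_n$ alone.

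\textbf{First} I would make precise what $K^{\oplus l}$ denotes, namely the convex body
\[
K^{\oplus l} = \Big\{ a_1 X_1 \oplus \cdots \oplus a_l X_l : (a_i) \in \Delta_l,\ X_i \in K \Big\} \subset D_{kl},
\]
and observe that this coincides with the set $\tilde K$ from the Proposition once we substitute $K_\Psi = K$. \textbf{Then} I would invoke Theorem~\ref{thm:product-image}: since $\Xi$ is a fixed finite-dimensional channel and $\Psi_n$ satisfies the strong convergence hypothesis underlying condition $\mathcal C_m$, the sequence $\Xi \otimes \Psi_n = \Psi_n^{\oplus l}$ automatically satisfies condition $\mathcal C_l$ for every $l$, with a well-defined limiting function. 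The convergence of the output sets $\Xi \otimes \Psi_n(S_{lN}) \to \tilde K$ then follows in the sense of Theorem~\ref{thm:LA2}. It only remains to identify the limit $\tilde K$ of the tensor product with the explicitly described block-diagonal set $K^{\oplus l}$, which is precisely the content of the Proposition on the pinching map.

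\textbf{The main obstacle} I expect is matching the two descriptions of the limiting set cleanly, that is, verifying that the abstract limit $K$ produced by Theorem~\ref{thm:product-image} agrees with the concrete set $\tilde K = K_\Psi^{\oplus l}$. This reduces to a double-inclusion argument: one shows that any block-diagonal state $a_1 X_1 \oplus \cdots \oplus a_l X_l$ in $\tilde K$ arises as a limit of outputs of $\Psi_n^{\oplus l}$ (using that each $X_i$ is a limit of outputs of $\Psi_n$ and that the pinching forces the output into block-diagonal form), and conversely that every output of $\Psi_n^{\oplus l}$ lies in a block-diagonal shape whose blocks are proportional to outputs of $\Psi_n$. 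The latter inclusion is exactly the elementary fact already recorded in the Proposition's proof, that the image of $S_{lN}$ under $\Xi \otimes 1_N$ is $\{ a_1 S_N \oplus \cdots \oplus a_l S_N : (a_i) \in \Delta_l \}$. Once these inclusions are in place, the statement is immediate; the delicate point is simply ensuring the interior and closure statements of Theorem~\ref{thm:LA2} transfer through the block-diagonal identification, which they do because direct sums preserve the relevant convexity and compactness.
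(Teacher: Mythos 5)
Your final paragraph contains the argument the paper actually has in mind: $\Psi_n^{\oplus l}=\Xi\otimes\Psi_n$ for the pinching $\Xi$, the preceding Proposition gives the exact (non-asymptotic) block-diagonal description $K_{\Xi\otimes\Psi_n}=\{a_1K_{\Psi_n}\oplus\cdots\oplus a_lK_{\Psi_n}:(a_i)\in\Delta_l\}$ for every $n$, the same computation on a pure input $b=\sum_i e_i\otimes b_i$ shows the pure-state image is $\{\bigoplus_i a_iX_i : X_i\in L_{\Psi_n}\}$, and the $\varliminf/\varlimsup$ sandwich for $K^{\oplus l}$ then follows blockwise from the sandwich that Theorem \ref{thm:LA1} (resp.\ \ref{thm:LA2}) provides for $K$. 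This is exactly why the paper calls the statement an ``immediate corollary'' of the pinching Proposition and offers no further proof.

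The genuine problem is your announced main route through Theorem \ref{thm:product-image}. That theorem's hypothesis is the \emph{strong convergence} of the family $(P_n, E_{ij}\otimes I_n)$ in the sense of Section \ref{sec:free-prob}, which is strictly stronger than condition $\mathcal C_1$ or $\mathcal C_m$ --- the only hypotheses the Corollary grants you. Condition $\mathcal C_m$ says nothing about joint noncommutative distributions or about polynomials in $P_n$ with matrix coefficients; the implication runs the other way (strong convergence is how the paper \emph{verifies} $\mathcal C_m$ in its random examples). So ``$\Psi_n$ satisfies the strong convergence hypothesis underlying condition $\mathcal C_m$'' is not something you may assume, and invoking Theorem \ref{thm:product-image} silently strengthens the hypothesis of the Corollary. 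Drop that step entirely: the blockwise double-inclusion argument you sketch at the end is self-contained and needs only the pinching Proposition plus Theorems \ref{thm:LA1}/\ref{thm:LA2} applied to $\Psi_n$ itself. One further caution: $K^{\oplus l}$ sits inside the block-diagonal subspace of $D_{kl}$, so its interior in $D_{kl}$ is empty; the interior statement must be read relative to that subspace (equivalently, blockwise via $K^\circ$ and the open simplex $\Delta_l^\circ$), which is how the blockwise argument naturally delivers it.
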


The images of entanglement-breaking channels are described as follows:
\begin{lemma}\label{image-EB}
For an entanglement-breaking channel $\Xi$ defined in \eqref{holevo-form}.
Then, $K_\Xi =\Xi (S_{p})$ is written as 
\be
K_\Xi = \left\{ \sum_{i=1}^l p_i \sigma_i : \, (p_i) \in \Delta_{\Xi}\right \}  
\ee 
Here, we denote possible  probability distributions by channel $\Xi$ by $\Delta_\Xi$. 
\end{lemma}

A straightforward application of Lemma \ref{image-EB} implies:
\begin{lemma}
Suppose we have two quantum channels $\Xi$ and $\Psi$. 
Let $\Xi$ be an entanglement-breaking  channel defined in \eqref{holevo-form}.
Then, the set of images of  all the states via $\Xi \otimes\Psi$ is given by
\be\label{image-conj1}
 K_{\Xi \otimes \Psi} ={\rm hull} \left\{ \sum_{i=1}^l \sigma_i \otimes \Psi \left (BM_i^T B^*\right)  :\, \text{$B \in M_{N, p}$ with $\trace [BB^*]=1$.} \right\}
\ee
\end{lemma}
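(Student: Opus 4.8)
The plan is to reduce the whole statement to computing the image of a single pure input state and then invoke convexity. Since $\Xi\otimes\Psi$ is linear and the state set $S_{pN}$ is the convex hull of its rank-one (pure) states, the image $K_{\Xi\otimes\Psi}=(\Xi\otimes\Psi)(S_{pN})$ equals the convex hull of the images of the pure states. Thus it suffices to prove that, as $\psi$ ranges over the unit sphere of $\C^p\otimes\C^N$, the output $(\Xi\otimes\Psi)(\ketbra{\psi}{\psi})$ runs exactly over the family $\sum_{i=1}^l \sigma_i\otimes\Psi(BM_i^TB^*)$ with $B\in M_{N,p}$ and $\trace[BB^*]=1$; taking convex hulls on both sides then gives \eqref{image-conj1}.

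First I would set up the vector--matrix dictionary. Writing $\psi=\sum_{a,b} B_{ba}\,e_a\otimes f_b$ in fixed orthonormal bases $\{e_a\}$ of $\C^p$ and $\{f_b\}$ of $\C^N$ identifies $\psi$ with the matrix $B=(B_{ba})\in M_{N,p}$, and under this identification the normalization $\|\psi\|_2=1$ becomes $\trace[BB^*]=1$. This is a bijection between unit vectors (up to phase) and admissible matrices $B$, so ranging over all $\psi$ is the same as ranging over all such $B$.

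Next I would compute directly. Expanding $\ketbra{\psi}{\psi}=\sum B_{ba}\overline{B_{b'a'}}\,(e_ae_{a'}^*)\otimes(f_bf_{b'}^*)$ and applying $\Xi$ to the first leg through its Holevo form \eqref{holevo-form}, the identity $\trace[e_ae_{a'}^*M_i]=(M_i)_{a'a}$ turns the first factor into $\sum_i (M_i)_{a'a}\sigma_i$. Collecting the remaining sum over $a,a',b,b'$ and applying $\Psi$ to the second leg, the index contraction $\sum_{a,a'} B_{ba}(M_i)_{a'a}\overline{B_{b'a'}}=(BM_i^TB^*)_{bb'}$ assembles precisely the operator $BM_i^TB^*\in M_N$ inside $\Psi$, yielding $(\Xi\otimes\Psi)(\ketbra{\psi}{\psi})=\sum_{i=1}^l \sigma_i\otimes\Psi(BM_i^TB^*)$, as required.

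The only delicate point — and the main obstacle — is the bookkeeping that produces the transpose $M_i^T$ rather than $M_i$. It arises because the measurement functional $X\mapsto\trace[XM_i]$ pairs the bra- and ket-indices of $X$ in the opposite order to the one in which they enter the matrix product $B(\cdot)B^*$; concretely, the index swap in $\trace[e_ae_{a'}^*M_i]=(M_i)_{a'a}$ is exactly what transposes $M_i$. I would also note that the total trace comes out correctly by itself: since $\sum_i M_i=I$ and $\Psi$ is trace preserving, $\sum_i \trace[BM_i^TB^*]=\trace[B(\sum_i M_i)^TB^*]=\trace[BB^*]=1$, so each member of the right-hand family of \eqref{image-conj1} is indeed a state, confirming consistency.
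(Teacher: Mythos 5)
Your proposal is correct and follows essentially the same route as the paper: reduce to pure inputs by convexity, identify a unit vector of $\C^p\otimes\C^N$ with a matrix $B\in M_{N,p}$ satisfying $\trace[BB^*]=1$, and compute $(\Xi\otimes\Psi)(\ketbra{\psi}{\psi})=\sum_i\sigma_i\otimes\Psi(BM_i^TB^*)$ via the identity $\trace_{\C^p}[\ketbra{\psi}{\psi}(M_i\otimes I_N)]=BM_i^TB^*$. Your explicit index bookkeeping for the transpose and the trace-consistency check are welcome elaborations of the paper's terser computation.
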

\begin{proof}
Let the input spaces of $\Xi$ and $\Psi$ be $\C^p$ and $C^N$, respectively. 
Take a bipartite vector $b$ in $\C^N \otimes \C^p$ and calculate as follows.
\be
(\Xi \otimes \Psi) (bb^*) &=&\sum_{i=1}^l \sigma_i \otimes \Psi \left(BM_i^TB^*\right) 
\ee
where we used the canonical isomorphism: $\C^N \otimes \C^p \ni b \leftrightarrow B \in  M_{N, p}(\C) $.
Indeed, 
\be
\trace_{\C^l} \left[ bb^* (M_i \otimes I_N) \right] = BM_i^TB^*
\ee
\end{proof} 

Then, we define 
\be\label{image-conj2}
K_\Xi \otimes K_\Psi
\ee
such that 
$\otimes$ in the formula yields the smallest convex set which contains all the simple tensors. 
It is easy to see that $\eqref{image-conj2} \subseteq \eqref{image-conj1}$:
\be
K_\Xi \otimes K_\Psi \subseteq  K_{\Xi \otimes \Psi} 
\ee
These two sets turn out to be identical under some assumption:
\begin{theorem}\label{thm:EB1}
Suppose we have two quantum channels $\Xi$ and $\Psi$. 
Let $\Xi$ be an entanglement-breaking  channel defined in \eqref{holevo-form} such that  $\|M_i\|_\infty =1$ for $1 \leq i \leq l$.
\be
K_{\Xi \otimes \Psi} = K_\Xi \otimes K 
\ee
\end{theorem} 
\begin{proof}
 We show $K_\Xi \otimes K_\Psi \supseteq  K_{\Xi \otimes \Psi} $.  
This is true if for all $B$
there exist $\{r_k\} \in \Delta_d$, $\{p^{(k)}_i\}_i \in \Delta_\Xi$, $\rho^{(k)} \in S$ such that 
\be\label{EB-1}
\sum_{k=1}^d r_k \left( \sum_{i=1}^l p^{(k)}_i \sigma_i \otimes \Psi\left( \rho^{(k)}\right)   \right) =\sum_{i=1}^l \sigma_i \otimes \Psi \left(BM_i^TB^* \right)
\ee
and this is true if 
\be
\sum_k  r_k p^{(k)}_i \rho^{(k)} = BM_i^TB^*  \qquad \forall i \in \{1,\ldots,l\}
\ee
This can be written, by abusing notations, as 
\be\label{formula:matrixeq} 
P\cdot \Gamma =
\begin{pmatrix}
p^{(1)}_1 &\ldots & p^{(d)}_1 \\
\vdots &\ddots & \vdots\\
p^{(1)}_l &\ldots & p^{(d)}_l \\
\end{pmatrix} 
\begin{pmatrix}
\gamma^{(1)} \\ \vdots \\ \gamma^{(d)}
\end{pmatrix}
= \begin{pmatrix}
BM_1^TB^*\\ \vdots \\ BM_l^TB^*
\end{pmatrix}
\ee
with $\gamma^{(k)} = r_k\rho^{(k)}$.
Since each $M_i$ has an eigenvalue $1$,  $\Delta_\Xi = \Delta_l$.
Hence we set $d=l$ and
\[
P = I_l; \quad \gamma^{(k)} = BM_k^TB^* 
\]
\end{proof}

We think that above condition $\Delta_\Xi = \Delta_l$ should be close to a necessary condition too.
We set $K=l$ and  think whether each block of
\[
P^{-1} \times \begin{pmatrix} 
M_1^T\\ \vdots \\ M_l^T
\end{pmatrix}
\] 
is positive or not. 
Suppose we have chosen $P$ as 
\[
\lambda I + (1-\lambda) \psi \psi^*  
\]
Here, $0 <  \lambda \leq1$ and $\psi = \frac 1 {\sqrt{l}} (1,\ldots,1)^T$. 
Set 
\[
Q=I-P = (1-\lambda)  (I-\psi\psi^*)
\]
Then, 
\be
P^{-1} = (I-Q)^{-1}  =  \sum_{i=0}^\infty Q^i  = \frac 1 \lambda  (I-\psi\psi^*)
\ee
However then this always give a non-positive block.
Indeed, the $i$-th block, rescaled, will be
\[
M_i - \frac 1 l \sum_{j=1}^l  M_j = M_i - \frac 1 l I
\]
and one of them should be non-positive.

\section*{Acknowledgements}
The authors had opportunities to meet at the LPT in Toulouse, the ICJ in Lyon, the Department of Mathematics of uOttawa, 
the TU M\"unchen and the Isaac Newton Institute in Cambridge to complete their research, and thank these institutions 
for a fruitful working environment.

BC's research was supported by NSERC discovery grants, Ontario's ERA and AIMR, Tohoku university.
MF's research was financially supported by the CHIST-ERA/BMBF project CQC.
IN's research has been supported by the ANR grants ``OSQPI'' {2011 BS01
  008 01} and ``RMTQIT''  {ANR-12-IS01-0001-01}, and by the PEPS-ICQ CNRS project ``Cogit''.

\appendix
\section{The optimization problem for random mixed unitary channels}

In this technical appendix, we provide the details of the proof for the optimization problem appearing in Theorem \ref{thm:random-mixed-unitary-channels}. Let us recall it here, for the convenience of the reader. Let $\mathcal S_C^{k-1}$ be the unit sphere of $\mathbb C^k$ and define
\begin{align}\label{function:g}
g: \, \mathcal S_\C^{k-1} \times (0,\infty) &\to \mathbb R\\
(a,x) &\mapsto (2-k)x + \sum_{i=1}^k \sqrt{x^2 + |a_i|^2w_i}
\end{align}
where $k > 2$ is an integer parameter and $(w_1, \ldots w_k)$ is a strictly positive probability vector: $w_i>0$ and $\sum_i w_i = 1$. 
Since only the absolute values $|a_i|^2$ appear in the above formula, we shall assume, without loss of generality, that the numbers 
$a_i$ are real and satisfy $\sum_i a_i^2=1$. 

In what follows, we prove the following result:
\begin{proposition}\label{prop:optimization-g}
Let $g$ be the function defined in \eqref{function:g}, but on $\mathcal S_\R^{k-1} \times (0,\infty)$ as is described above. Then,
\begin{enumerate}
\item We have the formula:
\be\label{eq:optimum}
\psi_*(\sqrt w) = \max_{J \in \mathcal J} h(J)
\ee
Here, remember that $ \psi_*(\sqrt w)= \max_{a \in \mathcal S^{k-1}} \min_{x>0} g(a,x)$
defined in \eqref{function:psi*}.
In the above formula, $\mathcal J$ is a collections of subsets of $[k] = \{1, \ldots, k\}$, defined as
\begin{equation}\label{eq:valid-J}
\mathcal J = \{ J \subset [k] \, : \, \min_{j \in J} w_j \geq \gamma | \# J -2| \}
\end{equation}
elements of which we call  \emph{valid} subsets. 
Also, the function $h(\cdot)$ is defined on $\mathcal J$ as
\be
h(J) =  \sqrt{\beta - \gamma (\#J - 2)^2}
\ee
where $\beta$ and $\gamma$ are
\begin{equation}\label{eq:beta-gamma}
\beta = \sum_{j \in J} w_j \qquad \frac{1}{\gamma} = \sum_{j \in J} \frac{1}{w_j}
\end{equation}
Note that $\mathcal J$ contains all subsets with cardinality less than or equal to $3$.
\item The function $h(\cdot)$ is well-defined on  on $2^{[k]}$  and non-decreasing with respect to the canonical partial order. 
As a result, 
if the full set $J=[k]$ is valid, i.e $\min_{j \in [k]} w_j \geq \gamma_{0}(k-2)$ with $\gamma_{0}^{-1} = \sum_{j=1}^k w_j^{-1}$, then the 
optimum is $\sqrt{1-\gamma_{0}(k-2)^2}$. In particular, when $w_i$ is the flat distribution, $w_i=1/k$, we get $a_i=1/k$ and the 
optimum is $2\sqrt{k-1}/k$.
\end{enumerate}
\end{proposition}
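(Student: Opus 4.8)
The plan is to reduce the $\max$--$\min$ to a \emph{concave} maximization over the probability simplex and then read off the optimum from a support-by-support (face-by-face) analysis of the Karush--Kuhn--Tucker conditions, finally extracting part~(2) from an explicit one-variable monotonicity estimate. First I would rewrite the quantity: by Proposition~\ref{prop:sum-of-unitaries} applied to the scaled vector $a.\sqrt w$ one has $\psi(a.\sqrt w)=\min_{x\ge 0} g(a,x)$, so $\psi_*(\sqrt w)=\max_{\|a\|_2=1}\min_{x\ge 0} g(a,x)$. Since only $|a_i|^2$ enters, I set $t_i=|a_i|^2$ and view the problem as $\max_{t\in\Delta_k}\min_{x>0} g(t,x)$, where $g(t,x)=(2-k)x+\sum_i\sqrt{x^2+t_iw_i}$ and, by Remark~\ref{rem:positive-x}, the inner minimizer is interior once enough coordinates are active. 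The structural point is that $g$ is convex in $x$ (unique inner minimum) and concave in $t$ (each $t_i\mapsto\sqrt{x^2+t_iw_i}$ is concave), so $G(t):=\min_{x>0}g(t,x)$, an infimum of concave functions, is concave on $\Delta_k$; thus the outer program is concave and its KKT conditions are necessary and sufficient.

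Next I would analyze critical points by their support $J=\{i:t_i>0\}$. Inner stationarity reads $\sum_i x/\sqrt{x^2+t_iw_i}=k-2$, while the simplex Lagrange condition (envelope theorem) gives $w_i/\bigl(2\sqrt{x^2+t_iw_i}\bigr)=\mu$ for $i\in J$. Eliminating $\mu$ produces $2\mu x=\gamma(\#J-2)$ and $\beta/(4\mu^2)-x^2/\gamma=1$, which solve to
\begin{equation*}
x^2=\frac{\gamma^2(\#J-2)^2}{\beta-\gamma(\#J-2)^2},\qquad t_i=\frac{w_i^2-\gamma^2(\#J-2)^2}{w_i\bigl(\beta-\gamma(\#J-2)^2\bigr)},
\end{equation*}
and substituting back into $g$ gives critical value $h(J)=\sqrt{\beta-\gamma(\#J-2)^2}$. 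Feasibility $t_i\ge 0$ is exactly $w_i\ge\gamma|\#J-2|$, i.e.\ $J\in\mathcal J$ as in \eqref{eq:valid-J}. The degenerate small faces are handled separately: for $\#J=2$ the inner minimizer sits at $x=0$ and a direct optimization of $\sqrt{t_1w_1}+\sqrt{t_2w_2}$ gives $h(\{i,j\})=\sqrt{w_i+w_j}$, while a singleton gives vertex value $\sqrt{w_i}\ge 0=h(\{i\})$, which is dominated.

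Then I would assemble \eqref{eq:optimum} by two inequalities. Lower bound: each valid $J$ with $\#J\ge 2$ furnishes a feasible $t^{(J)}$ with $G(t^{(J)})=h(J)$, so $\psi_*(\sqrt w)\ge\max_{J\in\mathcal J}h(J)$. Upper bound: a maximizer $t^*$ of the continuous concave $G$ exists; its support $J^*$ has $\#J^*\ge 2$ since a two-point face already beats any vertex ($\sqrt{w_i+w_j}\ge\sqrt{w_i}$), and the KKT equations force $t^*=t^{(J^*)}$ with $J^*\in\mathcal J$, giving $\psi_*(\sqrt w)=h(J^*)\le\max_{J\in\mathcal J}h(J)$. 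For part~(2), $h$ is real on all of $2^{[k]}$ since Cauchy--Schwarz gives $\bigl(\sum_{i\in J}w_i\bigr)\bigl(\sum_{i\in J}1/w_i\bigr)\ge(\#J)^2\ge(\#J-2)^2$. For monotonicity under $J\mapsto J\cup\{m\}$, writing $n=\#J$, $P=\sum_{i\in J}1/w_i$, $v=w_m$, I would show
\begin{equation*}
\Delta(v):=h(J\cup\{m\})^2-h(J)^2=v-\frac{(n-1)^2v}{Pv+1}+\frac{(n-2)^2}{P}\ge 0
\end{equation*}
by noting $\Delta$ is convex in $v$ with stationary point $v^*=(n-2)/P$, where $Pv^*+1=n-1$ and a direct cancellation yields $\Delta(v^*)=0$ (the cases $n\le 2$ being immediate). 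Hence $h$ is non-decreasing for inclusion, so if $[k]$ is valid it is the top of $\mathcal J$ and $\psi_*(\sqrt w)=h([k])=\sqrt{1-\gamma_0(k-2)^2}$; specializing to $w_i=1/k$ gives $|a_i|^2=1/k$ and value $2\sqrt{k-1}/k$.

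The main obstacle I anticipate is the boundary bookkeeping rather than the algebra: justifying the envelope-theorem differentiation of $G$ where coordinates vanish, correctly treating the faces with $\#J\le 2$ (where the inner minimizer degenerates to $x=0$ and the naive formula $h(\{i\})=0$ is spurious but dominated), and confirming that the global maximizer genuinely lands on a face with $\#J^*\ge 2$ so that $G(t^*)=h(J^*)$. By contrast the monotonicity, while central to part~(2), collapses to a clean single-variable computation once the substitution $v^*=(n-2)/P$ is spotted.
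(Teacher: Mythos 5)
Your proposal is correct and follows essentially the same route as the paper's appendix: reduce to a constrained smooth optimization, solve the Lagrange/stationarity system stratified by the support $J$, obtain the critical value $h(J)$ with feasibility $t_i\geq 0$ exactly matching \eqref{eq:valid-J}, and finish with the monotonicity of $h$. Your refinements — passing to $t_i=|a_i|^2$ on the simplex, noting concavity of $\min_x g(\cdot,x)$, explicitly handling the degenerate faces $\#J\leq 2$ where the inner minimizer sits at $x=0$ (a point the paper glosses over), and proving monotonicity via the one-variable convexity of $\Delta(v)$ with $\Delta\bigl((n-2)/P\bigr)=0$ rather than the paper's chain of equivalent inequalities ending in AM--GM — are all sound but do not change the substance of the argument.
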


\begin{proof}
Let us start by giving an outline of the proof. First, we notice that the minimization problem in $x$ is convex, hence a unique minimum exists.  
Moreover, this minimum $X_a$ depends smoothly on $a$ and thus we are left with a smooth maximization 
problem in $a \in \mathcal S_{\mathbb R}^{k-1}$. 
Next, we use Lagrange multipliers to solve this problem, and we find a set of critical points indexed by subsets $J \subset [k] = \{1, \ldots, k\}$, 
where the coordinates $a_i$ are non-zero. Not all subsets $J$ yield critical points and one has to take a maximum over the set of 
valid subsets $J$ to conclude.
Finally, we show monotonic property of the function $h(\cdot)$ with respect to the partial order in $2^{[k]}$.

{\bf Step 1}:
Let us start by noticing that, at fixed $a$, the function $x \mapsto g(a,x)$ is convex, so it admits a unique minimum 
$X_a \in [0, \infty)$. Since $\frac{\partial g}{\partial x}$ is negative at $x=0$, we have $X_a > 0$ (see also Remark \ref{rem:positive-x}). 
The value $X_a$ is defined by the following implicit equation
\begin{equation}
\frac{\partial g}{\partial x}\bigg|_{x=X_a} = 0,
\end{equation}
which is equivalent to $F(a,X_a) = 0$, for
\begin{equation}\label{eq:F-a-x}
F(a,x) =\frac{\partial g}{\partial x} =  2-k + \sum_{i=1}^k \frac{x}{\sqrt{x^2+a_i^2w_i}}
\end{equation}
It follows from the implicit function theorem that the map $a \mapsto X_a$ is $C^1$ because 
\begin{equation}
\frac{\partial F}{\partial x} 
= \sum_{i=1}^k \left[ \frac{1} {(x^2+a_i^2w_i)^{1/2}} - \frac{x^2} {(x^2+a_i^2w_i)^{3/2}} \right]
= \sum_{i=1}^k \frac{a_i^2w_i} {(x^2+a_i^2w_i^2)^{1/2}}\neq 0,
\end{equation}

{\bf Step2:}
 Now we want to solve 
\begin{equation}
\max_{a \in \mathcal S_\R^{k-1}} g(a,X_a)
\end{equation}
by introducing the Lagrange multiplier functional
\begin{equation}
G(a, \lambda) = g(a,X_a) - \frac{\lambda}{2}\sum_{i=1}^k a_i^2 = (2-k)X_a +\sum_{i=1}^k \sqrt{X_a^2 + a_i^2w_i}  - \frac{\lambda}{2}\sum_{i=1}^k a_i^2
\end{equation}
The criticality condition, the normalization for $a$ and the restriction of $X_a$ translate to 
\begin{align}
\forall j \in [k], &\quad \frac{\partial X_a}{\partial a_j} \left( 2-k + \sum_{i=1}^k \frac{X_a}{\sqrt{X_a^2 + a_i^2w_i}} \right) + \frac{a_jw_j}{\sqrt{X_a^2 + a_j^2w_j}} - \lambda a_j = 0 \label{LM1}\\
&\sum_{i=1}^k a_i^2 = 1 \label{LM2} \\
&F(a,X_a) = 0 \label{LM3}
\end{align}
Below, we get candidates for the solutions for this system of equations. 

Firstly, \eqref{LM1} and \eqref{LM3} imply that 
\begin{align}
\forall j \in [k], \quad  \frac{a_jw_j}{\sqrt{X_a^2 + a_j^2w_j}} &= \lambda a_j 
\end{align}
Let us now introduce the index sets $I = \{i \, : \, a_i=0\}$ and  $J = [k] \setminus I$. 
Then, for $j \in J$ we have
\be
w_j = \lambda \sqrt{X_a^2 + a_j^2w_j}
\ee
This implies two equations: \eqref{LM3} gives
\be
0 = 2-k + \# I +  \lambda X_a \sum_{j \in J} \frac 1 {w_j}   \qquad \text{or} \qquad  \# J-2  =  \frac{\lambda X_a}{\gamma}
\ee
and, squaring the both sides yields 
\be
a_j^2 = \frac{w_j}{\lambda^2} - \frac{X_a^2}{w_j} = \frac 1{\lambda^2} \left( w_j - \frac {(\lambda X_a)^2} {w_j} \right)  
= \frac 1{\lambda^2} \left( w_j - \frac { \gamma^2 (\# J -2)^2} {w_j} \right)  
\ee

Secondly, with \eqref{LM2},  we have
\be
1 = \sum_{j\in J} a_i^2 = \frac 1 {\lambda^2} \left(\beta - \frac{(\lambda X_a)^2}{\gamma}\right) 
= \frac 1 {\lambda^2} \left(\beta - \gamma (\# J -2)^2 \right) 
\ee
This leads to 
\be
a_j ^2
= \frac{1}{\beta - \gamma (\# J -2)^2} \cdot  \left( w_j - \frac { \gamma^2 (\# J -2)^2} {w_j} \right)  \label{formula:a}
\ee
Also, 
\be
X_a =  \frac {\gamma (\# J -2)} {|\lambda|} = \frac{\gamma (\# J -2)}{\sqrt{\beta - \gamma (\# J -2)^2}}
\ee

Thirdly, for those candidates the function $g(\cdot,\cdot)$ can be simplified:
\be 
g(a,X_a) 
&=& (2- \# J) X_a + \sum_{j \in J} \sqrt{X_a^2 + a_j^2 w_j} 
= (2- \# J) X_a + \frac \beta \lambda \\
&=& - \frac{\gamma (\# J -2)^2}{\sqrt{\beta - \gamma (\# J -2)^2}} + \frac \beta {\sqrt{\beta - \gamma (\# J -2)^2}} 
= \sqrt{\beta - \gamma (\# J -2)^2}
\ee
Since this function only depends on set $J$, we redefine this function to be $h(J)$ as in the statement of theorem.

{\bf Step 3:}
So far, we get a set of candidates for solutions, but we get the actual solutions, and hence the precise set of critical points, by
thinking  positivity issues for $a_j^2$ with $j \in J$. 
The inequality between the harmonic and the arithmetic means, applied for $\{w_j\}_{j \in J}$ reads
\begin{equation}\label{eq:inequality-means}
\frac{\#J}{\sum_{j \in J} 1/w_j} \leq \frac{\sum_{j \in J} w_j}{\#J}
\end{equation}
hence we have that $(\#J)^2 \gamma \leq \beta$ for all choices of $J$. This implies that the first factor in \eqref{formula:a} is 
always strictly positive, except for $\#J =1$, when it is zero. 
Hence, looking into the second factor in \eqref{formula:a}, 
the condition that $a_j^2 \geq 0$ for all $j \in J$ 
is equivalent to the condition that $J$ is a valid subset, as in \eqref{eq:valid-J} with respect to those candidates in  \eqref{formula:a}.  
Therefore, maximizing $h(J)$ over $\mathcal J$ gives the maximum of $g(a,X_a)$ under the normalization condition on $a$.

Note that when $\#J = 1,2$ the condition for $J$ to be valid, $\min_{j \in J} w_j \geq \gamma | \# J -2|$ is trivially satisfied. 
When $\#J = 3$, the condition reads
\begin{equation}
\frac{3-2}{\min_{j \in J} w_j} \leq \frac{1}{\gamma}  = \frac{1}{w_1} + \frac{1}{w_2} + \frac{1}{w_3}
\end{equation}
which is also always fulfilled. Thus, every subset $J$ with $\#J \leq 3$ is valid. 

{\bf Step 4:}
The mean inequality \eqref{eq:inequality-means} implies also that the quantity $h(J)$ is well defined for all subsets $J \subset [k]$, 
even if $J$ is not valid. Let us show next $h$ is an increasing function of $J$ with the canonical partial order. To this end, consider 
a subset $J$, an element $s \notin J$ and put $J' = J \cup \{s\}$. With $p=\#J$, we have the following sequence of equivalent inequalities
\begin{align}
h(J)^2 &\leq h(J')^2\\
\beta - \gamma (p-2)^2 &\leq \beta' - \gamma'(p-1)^2\\
-\frac{(p-2)^2}{\frac 1 \gamma} &\leq w_s - \frac{(p-1)^2}{\frac 1 \gamma + \frac{1}{w_s}}\\
-(p-2)^2 \left( \frac 1 \gamma + \frac{1}{w_s} \right) &\leq \frac{w_s}{\gamma} \left( \frac 1 \gamma + \frac{1}{w_s} \right) - \frac{(p-1)^2}\gamma\\
\frac{1}{\gamma}\left( \frac{2(p-2)}{w_s} - \frac{1}{\gamma} \right) &\leq \frac{(p-2)^2}{w_s^2}
\end{align}
where the last one is true by the inequality: $\sqrt{ab} \leq (a+b)/2$ for $a,b>0$. In particular, we conclude that if $J=[k]$ is valid, then
\begin{equation}
\max_{J \in \mathcal J} h(J) = h([k]) = \sqrt{1-\gamma_{0}(k-2)^2}
\end{equation}
\end{proof}

As an illustration of the above result, let us consider the case $k=4$ and 
\begin{equation}
w_r=\left[ r, \frac{1-r}{3},\frac{1-r}{3},\frac{1-r}{3} \right]
\end{equation}
with $r \in (0,1/4)$. For $J=\{1,2,3,4\}$ to be valid, one must have $r \geq 2c$. By direct computation, one finds $c=r(1-r)/(8r+1)$, 
thus $J=[4]$ is valid if and only if $r \in (0, 1/10)$. We conclude that, for $r \geq 1/10$, the optimum is $h([4]) = (2r+1)/\sqrt{8r+1}$. 

Let us now study the other regime, where $r<1/10$. There are only two distinct choices for $J$ with $\# J =3$: $J_1=\{1,2,3\}$ and 
$J_2 = \{2,3,4\}$, both valid since they have cardinality $3$. One computes directly
\begin{equation}
h(J_1) = \frac{\sqrt 2}{\sqrt 3} 
\frac{2r+1}{\sqrt{5r+1}} < \frac{2\sqrt 2}{3} \sqrt{1-r} = h(J_2)
\end{equation}
We conclude that
\begin{equation}
\psi_*(\sqrt{w_r}) = \begin{cases}
\frac{2r+1}{\sqrt{8r+1}}, \quad &\text{ if } r \geq \frac{1}{10}\\
\frac{2 \sqrt 2}{3} \sqrt{1-r}, \quad &\text{ if } r < \frac{1}{10}
\end{cases}
\end{equation}

\end{document}